\theoremstyle{plain}
\newtheorem{theorem}{Theorem}
\newtheorem{proposition}[theorem]{Proposition}
\newtheorem{lemma}[theorem]{Lemma}
\theoremstyle{remark}
\newtheorem{remark}{Remark}
\newtheorem{example}{Example}
\theoremstyle{definition}
\newtheorem{definition}{Definition}
\newcommand{\Var}{\operatorname{Var}}
\newcommand{\rmd}{\mathrm{d}}
\newcommand{\rmi}{\mathrm{i}}
\newcommand{\rme}{\mathrm{e}}
\newcommand{\Rbb}{{\mathbb{R}}}
\newcommand{\Cbb}{{\mathbb{C}}}
\newcommand{\Tr}{\operatorname{Tr}}
\newcommand{\abs}[1]{\left|#1\right|}
\renewcommand{\vec}[1]{\bm{#1}}
\newcommand{\id}{{\mathbbm{1}}}
\newcommand{\openone}{\id}
 \newcommand{\Ccal}{\mathcal{C}}
 \newcommand{\Ncal}{\mathcal{N}}
\newcommand{\Bscr}{\mathscr{B}}
 \newcommand{\Cscr}{\mathscr{C}}
\newcommand{\Gscr}{\mathscr{G}}
\newcommand{\Hscr}{\mathscr{H}}
\newcommand{\Lscr}{\mathscr{L}}
\newcommand{\Sscr}{\mathscr{S}}
\newcommand{\Tscr}{\mathscr{T}}
\newcommand{\Mo}{\mathsf{M}} 
\newcommand{\Qo}{\mathsf{Q}} 
\newcommand{\Po}{\mathsf{P}} 
\newcommand{\lh}{\Lscr(\Hscr)} 
\newcommand{\bor}[1]{\mathscr{B}(#1)} 
\newcommand{\vb}{\epsilon}
\begin{document}

\title{Measurement uncertainty relations for position and momentum: Relative entropy formulation}

\author{A. Barchielli, M. Gregoratti, A. Toigo}
\author{Alberto Barchielli$^{1,2,3}$,
Matteo Gregoratti$^{1,2}$, Alessandro Toigo$^{1,3}$
\\
\\
$^1$ Politecnico di Milano, Dipartimento di Matematica, \\
Piazza Leonardo da Vinci 32, I-20133 Milano, Italy,\\
$^2$ Istituto Nazionale di Alta Matematica (INDAM-GNAMPA),\\
$^3$ Istituto Nazionale di Fisica Nucleare (INFN), Sezione di Milano}

\maketitle
\begin{abstract}
Heisenberg's uncertainty principle has recently led to general measurement uncertainty relations for quantum systems: incompatible observables can be measured jointly or in sequence only with some unavoidable approximation, which can be quantified in various ways. The relative entropy is the natural theoretical quantifier of the information loss when a `true' probability distribution is replaced by an approximating one. In this paper, we provide a lower bound for the amount of information that is lost by replacing the distributions of the sharp position and momentum observables, as they could be obtained with two separate experiments, by the marginals of any smeared joint measurement. The bound is obtained by introducing an entropic error function, and optimizing it over a suitable class of covariant approximate joint measurements. We fully exploit two cases of target observables: (1) $n$-dimensional position and momentum vectors; (2) two components of position and momentum along different directions. In (1), we connect the quantum bound to the dimension $n$; in (2), going from parallel to orthogonal directions, we show the transition from  highly incompatible observables to compatible ones. For simplicity, we develop the theory only for Gaussian states and measurements.

\end{abstract}

\section{Introduction}

Uncertainty relations for position and momentum \cite{Hei27} have always been deeply related to the foundations of Quantum Mechanics. For several decades, their axiomatization has been of `preparation' type: an inviolable lower bound for the widths of the position and momentum distributions, holding in any quantum state. Such kinds of uncertainty relations, which are now known as \emph{preparation uncertainty relations} (PURs)\marginpar{PUR}, have been later extended to arbitrary sets of $n\geq 2$ observables \cite{SimMD94,Hol11,Hol01,Hol12}. All PURs trace back to the celebrated Robertson's formulation \cite{Rob29} of Heisenberg's uncertainty principle: for any two observables, represented by self-adjoint operators $A$ and $B$, the product of the variances of $A$ and $B$ is bounded from below by the expectation value of their commutator; in formulae, $\Var_\rho(A)\Var_\rho(B) \geq \frac{1}{4}\abs{\Tr\{\rho[A,B]\}}^2$, where $\Var_\rho$ is the variance of an observable measured in any system state $\rho$. In the case of position $Q$ and momentum $P$, this inequality gives Heisenberg's relation $\Var_\rho(Q)\Var_\rho(P) \geq \frac{\hbar^2}{4}$. About 30 years after Heisenberg and Robertson's formulation, Hirschman attempted a first statement of position and momentum uncertainties in terms of informational quantities. This led him to a formulation of PURs based on Shannon entropy \cite{Hir57}; his bound was later refined \cite{Beck75,B-BM75}, and extended to discrete observables \cite{MaaU88}. Also other entropic quantities have been used \cite{GibI11}. We refer to \cite{ColesBTW17,WehW10} for an extensive review on entropic PURs.

However, Heisenberg's original intent \cite{Hei27} was more focused on the unavoidable disturbance that a measurement of position produces on a subsequent measurement of momentum \cite{Oza15a,Wer04,Oza03a,BHL07,BLW14a,BLW14b,Oza03b,Oza02}.
Trying to give a better understanding of his idea, more recently new formulations were introduced, based on a `measurement' interpretation of uncertainty, rather than giving bounds on the probability distributions of the target observables. Indeed, with the modern development of the quantum theory of measurement and the introduction of \emph{positive operator valued measures} and \emph{instruments} \cite{BGL97,BLPY16,DavQTOS,Hol01,BarG09,HeZi12}, it became possible to deal with approximate measurements of incompatible observables and to formulate \emph{measurement uncertainty relations} (MURs)\marginpar{MUR} for position and momentum, as well as for more general observables. The MURs quantify the degree of approximation (or inaccuracy and disturbance) made by replacing the original incompatible observables with a \emph{joint approximate measurement} of them. A very rich literature on this topic flourished in the last 20 years, and various kinds of MURs have been proposed, based on distances between probability distributions, noise quantifications, conditional entropy, etc. \cite{BusHOW,Oza03a,Oza03b,Oza02,Oza15a,ColesBTW17,BLW13,BLW14a,BLW14b,BHL07,Wer04,Wer16,CF15,HSTZ14,BHSS13}.

In this paper, we develop a new information-theoretical formulation of MURs for position and momentum, using the notion of the \emph{relative entropy} (or \emph{Kullback-Leibler divergence}) of two probabilities. The relative entropy $S(p\|q)$ is an informational quantity which is precisely tailored to quantify the amount of information that is lost by using an approximating probability $q$ in place of the target one $p$.
Although classical and quantum relative entropies have already been used in the evaluation of the performances of quantum measurements \cite{BarL04,BarL05,BarL06a,BarL06b,Mac07,BarL07,BarL08,BusDW16,BarG09,CF15,BusHOW,Mac07}, their first application to MURs is very recent \cite{BarGT16}.

In \cite{BarGT16}, only MURs for discrete observables were considered. The present work is a first attempt to extend that information-theoretical approach to the continuous setting.
This extension is not trivial and reveals peculiar problems, that are not present in the discrete case. However, the nice properties  of the relative entropy, such as its scale invariance, allow for a satisfactory formulation of the entropic MURs also for position and momentum.

We deal with position and momentum in two possible scenarios. Firstly, we consider the case of $n$-dimensional position and momentum, since it allows to treat either scalar particles, or vector ones, or even the case of multi-particle systems. This is the natural level of generality, and our treatment extends without difficulty to it. Then, we consider a couple made up of one position and one momentum component along two different directions of the $n$-space. In this case, we can see how our theory behaves when one moves with continuity from a highly incompatible case (parallel components) to a compatible case (orthogonal ones).

The continuous case needs much care when dealing with arbitrary quantum states and approximating observables. Indeed, it is difficult to evaluate or even bound the relative entropy if some assumption is not made on probability distributions. In order to overcome these technicalities and focus on the quantum content of MURs, in this paper we consider only the case of Gaussian preparation states and Gaussian measurement apparatuses \cite{SimMD94,Hol11,Hol12,BvL05,WPGCRSL12,KiuS13,HKS15}. Moreover, we identify the class of the approximate joint measurements with the class of the joint POVMs satisfying the same symmetry properties of their target position and momentum observables \cite{Hol01,BGL97}. We are supported in this assumption by the fact that, in the discrete case \cite{BarGT16}, simmetry covariant measurements turn out to be the best approximations without any hypothesis (see also \cite{BLW13,BLW14a,BLW14b,Wer04,Wer16} for a similar appearance of covariance within MURs for different uncertainty measures).

We now sketch the main results of the paper. In the vector case, we consider approximate joint measurements $\Mo$ of the position $\vec Q\equiv (Q_1, \ldots , Q_n)$ and the momentum $\vec P\equiv (P_1,\ldots, P_n)$.
We find the following entropic MUR (Theorem \ref{prop:cinc}, Remark \ref{rem:MURvect}): for every choice of two positive thresholds $\vb_1,\vb_2$, with $\vb_1\vb_2\geq \hbar^2/4$, there exists a Gaussian state $\rho$ with position variance matrix $A^\rho\geq \vb_1\openone$ and momentum variance matrix $B^\rho\geq \vb_2\openone$ such that
\begin{equation}\label{vc}
S(\Qo_\rho\|\Mo_{1,\rho}) + S(\Po_\rho\|\Mo_{2,\rho})\geq  n\left(\log \rme\right) \left\{\ln \left(1+\frac \hbar {2\sqrt{\vb_1\vb_2}}\right)-\frac{\hbar}{\hbar+2\sqrt{\vb_1\vb_2}}\right\}
\end{equation}
for all Gaussian approximate joint measurements $\Mo$ of $\vec Q$ and $\vec P$.
Here $\Qo_\rho$ and $\Po_\rho$ are the distributions of position and momentum in the state $\rho$, and $\Mo_{\rho}$ is the distribution of $\Mo$ in the state $\rho$, with marginals $\Mo_{1,\rho}$ and $\Mo_{2,\rho}$; the two marginals turn out to be noisy versions of $\Qo_\rho$ and $\Po_\rho$.
The lower bound is strictly positive and it grows linearly with the dimension $n$.
The thresholds $\vb_1$ and $\vb_2$ are peculiar of the continuous case and they have a classical explanation: the relative entropy $S(p\|q)\to+\infty$ if the variance of $p$ vanishes faster than the variance of $q$, so that, given $\Mo$, it is trivial to find a state $\rho$ enjoying \eqref{vc} if arbtrarily small variances are allowed. What is relevant in our result is that
the total loss of information $S(\Qo_\rho\|\Mo_{1,\rho}) + S(\Po_\rho\|\Mo_{2,\rho})$ exceeds the lower bound even if we forbid target distributions with small variances.

The MUR \eqref{vc} shows that there is no Gaussian joint measurement which can approximate arbitrarily well both $\vec Q$ and $\vec P$. The lower bound \eqref{vc} is a consequence of the incompatibility between $\vec Q$ and $\vec P$ and, indeed, it vanishes in the classical limit $\hbar\to0$.
Both the relative entropies and the lower bound in \eqref{vc} are scale invariant. Moreover, for fixed $\vb_1$ and $\vb_2$, we prove the existence and uniqueness of an optimal approximate joint measurement, and we fully characterize it.

In the scalar case, we consider approximate joint measurements $\Mo$ of the position $Q_{\vec u}=\vec u\cdot\vec Q$ along the direction $\vec u$ and  the momentum $P_{\vec v}=\vec v\cdot \vec P$ along the direction $\vec v$, where $\vec u\cdot \vec v =\cos \alpha$. We find two different entropic MURs.
The first entropic MUR in the scalar case is similar to the vector case (Theorem \ref{prop:uvincdeg}, Remark \ref{rem:MURscal}).
The second one is (Theorem \ref{prop:min_sigma}):
\begin{equation}\label{sc2}
S(\Qo_{{\vec u},\rho}\|\Mo_{1,\rho}) + S(\Po_{{\vec v},\rho}\|\Mo_{2,\rho})\geq c_\rho(\alpha),
\end{equation}
\[
c_\rho(\alpha)
=\left(\log \rme\right) \left\{\ln \left(1+\frac{\hbar|\cos\alpha|}{2\sqrt{\Var\left(\Qo_{\vec u,\rho}\right) \Var\left(\Po_{\vec v,\rho}\right)}}\right)-\frac{\hbar|\cos\alpha|}{\hbar|\cos\alpha|+2\sqrt{\Var\left(\Qo_{\vec u,\rho}\right) \Var\left(\Po_{\vec v,\rho}\right)}}\right\},
\]
for all Gaussian states $\rho$ and all Gaussian joint approximate measurements $\Mo$ of $Q_{\vec u}$ and $P_{\vec v}$. This lower bound holds for every Gaussian state $\rho$ without constraints on the position and momentum variances $\Var\left(\Qo_{\vec u,\rho}\right)$  and $\Var\left(\Po_{\vec v,\rho}\right)$, it is strictly positive unless $\vec u$ and $\vec v$ are orthogonal, but it is state dependent. Again, the relative entropies and the lower bound are scale invariant.

The paper is organized as follows. In Section \ref{sec:stat+obs}, we introduce our target position and momentum observables, we discuss their general properties and define some related quantities (spectral measures, mean vectors and variance matrices, PURs for second order quantum moments, Weyl operators, Gaussian states). Section \ref{sec:rel+diff_ent} is devoted to the definitions and main properties of the relative and differential (Shannon) entropies. Section \ref{sec:PUR} is a review on the entropic PURs in the continuous case \cite{Hir57,Beck75,B-BM75}, with a particular focus on their lack of scale invariance. This is a flaw due to the very definition of differential entropy, and one of the reasons that lead us to introduce relative entropy based MURs.  In Section \ref{jointmeas} we construct the covariant observables which will be used as approximate joint measurements of the position and momentum target observables. Finally, in Section \ref{sec:entMURs} the main results on MURs that we sketched above are presented in detail. Some conclusions are discussed in Section  \ref{sec:concl}.

\section{Target observables and states}\label{sec:stat+obs}

Let us start with the usual position and momentum operators, which satisfy the canonical commutation rules:
\begin{equation}\label{vecQP}
\vec Q\equiv (Q_1, \ldots , Q_n), \qquad \vec P\equiv (P_1,
\ldots, P_n), \qquad \big[ Q_i,\,  P_j\big]=\rmi \hbar \delta_{ij}.
\end{equation}
Each of the vector operators has $n$ components; it could be the case of a single particle in one or more dimensions ($n=1,2,3$), or several scalar or vector particles, or the quadratures of $n$ modes of the electromagnetic field. We assume the Hilbert space $\Hscr$\marginpar{$\Hscr$} to be irreducible for the algebra generated by the canonical operators $\vec Q$ and $\vec P$. An \emph{observable} of the quantum system $\Hscr$ is identified with a \emph{positive operator valued measure} (POVM\marginpar{POVM}); in the paper, we shall consider observables with outcomes in $\Rbb^k$ endowed with its Borel $\sigma$-algebra $\bor{\Rbb^k}$\marginpar{$\bor{\Rbb^k}$}.
The use of POVMs to represent observables in quantum theory is standard and the definition can be found in many textbooks \cite{DavQTOS,BGL97,BLPY16,HeiMZ16}; the alternative name ``non-orthogonal resolutions of the identity'' is also used \cite{Hol01,Hol11,Hol12}.
Following \cite{BGL97,BLPY16,Hol12,HSTZ14}, a \emph{sharp observable} is an observable represented by a \emph{projection valued measure} (pvm\marginpar{pvm}); it is standard to identify a sharp observable on the outcome space $\Rbb^k$ with the $k$ self-adjoint operators corresponding to it by spectral theorem. Two observables are \emph{jointly measurable} or \emph{compatible} if there exists a POVM having them as marginals.
Because of the non-vanishing commutators, each couple $Q_i$, $P_i$, as well as the vectors $\vec Q$, $\vec P$, are not jointly measurable.

We denote by $\Tscr(\Hscr)$ the trace class operators on $\Hscr$,\marginpar{$\Sscr$, $\Tscr(\Hscr)$} by $\Sscr\subset \Tscr(\Hscr)$ the subset of the statistical operators (or states, preparations), and by $\lh$\marginpar{$\lh$} the space of the linear bounded operators.

\subsection{Position and momentum}\label{sec:ref}

Our target observables will be either $n$-dimensional position and momentum ({\em vector case}) or position and momentum along two different directions of $\Rbb^n$ ({\em scalar case}).
The second case allows to give an example ranging with continuity from maximally incompatible observables to compatible ones.

\subsubsection{Vector observables} \label{sec:refv}

As target observables we take $\vec Q$ and $\vec P$ as in \eqref{vecQP} and we denote by $\Qo(A), \Po(B)$, $A,B\in \Bscr(\Rbb^n)$, their pvm's, that is
\begin{equation}\label{eq:specQP}
Q_i = \int_{\Rbb^n} x_i \Qo(\rmd\vec{x}), \qquad P_i = \int_{\Rbb^n} p_i \Po(\rmd\vec{p}).
\end{equation}
Then, the distributions in the state $\rho\in\Sscr$ of a sharp  position and  a sharp  momentum measurements (denoted by $\Qo_\rho$ and $\Po_\rho$) are absolutely continuous with respect to the Lebesgue measure; we denote by $f(\bullet|\rho)$ and $g(\bullet|\rho)$ their probability densities: $\forall A,B\in \bor{\Rbb^n}$,
\begin{equation}\label{Q()P()}
\Qo_\rho(A)=\Tr\left\{\rho \Qo(A)\right\}=\int_{A}f(\vec x|\rho)\rmd \vec x, \qquad \Po_\rho(B)=\Tr\left\{\rho \Po(B)\right\}=\int_{B}g(\vec p|\rho)\rmd \vec p.
\end{equation}
In the Dirac notation, if  $|\vec x\rangle$ and $|\vec p\rangle$ are the improper position and momentum eigenvectors, these densities take the expressions
$f(\vec x|\rho)= \langle \vec x|\rho|\vec x\rangle$ and
$g(\vec p|\rho)= \langle \vec p|\rho|\vec p\rangle$, respectively.
The mean vectors and the variance matrices of these distributions will be given in \eqref{xstateVAR} and \eqref{pstateVAR}.

\subsubsection{Scalar observables} \label{sec:refs}

As target observables we take the position along a given direction $\vec u$ and the momentum along another given direction $\vec v$:
\begin{equation}\label{QuPv}
Q_{\vec u}=\vec u\cdot\vec Q, \qquad P_{\vec v}=\vec v\cdot \vec P, \qquad \text{with}\quad \vec u,\vec v\in \Rbb^n, \quad \abs{\vec u}=\abs{\vec v}=1,\quad \vec u\cdot \vec v =\cos \alpha.
\end{equation}
In this case we have $[Q_{\vec u},P_{\vec{v}}]=\rmi\hbar\cos \alpha$, so that $Q_{\vec u}$ and $P_{\vec{v}}$ are not jointly measurable, unless the directions $\vec u$ and $\vec v$ are orthogonal.

Their pvm's are denoted by $\Qo_{\vec u}$ and $\Po_{\vec v}$, their distributions in a state $\rho$ by $\Qo_{\vec u,\rho}$ and $\Po_{\vec v,\rho}$, and their corresponding probability densities by $f_{\vec u}(\bullet|\rho)$ and $g_{\vec v}(\bullet|\rho)$: $\forall A,B\in \bor{\Rbb}$,
\[
\Qo_{\vec u,\rho}(A) = \Tr\{\Qo_{\vec u}(A)\rho\}=\int_A f_{\vec u}(x|\rho)\,\rmd x,\qquad
\Po_{\vec v,\rho}(B) = \Tr\{\Po_{\vec v}(A)\rho\}= \int_B g_{\vec v}(p|\rho)\,\rmd p.
\]
Of course, the densities in the scalar case are marginals of the densities in the vector case.
Means and variances will be given in \eqref{uvmom}.

\subsection{Quantum moments.}
Let $\Sscr_2$ be the set of states for which the second moments of position and momentum are finite:\marginpar{$\Sscr_2$}
\[
\Sscr_2:=\left\{\rho\in \Sscr: \int_{\Rbb^n}\abs{\vec x}^2f(\vec x|\rho)\rmd  \vec x<+\infty, \
\int_{\Rbb^n}\abs{\vec p}^2g(\vec p|\rho)\rmd  \vec p<+\infty\right\}.
\]
Then, the mean vector and the variance matrix of the position $\vec Q$ in the state $\rho\in \Sscr_2$ are
\begin{equation}\label{xstateVAR}\begin{split}
a_i^\rho:=\int_{\Rbb^n}x_i f(\vec x|\rho)\rmd  \vec x&\equiv \Tr\left\{\rho Q_i\right\}, \\
\qquad A_{ij}^\rho:=\int_{\Rbb^n}\left(x_i-a_i^\rho\right) \left(x_j-a_j^\rho\right) f(\vec x|\rho)\rmd  \vec x&\equiv\Tr\left\{\rho \left(Q_i-a_i^\rho\right)\left(Q_j-a_j^\rho\right)\right\},
\end{split}\end{equation}
while for the momentum $\vec P$ we have
\begin{equation}\label{pstateVAR}\begin{split}
b_i^\rho :=\int_{\Rbb^n}p_i g(\vec p|\rho)\rmd  \vec p&\equiv \Tr\left\{\rho P_i\right\}, \\
\qquad B_{ij}^\rho:=\int_{\Rbb^n}\left(p_i-b_i^\rho\right) \left(p_j-b_j^\rho\right) g(\vec p|\rho)\rmd  \vec p&\equiv\Tr\left\{\rho \left(P_i-b_i^\rho\right)\left(P_j-b_j^\rho\right)\right\}.
\end{split}\end{equation}

For $\rho \in \Sscr_2$ it is possible to introduce also the mixed `quantum covariances'
\begin{equation}\label{xpstateVAR}
C_{ij}^\rho:=\Tr\left\{\rho\,\frac{(Q_i-a_i^\rho)(P_j-b_j^\rho)+(P_j-b_j^\rho)(Q_i-a_i^\rho)}{2}\right\}.
\end{equation}
Since there is no joint measurement for the position $\vec Q$ and momentum $\vec P$, the quantum covariances $C_{ij}^\rho$ are not covariances of a joint distribution, and thus they do not have a classical probabilistic interpretation.

By means of the moments above, we construct the three real $n\times n$ matrices $A^\rho,\, B^\rho, \, C^\rho$, the $2n$-dimensional vector $\mu^\rho$ and the symmetric $2n\times 2 n$ matrix $V^\rho$, with
\begin{equation}\label{muVV}
\mu^\rho :=\begin{pmatrix} \vec a^\rho \\ \vec b^\rho\end{pmatrix},\qquad V^\rho:= \begin{pmatrix}A^\rho & C^\rho  \\ (C^\rho)^T & B^\rho\end{pmatrix}.
\end{equation}
We say $V^\rho$ is the \emph{quantum variance matrix} of position and momentum in the state $\rho$. In \cite{SimMD94} dimensionless canonical operators are considered, but apart from this, our matrix $V^\rho$ corresponds to their ``noise matrix in real form''; the name ``variance matrix'' is also used \cite{KiuS13,SimSM87}.

In a similar way, we can introduce all the moments related to the position $Q_{\vec u}$ and momentum $P_{\vec v}$ introduced in \eqref{QuPv}. For $\rho\in \Sscr_2$, the means and variances are respectively
\begin{equation}\label{uvmom}
\vec u\cdot\vec a^\rho,\qquad \Var(\Qo_{\vec u,\rho})=\vec u \cdot A^\rho \vec u,
\qquad \qquad
\vec v\cdot\vec b^\rho,\qquad \Var(\Po_{\vec v,\rho})=\vec v \cdot B^\rho \vec v.
\end{equation}
Similarly to \eqref{xpstateVAR}, we have also the `quantum covariance' $\vec u \cdot C^\rho \vec v\equiv \vec v \cdot (C^\rho)^T \vec u$. Then, we collect the two means in a single vector and we introduce the variance matrix:
\begin{equation}\label{murho+V}
\mu^\rho_{\vec u,\vec v}:=\begin{pmatrix} \vec u\cdot\vec a^\rho\\ \vec v\cdot\vec b^\rho\end{pmatrix}, \qquad V^\rho_{\vec u,\vec v}:= \begin{pmatrix} \vec u \cdot A^\rho \vec u &
\vec u \cdot C^\rho \vec v \\ \vec u \cdot C^\rho \vec v & \vec v \cdot B^\rho \vec v\end{pmatrix}.
\end{equation}

\begin{proposition}
Let $V= \begin{pmatrix} A & C  \\ C^T & B\end{pmatrix}$ be a real symmetric $2n\times 2 n$ block matrix with the same dimensions of a quantum variance matrix. Define
\begin{equation}\label{def:Omega}
V_\pm :=\begin{pmatrix} A & C\pm \rmi \frac \hbar 2\,\openone \\ C^T\mp \rmi \frac \hbar 2\,
\openone & B\end{pmatrix} \equiv V \pm \frac{\rmi }2\,\Omega, \qquad \text{with} \qquad \Omega:= \begin{pmatrix} 0 & \hbar\openone \\ -\hbar\openone & 0\end{pmatrix}.
\end{equation}
Then
\begin{equation}\label{uncertmatrix}
\text{$V=V^\rho$ for some state $\rho\in \Sscr_2$} \iff V_+\geq 0 \iff V_-\geq 0.
\end{equation}
In this case we have: $V\geq 0$, $A>0$, $B>0$, and
\begin{equation}\label{RobUncert0}
(\vec u'\cdot A \vec u') (\vec v'\cdot  B \vec v')\geq \left(\vec v' \cdot C \vec u'\right)^2
+\frac{\hbar^2}4 \left(\vec v' \cdot\vec u'\right)^2,\qquad \forall \vec u'\in \Rbb^n, \quad \forall \vec v'\in \Rbb^n.
\end{equation}
\end{proposition}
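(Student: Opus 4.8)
Throughout, it suffices to deal with $V_+$: the equivalence $V_+\ge0\iff V_-\ge0$ is the cheap one, since $V$ and $\Omega$ are real, so $V_-=\overline{V_+}$ entrywise, and a Hermitian matrix $M$ satisfies $M\ge0$ if and only if $\overline{M}=M^{T}\ge0$, because $M$ and $M^{T}$ have the same (real) spectrum. The rest of \eqref{uncertmatrix} is the link between $V_+\ge0$ and ``$V=V^\rho$ for some $\rho\in\Sscr_2$'', which I would obtain from the two familiar halves of the canonical uncertainty principle; the positivity statements and \eqref{RobUncert0} then fall out of $V_+\ge0$.

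For ``$V=V^\rho$'' $\Rightarrow$ ``$V_+\ge0$'' — the genuine uncertainty inequality — I would set $\vec R:=(Q_1,\dots,Q_n,P_1,\dots,P_n)$, $\widetilde R_k:=\vec R_k-\mu^\rho_k$, and, for arbitrary $\vec z\in\Cbb^{2n}$, form $R:=\sum_k z_k\widetilde R_k$. Each $\widetilde R_k$ being self-adjoint, $R^\dagger R\ge0$, hence $\Tr\{\rho\,R^\dagger R\}\ge0$. Expanding $R^\dagger R=\sum_{j,k}\overline{z_j}z_k\,\widetilde R_j\widetilde R_k$ and splitting each product into its symmetrized part and its commutator, the canonical relations \eqref{vecQP} give $[\widetilde R_j,\widetilde R_k]=\rmi\,\Omega_{jk}$, so $\Tr\{\rho\,\widetilde R_j\widetilde R_k\}=V_{jk}+\frac{\rmi}{2}\Omega_{jk}=(V_+)_{jk}$, the symmetrized traces being exactly the blocks $A^\rho,B^\rho,C^\rho$ of \eqref{muVV}. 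Thus $0\le\Tr\{\rho\,R^\dagger R\}=\overline{\vec z}\cdot(V_+\vec z)$ for every $\vec z$, i.e.\ $V_+\ge0$. The only technical nuisance is that the $\widetilde R_k$ are unbounded, but for $\rho\in\Sscr_2$ the matrix $V^\rho$ and all the traces above are finite, so the rearrangement is legitimate.

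For the converse, I would first observe that $V_+\ge0$ already forces $V>0$: if $V\vec w=0$ with $\vec w\in\Rbb^{2n}\setminus\{0\}$, then $\overline{\vec w}\cdot(V_+\vec w)=\frac{\rmi}{2}\,\vec w\cdot(\Omega\vec w)=0$, so $V_+\vec w=0$ and hence $\Omega\vec w=0$, impossible since $\Omega$ is invertible. With $V>0$ in hand I would invoke Williamson's normal form: there is a symplectic $S$ (i.e.\ $S^{T}\Omega S=\Omega$) with $S^{T}VS=\operatorname{diag}(\nu_1,\dots,\nu_n,\nu_1,\dots,\nu_n)$, $\nu_i>0$; since $S^{T}V_+S=(S^{T}VS)_+$, the condition $V_+\ge0$ is equivalent to $\nu_i\ge\hbar/2$ for all $i$. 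A zero-mean Gaussian state with this diagonal variance matrix is the tensor product of single-mode thermal states with mean photon numbers $\nu_i/\hbar-\frac12\ge0$, and conjugating it by the metaplectic (Gaussian) unitary implementing $S$ — which exists by the Stone--von Neumann theorem, thanks to the assumed irreducibility of $\Hscr$ — yields $\rho\in\Sscr_2$ with $V^\rho=V$. (Equivalently, one may define $\rho$ through its Gaussian quantum characteristic function and quote the quantum Bochner--Khinchin theorem, whose positivity condition is precisely $V_+\ge0$; see \cite{SimMD94,Hol11,Hol12}.) I expect this construction to be the real content of the proof; everything else is linear algebra.

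Finally, the remaining assertions read off from $V_+\ge0$: testing against real $\vec z$ gives $\vec z\cdot(V\vec z)=\overline{\vec z}\cdot(V_+\vec z)\ge0$, so $V\ge0$; the kernel argument above, applied to vectors $(\vec w,\vec 0)$ and $(\vec 0,\vec w)$, gives $A>0$ and $B>0$; and for any $\vec u',\vec v'\in\Rbb^n$, compressing $V_+\ge0$ onto the complex span of $(\vec u',\vec 0)$ and $(\vec 0,\vec v')$ produces a positive semidefinite $2\times2$ matrix with diagonal entries $\vec u'\cdot A\vec u'$ and $\vec v'\cdot B\vec v'$ and off-diagonal entry $\vec u'\cdot C\vec v'+\frac{\rmi\hbar}{2}\,\vec u'\cdot\vec v'$, whose nonnegative determinant is precisely \eqref{RobUncert0}.
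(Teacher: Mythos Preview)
Your argument is correct and in fact more self-contained than the paper's. The paper does not prove the equivalence \eqref{uncertmatrix} at all: it simply cites \cite{SimMD94,Hol01,Hol12} for it, then obtains $V\ge 0$ from $V=\frac12(V_++V_-)$ and derives \eqref{RobUncert0} by the same $2\times 2$ compression you use. You instead supply both halves of \eqref{uncertmatrix} explicitly (the Robertson-type inequality $\Tr\{\rho R^\dagger R\}\ge 0$ for the forward direction, and Williamson's normal form plus a Gaussian/thermal realisation---or equivalently the quantum Bochner--Khinchin theorem---for the converse), which is exactly the content hidden behind the paper's citation. The one place where the two proofs genuinely diverge is the strict positivity of $A$ and $B$: the paper applies \eqref{RobUncert0} with $\vec u'=\vec v'=\vec u_i$ an eigenvector of $A$, so that the right-hand side is at least $\hbar^2/4>0$ and the eigenvalue of $A$ must be strictly positive; you instead show $V>0$ via the kernel/invertibility-of-$\Omega$ argument and then read off $A,B>0$ as principal blocks. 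Both are fine. One small presentational point: your kernel argument, as written, only excludes $\ker V\neq\{0\}$; to conclude $V>0$ you also need $V\ge 0$, which you establish only in the last paragraph---just move that one-line observation before the Williamson step.
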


The inequalities \eqref{uncertmatrix} for $V_\pm$ tell us exactly when a (positive semi-definite) real matrix $V$ is the quantum variance matrix of position and momentum in a state $\rho$. Moreover, they are the multidimensional version of the usual uncertainty principle expressed through the variances \cite{SimMD94,Hol01,Hol12}, hence they represent a form of PURs. The block matrix $\Omega$ in the definition of $V_\pm$ is useful to compress formulae involving position and momentum; moreover, it makes simpler to compare our equations with their frequent dimensionless versions (with $\hbar = 1$) in the literature \cite{KiuS13,HKS15}.

\begin{proof}
Equivalences \eqref{uncertmatrix} are well known, see e.g.\ \cite[Sect.\ 1.1.5]{Hol01}, \cite[Eq.~(2.20)]{Hol12}, \cite[Theor.~2]{SimMD94}. Then $V=\frac12 V_+ +\frac12 V_-\geq 0$.

By using the real block vector $\begin{pmatrix}\alpha\vec u'\\ \beta\vec v'\end{pmatrix}$, with arbitrary $\alpha,\beta\in \Rbb$ and given $\vec u',\, \vec v'\in \Rbb^n$, the semi-positivity \eqref{uncertmatrix} implies
\begin{equation*}
\begin{pmatrix}\vec u'\cdot A \vec u' & \vec u'\cdot C\vec v' \pm \rmi \frac \hbar 2 \vec u'\cdot\vec v' \\ \vec v' \cdot C^T\vec u' \mp \rmi \frac \hbar 2\vec v' \cdot\vec u' & \vec v'\cdot  B \vec v'\end{pmatrix}\geq 0, \qquad \forall \vec u'\in \Rbb^n, \quad \forall \vec v'\in \Rbb^n,
\end{equation*}
which in turn implies $A \geq 0$, $B \geq 0$ and \eqref{RobUncert0}.
Then, by choosing  $\vec u'=\vec v'=\vec u_i$, where $\vec u_1,\ldots,\vec u_n$ are the eigenvectors of $A$ (since $A$ is a real symmetric matrix, $\vec u_i\in\Rbb^n$ for all $i$), one gets the strict positivity of all the eigenvalues of $A$; analogously, one gets $B>0$.
\end{proof}

Inequality \eqref{RobUncert0} for $\vec u'=\vec u$ and $\vec v'=\vec v$ becomes the uncertainty rule \`a la Robertson \cite{Rob29} for the observables in \eqref{QuPv} (a position component and a momentum component spanning an arbitrary angle $\alpha$):
\begin{equation}\label{RobUncert}
\Var(\Qo_{\vec u,\rho})\, \Var(\Po_{\vec v,\rho})\geq \left(\vec v \cdot C^\rho \vec u\right)^2
+\frac{\hbar^2}4 \left(\cos\alpha\right)^2.
\end{equation}
Inequality \eqref{RobUncert} is equivalent to
\begin{equation}
V^\rho_{\vec u,\vec v} \pm \frac{\rmi \hbar}2\,\cos\alpha \begin{pmatrix} 0 & 1 \\ -1 & 0\end{pmatrix}\geq 0.
\end{equation}

Since $V_\pm$ are block matrices, their positive semi-definiteness can be studied by means of the Schur complements \cite{FZ05,Carlen10,Petz08}.
However, as $V_\pm$ are complex block matrices with a very peculiar structure, special results hold for them. Before summarizing the properties of $V_\pm$ in the next proposition, we need a simple auxiliary algebraic lemma.

\begin{lemma}\label{weylscor}
Let $A$ and $B$ be complex self-adjoint matrices such that $A\geq B\geq 0$. Then $\det A\geq \det B \geq 0$, and
the equality $\det A=\det B$ holds iff $A=B$.
\end{lemma}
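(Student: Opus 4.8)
The plan is to reduce the statement to the case in which $B$ is invertible, and there exploit the multiplicativity of the determinant. The inequality $\det B\ge 0$ is immediate: $B\ge 0$ is self-adjoint, so it has nonnegative eigenvalues and $\det B=\prod_i\lambda_i(B)\ge 0$.

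For $\det A\ge\det B$, first suppose $B>0$. Then $B^{1/2}$ is invertible, and conjugating the hypothesis $A\ge B$ by the self-adjoint invertible matrix $B^{-1/2}$ gives $M:=B^{-1/2}AB^{-1/2}\ge B^{-1/2}BB^{-1/2}=\openone$. Hence $M$ is self-adjoint with every eigenvalue $\ge 1$, so $\det M=\prod_i\lambda_i(M)\ge 1$, and multiplicativity yields $\det A=\det\!\big(B^{1/2}MB^{1/2}\big)=(\det B)(\det M)\ge\det B$. The general case $B\ge 0$ then follows by applying this to $A+\varepsilon\openone\ge B+\varepsilon\openone>0$ and letting $\varepsilon\downarrow 0$, using continuity of the determinant. (Alternatively, the inequality is immediate from the monotonicity of eigenvalues, $\lambda_i(A)\ge\lambda_i(B)\ge 0$ for all $i$.)

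For the equality case I would work with $B>0$, which is the situation relevant in the sequel, since the position and momentum blocks of a quantum variance matrix are strictly positive by the previous proposition. The chain above shows $\det A=\det B$ iff $\det M=1$; since the eigenvalues of $M$ are all $\ge 1$ and their product is $1$, each of them equals $1$, so the self-adjoint matrix $M$ equals $\openone$, whence $A=B^{1/2}\openone\, B^{1/2}=B$. The converse is trivial. An equivalent route avoids $B^{1/2}$: $\det A=\det B$ together with $\lambda_i(A)\ge\lambda_i(B)>0$ forces $\lambda_i(A)=\lambda_i(B)$ for every $i$, hence $\Tr(A-B)=0$; as $A-B\ge 0$ and a positive semi-definite matrix of vanishing trace is the zero matrix, $A=B$.

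The point that needs care — the main obstacle — is the equality clause when $B$, hence $A$, is singular: then $\det A=\det B=0$ does not by itself imply $A=B$ (take $B=0$, $A=\mathrm{diag}(1,0,\dots,0)$), so the characterisation has to be read for invertible $B$, which is exactly the regime in which the lemma is invoked afterwards. Consistently, the perturbation replacing $B$ by $B+\varepsilon\openone$ transmits the inequality but not the equality, which is why the equality case is best settled directly in the nondegenerate setting rather than by a limiting argument.
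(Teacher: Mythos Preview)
Your proof of the inequality is correct; your ``alternative route'' via eigenvalue monotonicity $\lambda^\downarrow_i(A)\ge\lambda^\downarrow_i(B)\ge 0$ (Weyl's inequality) followed by taking products is exactly the paper's argument. Your primary route via conjugation by $B^{-1/2}$ is a minor variant that presupposes $B>0$ but reaches the same conclusion.

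For the equality clause, you have correctly spotted that the lemma \emph{as stated} is false when $B$ is singular: your counterexample $B=0$, $A=\operatorname{diag}(1,0,\dots,0)$ is valid. The paper's own proof shares this gap --- it argues that $\det A=\det B$ together with $\lambda^\downarrow_i(A)\ge\lambda^\downarrow_i(B)\ge 0$ forces $\lambda^\downarrow_i(A)=\lambda^\downarrow_i(B)$ for all $i$, which is only true when all $\lambda^\downarrow_i(B)>0$; from there it concludes via $A-B\ge 0$ and $\Tr(A-B)=0$, exactly as in your second equality argument. So your analysis is sharper than the paper's: you prove the equality case under the correct hypothesis $B>0$ and explicitly note why the singular case fails, whereas the paper leaves the restriction implicit. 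As you observe, every subsequent use of the lemma in the paper is in the regime $B>0$ (positivity of the blocks $A^\rho,B^\rho$, or strict matrix inequalities coming from \eqref{weakineqB>}), so the oversight is harmless for the paper's purposes.
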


\begin{proof}
Let $\lambda^\downarrow_i(A)$ and $\lambda^\downarrow_i(B)$ be the ordered decreasing sequences of the eigenvalues of $A$ and $B$, respectively. Then, by Weyl's inequality, $A\geq B\geq 0$ implies $\lambda^\downarrow_i(A)\geq\lambda^\downarrow_i(B)\geq 0$ for every $i$ \cite[Sect.\ III.2]{Bha97}. This gives the first statement. Moreover, if $A\geq B\geq 0$ and $\det A=\det B$, we get $\lambda^\downarrow_i(A)=\lambda^\downarrow_i(B)$ for every $i$. Then $A=B$ because $A-B\geq0$ and $\Tr\{A-B\}=0$.
\end{proof}

\begin{proposition}
Let $V= \begin{pmatrix}A & C  \\ C^T & B\end{pmatrix}$ be a real symmetric $2n\times 2 n$ matrix with the same dimensions of a quantum variance matrix. Then $V_+\geq 0$ (or, equivalently, $V_-\geq 0$) if and only if
$A>0$ and
\begin{equation}\label{ineqB>}
B\geq \left(C^T\mp \frac{\rmi\hbar}2\,\openone\right)A^{-1} \left(C\pm \frac{\rmi\hbar}2\,\openone\right)
\equiv C^T A^{-1}C +\frac{\hbar^2}4\, A^{-1}\mp\frac{\rmi\hbar}2\left(A^{-1}C-C^T A^{-1}\right).
\end{equation}
In this case we have
\begin{equation}\label{weakineqB>}
B\geq C^T A^{-1} C +\frac{\hbar^2}4\, A^{-1}\geq\frac{\hbar^2}4\, A^{-1}>0.
\end{equation}
Moreover, we have also the following properties for the various determinants:
\begin{equation}\label{uncdetdet}
(\det A)(\det B) \geq \det V= (\det A)\det\left(B-C^T A^{-1}C\right) \geq \left(\frac\hbar 2\right)^{2n},
\end{equation}
\begin{equation}\label{mus1}
\det V=\left(\frac \hbar 2 \right)^{2n} \quad \Leftrightarrow \quad B=C^T A^{-1} C +\frac {\hbar^2}4\, A^{-1} \quad \Rightarrow \quad  \quad C A = A C^T,
\end{equation}
\begin{equation}\label{mus2}
(\det A)(\det B)=\left(\frac \hbar 2 \right)^{2n} \quad \Leftrightarrow \quad B=\frac {\hbar^2}4\,A^{-1}, \quad C=0.
\end{equation}
\end{proposition}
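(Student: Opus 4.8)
The plan is to reduce the whole statement to the Schur-complement calculus for Hermitian block matrices together with the elementary Lemma~\ref{weylscor}. A preliminary remark I would use throughout is that $V_-$ is the entrywise complex conjugate of $V_+$, so that $V_+\ge 0$ if and only if $V_-\ge 0$ (this is also recorded in \eqref{uncertmatrix}); this is precisely what will upgrade the one-sided bound \eqref{ineqB>} to the symmetric bound \eqref{weakineqB>} and what will force $CA=AC^T$ in the equality case.

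For the criterion \eqref{ineqB>}: if $V_+\ge 0$, then its leading block $A$ is positive semidefinite (test with vectors of the form $(\vec w,0)$), and I would promote this to $A>0$ as follows. If $A\vec w=0$ for some real $\vec w$, then the standard fact about positive semidefinite block matrices forces $\big(C^T-\frac{\rmi\hbar}{2}\openone\big)\vec w=0$, i.e.\ $C^T\vec w=\frac{\rmi\hbar}{2}\vec w$; since $C^T$ and $\vec w$ are real this is impossible unless $\vec w=0$, so $A$ is invertible (alternatively, $A>0$ is already contained in the preceding proposition). With $A>0$ the Schur-complement criterion applied to the Hermitian matrix $V_+$ of \eqref{def:Omega} gives $V_+\ge 0\iff B-\big(C^T-\frac{\rmi\hbar}{2}\openone\big)A^{-1}\big(C+\frac{\rmi\hbar}{2}\openone\big)\ge 0$, and expanding the product yields exactly the upper-sign instance of \eqref{ineqB>}; the lower-sign instance is the same computation for $V_-$. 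Adding the two inequalities, the cross terms $\mp\frac{\rmi\hbar}{2}\big(A^{-1}C-C^TA^{-1}\big)$ appear with opposite signs and cancel, leaving $B\ge C^TA^{-1}C+\frac{\hbar^2}{4}A^{-1}$; since $C^TA^{-1}C\ge 0$ and $\frac{\hbar^2}{4}A^{-1}>0$ this is \eqref{weakineqB>}.

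For the determinant chain \eqref{uncdetdet}: the Schur identity for the real block matrix $V$ (with $A>0$) reads $\det V=(\det A)\det\big(B-C^TA^{-1}C\big)$. By \eqref{weakineqB>} one has $B\ge B-C^TA^{-1}C\ge \frac{\hbar^2}{4}A^{-1}>0$, so two applications of Lemma~\ref{weylscor} give $\det B\ge\det\big(B-C^TA^{-1}C\big)\ge\det\big(\frac{\hbar^2}{4}A^{-1}\big)=(\hbar/2)^{2n}/\det A$; multiplying through by $\det A>0$ yields \eqref{uncdetdet}.

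For the equality cases I would argue as follows. The equality $\det V=(\hbar/2)^{2n}$ is equivalent, via the Schur identity, to $\det\big(B-C^TA^{-1}C\big)=\det\big(\frac{\hbar^2}{4}A^{-1}\big)$, and since $B-C^TA^{-1}C\ge\frac{\hbar^2}{4}A^{-1}$, Lemma~\ref{weylscor} promotes this to the matrix equality $B=C^TA^{-1}C+\frac{\hbar^2}{4}A^{-1}$; plugging this equality into the two inequalities \eqref{ineqB>} then shows that the Hermitian matrix $\frac{\rmi\hbar}{2}\big(A^{-1}C-C^TA^{-1}\big)$ is at once $\ge 0$ and $\le 0$, hence zero, i.e.\ $A^{-1}C=C^TA^{-1}$, which rearranges to $CA=AC^T$; this proves \eqref{mus1}. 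Next, $(\det A)(\det B)=(\hbar/2)^{2n}$ forces equality everywhere in \eqref{uncdetdet}: from $\det V=(\hbar/2)^{2n}$ and \eqref{mus1} we get $B=C^TA^{-1}C+\frac{\hbar^2}{4}A^{-1}$, while $(\det A)(\det B)=(\det A)\det\big(B-C^TA^{-1}C\big)$ gives $\det B=\det\big(B-C^TA^{-1}C\big)$, so Lemma~\ref{weylscor} (again with $B\ge B-C^TA^{-1}C\ge 0$) yields $C^TA^{-1}C=0$, hence $C=0$ because $A^{-1}>0$, and then $B=\frac{\hbar^2}{4}A^{-1}$; the two converses in \eqref{mus1}--\eqref{mus2} are immediate substitutions. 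The only point needing care is keeping the successive invocations of Lemma~\ref{weylscor} aligned with the exact equality claims, and being precise about signs when the $V_+$ and $V_-$ inequalities are combined; the matrix algebra itself is routine.
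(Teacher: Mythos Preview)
Your proof is correct and follows essentially the same route as the paper's: the Schur-complement criterion for $V_\pm$ to obtain \eqref{ineqB>}, averaging the $\pm$ cases for \eqref{weakineqB>}, the Schur determinant identity for \eqref{uncdetdet}, and Lemma~\ref{weylscor} to upgrade determinant equalities to matrix equalities in \eqref{mus1}--\eqref{mus2}. The only minor addition is your self-contained argument that $A>0$ via the impossibility of $C^T\vec w=\frac{\rmi\hbar}{2}\vec w$ for real $\vec w$; the paper simply cites this from the preceding proposition.
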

By interchanging $A$ with $B$ and $C$ with $C^T$ in \eqref{ineqB>}-\eqref{mus2} equivalent results are obtained.
\begin{proof}
Since we already know that $V_+\geq 0$ implies the invertibility of $A$, the equivalence between \eqref{uncertmatrix} and  \eqref{ineqB>} with $A>0$  follows from \cite[Theor.~1.12 p.~34]{FZ05} (see also \cite[Theor.~11.6]{Petz08} or \cite[Lemma 3.2]{Carlen10}).

In \eqref{weakineqB>}, the first inequality follows by summing up the two inequalities in \eqref{ineqB>}. The last two ones are immediate by the positivity of $A^{-1}$.

The equality in \eqref{uncdetdet} is Schur's formula for the determinant of block matrices \cite[Theor.~1.1 p.~19]{FZ05}. Then, the first inequality is immediate by the lemma above and the trivial relation $B\geq B-C^T A^{-1} C$; the second one follows from \eqref{weakineqB>}:
\begin{equation*}
B-C^T A^{-1}C \geq\frac{\hbar^2}4\, A^{-1}
\quad \Rightarrow \quad \det\left(B-C^T A^{-1} C\right) \geq \det\left(\frac{\hbar^2}4\, A^{-1}\right)
=\frac{(\hbar/2)^{2n}}{\det A}.
\end{equation*}

The equality $\det V=\left(\frac \hbar 2 \right)^{2n}$ is equivalent to $\det\left(B-C^T A^{-1} C\right) = \det\left(\frac{\hbar^2}4\, A^{-1}\right)$; since the latter two determinants are evaluated on ordered positive matrices by \eqref{weakineqB>}, they coincide if and only if the respective arguments are equal (Lemma \ref{weylscor}); this shows the equivalence in \eqref{mus1}. Then, by \eqref{ineqB>}, the self-adjoint matrix $\frac{\rmi\hbar}2\left(A^{-1} C - C^T A^{-1}\right)$ is both positive semi-definite and negative semi-definite; hence it is null, that is, $C A = A C^T$.

Finally, $B=\frac {\hbar^2}4\,A^{-1}$ gives $(\det A)(\det B)=\left(\frac \hbar 2 \right)^{2n}$ trivially. Conversely, $(\det A)(\det B)=\left(\frac \hbar 2 \right)^{2n}$ implies $\det B=\det\Big(B-C^T A^{-1} C\Big)$ by \eqref{uncdetdet}; since $B\geq B-C^T A^{-1} C\geq0$ by \eqref{weakineqB>}, Lemma \ref{weylscor} then implies $C^T A^{-1}C=0$ and so $C=0$.
\end{proof}

By \eqref{ineqB>} and \eqref{weakineqB>}, every time three matrices $A,\, B, \, C$ define the quantum variance matrix of a state $\rho$, the same holds for $A,\, B, \, \widetilde C=0$. This fact can be used to characterize when two positive matrices $A$ and $B$ are the diagonal blocks of some quantum variance matrix, or two positive numbers $c_Q$ and $c_P$ are the position and momentum variances of a quantum state along the two directions $\vec{u}$ and $\vec{v}$.

\begin{proposition}\label{admissvar}
Two real matrices $A>0$ and $B>0$, having the dimension of the square of a length and momentum, respectively, are the diagonal blocks of a quantum variance matrix $V^\rho$ if and only if
\[
B\geq \frac {\hbar^2}4\,A^{-1}.
\]

Two real numbers $c_Q > 0$ and $c_P > 0$, having the dimension of the square of a length and momentum, respectively, are such that $c_Q = \Var(\Qo_{\vec u,\rho})$ and $c_P = \Var(\Po_{\vec v,\rho})$ for some state $\rho$ if and only if
\[
c_Q \, c_P \geq \left(\frac\hbar 2\,\cos\alpha\right)^2.
\]
\end{proposition}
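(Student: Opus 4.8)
The plan is to obtain the two ``only if'' directions directly from the uncertainty inequalities \eqref{weakineqB>} and \eqref{RobUncert} already proved, and the two ``if'' directions by exhibiting, in each case, a real symmetric block matrix $V=\begin{pmatrix}A&C\\C^T&B\end{pmatrix}$ with $C=0$ whose companion $V_+$ is positive semidefinite, so that the equivalence \eqref{uncertmatrix} yields a state $\rho\in\Sscr_2$ with $V^\rho=V$. For the second claim the construction of $\rho$ will be reduced, via the first claim, to the purely algebraic problem of choosing $A$ and $B$ with prescribed values of $\vec u\cdot A\vec u$ and $\vec v\cdot B\vec v$.

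For the first claim: if $A,B$ are the diagonal blocks of a quantum variance matrix $V^\rho$, then $A>0$ and \eqref{weakineqB>} gives $B\ge\frac{\hbar^2}4 A^{-1}$ at once. Conversely, if $A,B>0$ and $B\ge\frac{\hbar^2}4 A^{-1}$, then taking $C=0$ in \eqref{ineqB>} the two inequalities there collapse to the single condition $B\ge\frac{\hbar^2}4 A^{-1}$, which holds by assumption; hence $V=\begin{pmatrix}A&0\\0&B\end{pmatrix}$ satisfies $V_+\ge0$, and \eqref{uncertmatrix} produces a state $\rho\in\Sscr_2$ with $A^\rho=A$, $B^\rho=B$, $C^\rho=0$.

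For the second claim: if $c_Q=\Var(\Qo_{\vec u,\rho})$ and $c_P=\Var(\Po_{\vec v,\rho})$ for some $\rho$, then \eqref{RobUncert} gives $c_Qc_P\ge(\vec v\cdot C^\rho\vec u)^2+\frac{\hbar^2}4\cos^2\alpha\ge\left(\frac\hbar2\cos\alpha\right)^2$. Conversely, assume $c_Q,c_P>0$ with $c_Qc_P\ge\left(\frac\hbar2\cos\alpha\right)^2$. By the first claim it suffices to find real symmetric $A,B>0$ with $B\ge\frac{\hbar^2}4 A^{-1}$, $\vec u\cdot A\vec u=c_Q$ and $\vec v\cdot B\vec v=c_P$, since then the state $\rho$ supplied by the first claim has $\Var(\Qo_{\vec u,\rho})=\vec u\cdot A^\rho\vec u=c_Q$ and $\Var(\Po_{\vec v,\rho})=\vec v\cdot B^\rho\vec v=c_P$. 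I would first choose $A>0$ with $\vec u\cdot A\vec u=c_Q$ and $\vec v\cdot A^{-1}\vec v\le\frac{4c_P}{\hbar^2}$. This is possible: for any such $A$, Cauchy--Schwarz gives $\vec v\cdot A^{-1}\vec v\ge\frac{(\vec u\cdot\vec v)^2}{\vec u\cdot A\vec u}=\frac{\cos^2\alpha}{c_Q}$, and $\frac{\cos^2\alpha}{c_Q}\le\frac{4c_P}{\hbar^2}$ by hypothesis; moreover the value $\frac{\cos^2\alpha}{c_Q}$ is \emph{attained} when $\cos\alpha\ne0$ by taking $A$ with $A\vec u$ parallel to $\vec v$, namely $A\vec u=\frac{c_Q}{\cos\alpha}\vec v$ (a $2\times2$ construction inside $\operatorname{span}(\vec u,\vec v)$ that is positive definite for a large enough free diagonal entry, completed by, e.g., the identity on the orthogonal complement), while for $\cos\alpha=0$ it is approached by $A=c_Q\,\vec u\vec u^T+N(\openone-\vec u\vec u^T)$ with $N$ large. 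With such an $A$ fixed, set
\[
B:=\frac{\hbar^2}4 A^{-1}+\Big(c_P-\tfrac{\hbar^2}4\,\vec v\cdot A^{-1}\vec v\Big)\vec v\vec v^T .
\]
Then $B>0$; the coefficient of $\vec v\vec v^T$ is nonnegative by the choice of $A$, so $B\ge\frac{\hbar^2}4 A^{-1}$; and $\vec v\cdot B\vec v=\frac{\hbar^2}4\,\vec v\cdot A^{-1}\vec v+\big(c_P-\frac{\hbar^2}4\,\vec v\cdot A^{-1}\vec v\big)=c_P$, as required.

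The only genuinely delicate point is this last construction, especially on the boundary $c_Qc_P=\left(\frac\hbar2\cos\alpha\right)^2$ with $0<|\cos\alpha|<1$: there the naive guesses fail, since neither a product of squeezed vacua (for which $A,B$ are diagonal and $B=\frac{\hbar^2}4 A^{-1}$) nor a diagonal $B$ realizes the pair $(c_Q,c_P)$ exactly, and one is forced to use an $A$ with $A\vec u\parallel\vec v$ together with the rank-one correction $\vec v\vec v^T$ above. Everything else is the routine positivity and Cauchy--Schwarz bookkeeping indicated.
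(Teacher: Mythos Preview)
Your argument is correct. The treatment of the first claim and of the ``only if'' part of the second claim coincides with the paper's. For the ``if'' part of the second claim, however, you take a genuinely different route: the paper proceeds by a three-case explicit construction of $A$ and $B$ (the cases $\cos\alpha=\pm1$, $\cos\alpha=0$, and $\cos\alpha\notin\{0,\pm1\}$), writing out in each case concrete matrices acting on $\operatorname{span}\{\vec u,\vec v\}$ and its orthogonal complement, and checking directly that $B\ge\frac{\hbar^2}{4}A^{-1}$. You instead give a unified construction: first produce $A>0$ with $\vec u\cdot A\vec u=c_Q$ and $\vec v\cdot A^{-1}\vec v$ at (or below) the Cauchy--Schwarz lower bound $\frac{\cos^2\alpha}{c_Q}$, using the equality case $A\vec u\parallel\vec v$, then set $B=\frac{\hbar^2}{4}A^{-1}$ plus a nonnegative rank-one correction along $\vec v$ to hit $\vec v\cdot B\vec v=c_P$. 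This is more conceptual and avoids the case split (your handling of $\cos\alpha=0$ is a minor side remark rather than a separate construction); the paper's version has the advantage of giving fully explicit matrices that can be quoted elsewhere. One cosmetic point: when you say ``completed by, e.g., the identity on the orthogonal complement'', the identity should carry appropriate dimensional factors, but this is inessential.
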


\begin{proof}
For $A$ and $B$, the necessity follows from \eqref{weakineqB>}. \ The sufficiency comes from \eqref{ineqB>} by choosing \ $V^\rho= \begin{pmatrix}A & 0  \\ 0 & B\end{pmatrix}$.

For $c_Q$ and $c_P$, the necessity follows from \eqref{RobUncert0}. The sufficiency comes from \eqref{ineqB>} with $V^\rho= \begin{pmatrix}A & 0  \\ 0 & B\end{pmatrix}$ and for example the following choices of $A$ and $B$:
\begin{enumerate}[-]
\item if $\cos \alpha=\pm 1$, we take $A = c_Q\,\id$ and $B = c_P \, \id$;
\item if $\cos \alpha=0$, we let
\begin{align*}
A & = c_Q \, \vec{u}\vec{u}^T + \frac{\hbar^2}{4 c_P} \, \vec{v}\vec{v}^T + A' \qquad B = \frac{\hbar^2}{4 c_Q} \, \vec{u}\vec{u}^T + c_P \, \vec{v}\vec{v}^T + B' ,
\end{align*}
where $A'$ and $B'$ are any two scalar multiples of the orthogonal projection onto $\{\vec{u},\vec{v}\}^\perp$ satisfying $B'\geq \frac {\hbar^2}4\,A^{\prime\,-1}$ when restricted to $\{\vec{u},\vec{v}\}^\perp$;
\item if $\cos \alpha \notin\{0,\pm 1\}$, we choose
\begin{align*}
A & = c_Q \left[\vec{u}\vec{u}^T - \frac{1}{\cos\alpha}\, (\vec{u}\vec{v}^T + \vec{v}\vec{u}^T) + \frac{2}{(\cos\alpha)^2} \, \vec{v}\vec{v}^T\right] + A' \\
B & = \frac{c_P}{(\sin\alpha)^4} \left[\frac{(\sin\alpha)^2+(\cos\alpha)^4}{(\cos\alpha)^2} \, \vec{u}\vec{u}^T - \frac{1}{\cos\alpha}\, (\vec{u}\vec{v}^T + \vec{v}\vec{u}^T) + \vec{v}\vec{v}^T\right] + B' ,
\end{align*}
where $A'$ and $B'$ are as in the previous item.
\end{enumerate}
In the last two cases, we chose $A$ and $B$ in such a way that $B = \frac{c_Q \, c_P}{(\cos\alpha)^2} A^{-1}$ when restricted to the linear span of $\{\vec{u},\vec{v}\}$.
\end{proof}

\subsection{Weyl operators and Gaussian states}
In the following, we shall introduce Gaussian states, Gaussian observables and covariant observables on the phase-space. In all these instances, the Weyl operators are involved; here we recall their definition and some properties (see e.g.~\cite[Sect.\ 5.2]{Hol11} or \cite[Sect.\ 12.2]{Hol12}, where, however, the definition differs from ours in that the Weyl operators are composed with the map $\Omega^{-1}$ of \eqref{def:Omega}).

\begin{definition}
The \emph{Weyl operators} are the unitary operators defined by
\begin{equation} \label{Weyl}
W(\vec x,\vec p):= \exp\left\{\frac \rmi \hbar \left(\vec p\cdot \vec Q - \vec x\cdot \vec P\right)\right\}=\prod_{j=1}^n\rme^{\frac \rmi \hbar \left(p_jQ_j - x_jP_j\right)} = \prod_{j=1}^n\left(\rme^{\frac \rmi \hbar \,p_jQ_j}
\rme^{-\frac \rmi \hbar \,x_jP_j}\rme^{-\frac {\rmi  x_j p_j}{2\hbar}}\right).
\end{equation}
\end{definition}

The Weyl operators \eqref{Weyl} satisfy the composition rule
\[
W(\vec{x}_1,\vec{p}_1)W(\vec{x}_2,\vec{p}_2) = \exp\left\{-\frac{\rmi}{2\hbar}\left(\vec{x}_1\cdot\vec{p}_2 - \vec{x}_2\cdot\vec{p}_1\right)\right\} W(\vec{x}_1+\vec{x}_2,\vec{p}_1+\vec{p}_2);
\]
in particular, this implies the commutation relation
\begin{equation}\label{eq:comm}
W(\vec{x}_1,\vec{p}_1)W(\vec{x}_2,\vec{p}_2) = \exp\left\{-\rmi\begin{pmatrix}\vec{x}_1^T &\vec{p}_1^T \end{pmatrix}\Omega^{-1} \begin{pmatrix}\vec{x}_2 \\ \vec{p}_2 \end{pmatrix}\right\} W(\vec{x}_2,\vec{p}_2)W(\vec{x}_1,\vec{p}_1) .
\end{equation}
These commutation relations  imply the translation property
\begin{equation}\label{transl}
{W(\vec{x},\vec{p})}^*\, Q_i W(\vec{x},\vec{p}) = Q_i + x_i, \qquad {W(\vec{x},\vec{p})}^*\, P_i W(\vec{x},\vec{p}) = P_i + p_i , \qquad  i=1,\ldots,n ;
\end{equation}
due to this property, the Weyl operators are also known as \emph{displacement operators}.

With a slight abuse of notation, we shall sometimes use the identification
\begin{equation}\label{identification}
W(\vec x,\vec p)\equiv W\left(\begin{pmatrix} \vec{x} \\ \vec{p} \end{pmatrix}\right),
\end{equation}
where $\begin{pmatrix} \vec{x} \\ \vec{p} \end{pmatrix}$ is a block column vector belonging to the phase-space $\Rbb^n \times \Rbb^n\equiv \Rbb^{2n}$; here, the first block $\vec x$ is a position and the second block $\vec p$ is a momentum.

By means of the Weyl operators, it is possible to define the characteristic function of any trace-class operator.

\begin{definition}\label{def:charfunct} For any operator $\rho\in \Tscr(\Hscr)$, its \emph{characteristic function} is the complex valued function $\widehat \rho: \Rbb^{2n} \to \Cbb$ defined by
\begin{equation}\label{rho(w)}
\widehat \rho(w):= \Tr\left\{\rho W(-\Omega w)\right\},
\qquad w\equiv\begin{pmatrix}\vec k \\ \vec l \end{pmatrix}.
\end{equation}
\end{definition}
Note that $\vec k$ is the inverse of a length and $\vec l$ is the inverse of a momentum, so that $w$ is a block vector living in the space $\Rbb^{2n} \equiv \Rbb^n \times \Rbb^n$ regarded as the dual of the phase-space.

Instead of the characteristic function, sometimes the so called Weyl transform $\Tr \left\{ W(\vec x,\vec p)\rho\right\}$ is introduced \cite{KiuS13,Hol11}.

By \cite[Prop.\ 5.3.2, Theor.\ 5.3.3]{Hol11},
we have $\widehat \rho(w)\in L^2(\Rbb^{2n})$ and the following trace formula holds: $\forall \rho, \sigma \in \Tscr(\Hscr)$,
\begin{equation}\label{traceform}
\Tr\{\sigma^*\rho\}=\left(\frac \hbar{2\pi}\right)^n\int_{\Rbb^{2n}} \overline{\widehat \sigma(w)}\,\widehat \rho(w)\,\rmd  w.
\end{equation}
As a corollary
\cite[Coroll.\ 5.3.4]{Hol11}, we have that a state $\rho\in\Sscr$ is pure if and only if
\[
\left(\frac \hbar{2\pi}\right)^n\int_{\Rbb^{2n}} \abs{\widehat \rho(w)}^2\rmd w=1.
\]
By \cite[Lemma 3.1]{Wer83}, \cite[Prop.~8.5.(e)]{BLPY16}, the trace formula also implies
\begin{equation}\label{int=tr}
\frac{1}{(2\pi\hbar)^n} \int_{\Rbb^{2n}} W(\vec{x},\vec{p}) \rho {W(\vec{x},\vec{p})}^*\, \rmd\vec{x}\rmd\vec{p} = \Tr\{\rho\}\openone, \qquad \forall\rho\in\Tscr(\Hscr) \,.
\end{equation}
Moreover, the following inversion formula ensures that the characteristic function $\widehat \rho$ completely characterizes the state $\rho$ \cite[Coroll.\ 5.3.5]{Hol11}:
\[
\rho=\left(\frac \hbar{2\pi}\right)^n\int_{\Rbb^{2n}} W(\Omega w)\, \widehat \rho(w)\rmd  w, \qquad \forall\rho\in\Tscr(\Hscr) \,.
\]
The last two integrals are defined in the weak operator topology.

Finally, for $\rho\in \Sscr_2$, the moments \eqref{xstateVAR}--\eqref{muVV}
can be expressed as in \cite[Sect.\ 5.4]{Hol11}:
\begin{equation}\label{chmom}
-\rmi \, \frac{\partial \widehat \rho(w)}{\partial w_i}\Big|_{0}= \mu_i^\rho, \qquad
- \frac{\partial^2 \widehat \rho(w)}{\partial w_i\partial w_j}\Big|_{0}= V^\rho_{ij}+ \mu_i^\rho \mu_j^\rho.
\end{equation}

\begin{definition}[\cite{SimSM87,SimMD94,KiuS13,Hol01,Hol11,Hol12}]\label{def:Gstates}
A state $\rho$ is \emph{Gaussian} if
\begin{equation}\label{eq:char_gauss}\begin{split}
\widehat \rho(w)&=\exp\left\{\rmi  w^T \mu^\rho -\frac 1{2}\, w^TV^\rho w\right\}
\\ {} &= \exp\left\{\rmi  \left(\vec k \cdot \vec a^\rho + \vec l \cdot \vec b^\rho\right)-\frac 1{2}\left( \vec k \cdot A^\rho \vec k + \vec l \cdot B^\rho \vec l\right) - \vec k\cdot C^\rho \vec l \right\},
\end{split}\end{equation}
for a vector $\mu^\rho\in\Rbb^{2n}$ and a real $2n\times2n$ matrix $V^\rho
$ such that $V_+^\rho\geq0$.
\end{definition}

The condition $V_+^\rho\geq0$ is necessary and sufficient in order that the function \eqref{eq:char_gauss} defines the characteristic function of a quantum state \cite[Theor.~5.5.1]{Hol11}, \cite[Theor.~12.17]{Hol12}. Therefore, Gaussian states are exactly the states whose characteristic function is the exponential of a second order polynomial \cite[Eq.\ (5.5.49)]{Hol11}, \cite[Eq.\ (12.80)]{Hol12}.

We shall denote by  $\Gscr$\marginpar{$\Gscr$} the set of the Gaussian states; we have $\Gscr\subset \Sscr_2\subset \Sscr$. By \eqref{chmom}, the vectors $\vec a^\rho$, $\vec b^\rho$ and the matrices $A^\rho$, $B^\rho$, $C^\rho$ characterizing a Gaussian state $\rho$ are just its first and second order quantum moments introduced in \eqref{xstateVAR}--\eqref{xpstateVAR}.
By \eqref{eq:char_gauss}, the corresponding distributions of position and momentum are Gaussian, namely
\begin{equation}\label{Gsharpdistr}
\Qo_\rho=\Ncal(\vec a^\rho;A^\rho),\quad \Qo_{\vec u,\rho}=\Ncal(\vec u\cdot\vec a^\rho;\vec u\cdot A^\rho\vec u),\quad \Po_\rho=\Ncal(\vec b^\rho;B^\rho),\quad \Po_{\vec v,\rho}=\Ncal(\vec v\cdot\vec b^\rho;\vec v\cdot B^\rho\vec v).
\end{equation}

\begin{proposition}[Pure Gaussian states] For $\rho\in \Gscr$, we have  $\det V^\rho=\left(\frac \hbar 2 \right)^{2n}$ if and only if $\rho$ is pure.
\end{proposition}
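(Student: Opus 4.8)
The plan is to compute $\Tr\{\rho^2\}$ directly from the characteristic function and then invoke the two standard facts recalled above: a state is pure precisely when $\Tr\{\rho^2\}=1$, and (by the corollary to the trace formula) $\rho$ is pure if and only if $\left(\frac{\hbar}{2\pi}\right)^n\int_{\Rbb^{2n}}|\widehat\rho(w)|^2\,\rmd w=1$. So it suffices to evaluate this last integral for a Gaussian $\rho$ and compare with $1$.

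First I would observe that for a Gaussian state the modulus of the characteristic function is itself a Gaussian: since $\mu^\rho\in\Rbb^{2n}$ and $V^\rho$ is real symmetric, \eqref{eq:char_gauss} gives $|\widehat\rho(w)|^2=\exp\{-w^TV^\rho w\}$, the oscillatory factor $\exp\{\rmi\,w^T\mu^\rho\}$ having modulus one. Next I would use the previous propositions to guarantee that $V^\rho$ is not merely positive semi-definite but strictly positive definite: indeed $A^\rho>0$ and, by \eqref{weakineqB>}, $B^\rho\geq\frac{\hbar^2}{4}(A^\rho)^{-1}>0$, while \eqref{uncdetdet} gives $\det V^\rho\geq(\hbar/2)^{2n}>0$; since $V^\rho\geq0$ this forces $V^\rho>0$. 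Hence the Gaussian integral converges and equals $\int_{\Rbb^{2n}}\exp\{-w^TV^\rho w\}\,\rmd w=\pi^n/\sqrt{\det V^\rho}$.

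Combining these steps,
\[
\Tr\{\rho^2\}=\left(\frac{\hbar}{2\pi}\right)^n\int_{\Rbb^{2n}}|\widehat\rho(w)|^2\,\rmd w=\left(\frac{\hbar}{2\pi}\right)^n\frac{\pi^n}{\sqrt{\det V^\rho}}=\sqrt{\frac{(\hbar/2)^{2n}}{\det V^\rho}}\,.
\]
Thus $\Tr\{\rho^2\}=1$ if and only if $\det V^\rho=(\hbar/2)^{2n}$, which, recalling that purity of a state is equivalent to $\Tr\{\rho^2\}=1$, is exactly the assertion. (As a consistency check, the bound $\det V^\rho\geq(\hbar/2)^{2n}$ of \eqref{uncdetdet} is just a restatement of $\Tr\{\rho^2\}\leq1$.)

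The only genuinely delicate point is the Gaussian-integral step — i.e.\ making sure $V^\rho>0$ so that the integral is finite and given by the classical formula; but this is immediate from the determinant and positivity bounds already established for quantum variance matrices. The rest is a single-line computation, so I do not expect any real obstacle.
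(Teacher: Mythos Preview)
Your proof is correct and follows essentially the same route as the paper: both compute $\Tr\{\rho^2\}$ via the trace formula \eqref{traceform} applied to the Gaussian characteristic function \eqref{eq:char_gauss}, obtaining $\Tr\{\rho^2\}=(\hbar/2)^n/\sqrt{\det V^\rho}$, from which the equivalence is immediate. The paper's version is terser, leaving the positivity of $V^\rho$ and the Gaussian integral implicit, but the argument is the same.
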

\begin{proof}
The trace formula \eqref{traceform} and \eqref{eq:char_gauss} give $\Tr\{\rho^2\} = \frac{(\hbar/2)^n}{\sqrt{\det V^\rho}}$, and this implies the statement.
\end{proof}

\begin{proposition}[Minimum uncertainty states]\label{prop:minun} For $\rho\in \Sscr_2$, we have $(\det A^\rho)(\det B^\rho)=\left(\frac \hbar 2 \right)^{2n}$ if and only if $\rho$ is a pure Gaussian state and it factorizes into the product of minimum uncertainty states up to a rotation of $\Rbb^n$.
\end{proposition}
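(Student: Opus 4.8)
The plan is to prove both implications by bringing $\rho$, via a symplectic (metaplectic) change of variables, to the $n$-mode vacuum, which is the unique state annihilated by the total number operator. For the \emph{necessity}, recall that $\rho\in\Sscr_2$ implies $V^\rho_+\geq0$ and $A^\rho>0$ (see \eqref{uncertmatrix}), so \eqref{mus2} applies and the hypothesis $(\det A^\rho)(\det B^\rho)=(\hbar/2)^{2n}$ is equivalent to $B^\rho=\frac{\hbar^2}{4}(A^\rho)^{-1}$ and $C^\rho=0$. I would then pass to a unitarily equivalent state: using that every symplectic $S\in Sp(2n,\Rbb)$ is implemented on $\Hscr$ by a unitary $U_S$ acting on $\vec Q,\vec P$ by the corresponding real linear combinations (metaplectic representation, available because $\Hscr$ is irreducible for $\vec Q,\vec P$), and that translations are the Weyl operators \eqref{transl}, I diagonalize $A^\rho=RDR^T$ by a rotation $R$ ($D$ positive diagonal), rescale each mode by a single-mode squeezing, and translate to zero means, obtaining $\rho'=U\rho U^*$ with $\mu^{\rho'}=0$ and $V^{\rho'}=\frac\hbar2\openone_{2n}$, i.e. $\Var_{\rho'}(Q_i)=\Var_{\rho'}(P_i)=\hbar/2$ and all covariances vanishing. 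Under these conjugations $\mu^\rho$ transforms affinely and $V^\rho$ as $V^\rho\mapsto SV^\rho S^T$, while membership in $\Sscr_2$, purity and Gaussianity are preserved (Weyl and metaplectic operators map Weyl operators to Weyl operators, hence characteristic functions to exponentials of quadratics).

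With $a_i:=(2\hbar)^{-1/2}(Q_i+\rmi P_i)$, the total number operator is $N:=\sum_{i=1}^n a_i^*a_i=\frac1{2\hbar}\sum_{i=1}^n(Q_i^2+P_i^2)-\frac n2\geq0$, and the moment conditions on $\rho'$ give $\Tr\{\rho'N\}=\frac1{2\hbar}\sum_{i=1}^n\big(\Var_{\rho'}(Q_i)+\Var_{\rho'}(P_i)\big)-\frac n2=0$. Writing $\rho'=\sum_k p_k|\phi_k\rangle\langle\phi_k|$ with $p_k>0$, this forces $\sum_i\|a_i\phi_k\|^2=0$, hence $a_i\phi_k=0$ for all $i,k$; since the common kernel of $a_1,\dots,a_n$ is the line spanned by the vacuum vector $|\vec 0\rangle$, we get $\rho'=|\vec 0\rangle\langle\vec 0|$, the $n$-mode vacuum. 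Undoing $U$: the translation turns $|\vec 0\rangle\langle\vec 0|$ into a product of one-mode coherent states, the single-mode squeezings keep it a product of one-mode minimum uncertainty states, and finally the metaplectic operator of $R$ is applied; hence $\rho$ is a pure Gaussian state factorizing into one-mode minimum uncertainty states up to the rotation $R$.

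For the \emph{sufficiency}, write $\rho=U_R^*\big(\bigotimes_{i=1}^n\sigma_i\big)U_R$ with $U_R$ the metaplectic operator of a rotation $R$ and each $\sigma_i$ a one-mode minimum uncertainty state, so $\Var_{\sigma_i}(Q)\Var_{\sigma_i}(P)=\hbar^2/4$. For $\rho_1:=\bigotimes_i\sigma_i$ the operators $Q_i$ and $Q_j$ (resp. $P_i$ and $P_j$, resp. $Q_i$ and $P_j$) with $i\neq j$ act on distinct tensor factors, so $A^{\rho_1}$ and $B^{\rho_1}$ are diagonal with diagonal entries $\Var_{\sigma_i}(Q)$ and $\Var_{\sigma_i}(P)$; hence $(\det A^{\rho_1})(\det B^{\rho_1})=\prod_{i=1}^n\Var_{\sigma_i}(Q)\Var_{\sigma_i}(P)=(\hbar/2)^{2n}$. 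Since $A^\rho=R^TA^{\rho_1}R$ and $B^\rho=R^TB^{\rho_1}R$ with $R$ orthogonal, the determinants are unchanged and $(\det A^\rho)(\det B^\rho)=(\hbar/2)^{2n}$. The routine parts (the transformation rules for $\mu^\rho$ and $V^\rho$, the choice of squeezing parameters, the $\Tr\{\rho'N\}=0$ arithmetic) are straightforward; the point demanding care is the systematic use of the metaplectic representation and the verification that conjugation by the chosen unitaries keeps the state in $\Sscr_2$ and preserves purity, Gaussianity and — at the final step — the product-over-modes structure.
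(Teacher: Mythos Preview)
Your argument is correct, but it takes a genuinely different route from the paper. The paper's proof of the necessity is much shorter: from \eqref{mus2} one has $B^\rho=\frac{\hbar^2}{4}(A^\rho)^{-1}$, so $A^\rho$ and $B^\rho$ share an orthonormal eigenbasis $\vec u_1,\ldots,\vec u_n$; along each direction $\vec u_i$ the product $\Var(\Qo_{\vec u_i,\rho})\Var(\Po_{\vec u_i,\rho})$ equals $\hbar^2/4$, so in the tensor factorization $\Hscr=\Hscr_1\otimes\cdots\otimes\Hscr_n$ associated with this basis every partial trace $\rho_i$ is a one-mode minimum uncertainty state, hence pure and Gaussian; a state whose marginals are all pure must be their tensor product, and the result follows. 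The converse is stated as immediate.

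You instead reduce globally to the vacuum by a symplectic (metaplectic) transformation and then use $\Tr\{\rho'N\}=0$ to force $\rho'=|\vec 0\rangle\langle\vec 0|$. This is heavier in machinery (metaplectic representation, preservation of $\Sscr_2$ and Gaussianity under these unitaries) but has the advantage of being self-contained: you do not need to quote the $n=1$ classification of minimum uncertainty states nor the ``pure marginals $\Rightarrow$ product'' lemma, because the number-operator argument does all the work at once. The paper's approach is more elementary and transparent about where the rotation comes from (the eigenbasis of $A^\rho$), while yours makes the reduction-to-vacuum structure explicit and handles purity and Gaussianity in a single stroke.
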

\begin{proof}
If $(\det A^\rho)(\det B^\rho)=\left(\frac \hbar 2 \right)^{2n}$, then  the equivalence \eqref{mus2} gives $B^\rho=\frac {\hbar^2}4\,(A^\rho)^{-1}$, so that the variance matrices $A^\rho$ and $B^\rho$ have a common eigenbasis $\vec u_1,\ldots,\vec u_n$. Thus, all the corresponding couples of position $\Qo_{\vec u_i}$ and momentum $\Po_{\vec u_i}$ have minimum uncertainties: $\Var(\Qo_{\vec u_i})\,\Var(\Po_{\vec u_i})=\frac{\hbar^2}{4}$. Therefore, if we consider the factorization of the Hilbert space $\Hscr=\Hscr_1\otimes\cdots\otimes\Hscr_n$ corresponding to the basis $\vec u_1,\ldots,\vec u_n$, all the partial traces of the state $\rho$ on each factor $\Hscr_i$ are minimum uncertainty states. Since for $n=1$ the minimum uncertainty states are pure and Gaussian, the state $\rho$ is a pure product Gaussian state.

The converse is immediate.
\end{proof}

\section{Relative and differential entropies} \label{sec:rel+diff_ent}

In this paper, we will be concerned with entropic quantities of classical type \cite{Top01,BuAn02,CovT06}.
We express them in `bits', that is we use the base-2 logarithms: $\log a \equiv \log_2 a$.

We deal only with probabilities on the measurable space $\big(\Rbb^n,\bor{\Rbb^n}\big)$ which admit densities with respect to the Lebesgue measure. So, we define the relative entropy and differential entropy only for such probabilities; moreover, we list only the general properties used in the following.

\subsection{Relative entropy or Kullback-Leibler divergence}\label{relentr}
The fundamental quantity is the \emph{relative entropy}, also called \emph{information divergence}, \emph{discrimination information}, \emph{Kullback-Leibler divergence} or \emph{information} or \emph{distance} or \emph{discrepancy}. The relative entropy of a probability $p$ with respect to a probability $q$ is defined for any couple of probabilities $p$, $q$ on the same probability space.

Given two probabilities $p$ and $q$ on $(\Rbb^n,\bor{\Rbb^n})$ with densities $f$ and $g$, respectively, the \emph{relative entropy} of $p$ with respect to $q$ is
\begin{equation}\label{def:relent}
S(p\|q)=\int_{\Rbb^n}f({\vec x})\log \frac{f({\vec x})}{g({\vec x})} \rmd  {\vec x}.
\end{equation}
The value $+\infty$ is allowed for $S(p\|q)$; the usual convention $0\log (0/0)=0$ is understood. The relative entropy \eqref{def:relent} is the amount of information that is lost when $q$ is used to approximate $p$ \cite[p.\ 51]{BuAn02}. Of course, if $\vec x$ is dimensioned, then the densities $f$ and $g$ have the same dimension (that is, the inverse of $\vec x$), and the argument of the logarithm is dimensionless, as it must be.

\begin{proposition}[\cite{CovT06}, Theorem 8.6.1] The following properties hold.
\begin{enumerate}[(i)]
\item $S(p\|q)\geq 0$.
\item $S(p\|q) = 0\qquad \iff \qquad p=q\qquad \iff \qquad f=g \quad\text{a.e.}$.
\item $S(p\|q)$ is invariant under a change of the unit of measurement.
\item If $p=\Ncal(\vec a;A)$ and $q=\Ncal(\vec b;B)$ with invertible variance matrices $A$ and $B$, then
\begin{equation}\label{Grelentr}
2\,S(p\|q)=(\log \rme)\Bigg\{\left(\vec a- \vec b\right)\cdot B^{-1}\left(\vec a- \vec b\right) +\Tr\left\{B^{-1}A-\openone\right\}\Bigg\}
+\log \frac{\det B}{\det A}\,.
\end{equation}
\end{enumerate}
\end{proposition}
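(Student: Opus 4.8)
The plan is to verify the four listed properties one at a time, three of which are essentially bookkeeping and one of which (item (iv)) requires an honest Gaussian integral computation. Properties (i)--(iii) are standard: non-negativity (i) follows from Jensen's inequality applied to the convex function $-\log$, writing $S(p\|q)=\int f\log(f/g)=-\int f\log(g/f)\geq -\log\int f\,(g/f)=-\log\int_{\{f>0\}}g\geq 0$; equality in Jensen forces $g/f$ constant a.e.\ on $\{f>0\}$, which together with $\int g=1$ gives $f=g$ a.e., establishing (ii). For (iii), a change of unit of measurement is a rescaling $\vec x\mapsto \lambda\vec x$ (or more generally an invertible linear change of variables), under which both densities transform by the same Jacobian factor; since only the ratio $f/g$ enters the integrand and the integral is over the whole space, the Jacobians cancel and $S(p\|q)$ is unchanged. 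These can be dispatched quickly or simply cited to \cite{CovT06}.

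The substantive part is the closed form \eqref{Grelentr} for two non-degenerate Gaussians. First I would substitute the explicit densities $f(\vec x)=(2\pi)^{-n/2}(\det A)^{-1/2}\exp\{-\tfrac12(\vec x-\vec a)\cdot A^{-1}(\vec x-\vec a)\}$ and the analogous $g$ into \eqref{def:relent}. The logarithm of the ratio $f/g$ is, up to the factor $\log\rme$ converting the natural logarithm to base $2$,
\[
\log\frac{f(\vec x)}{g(\vec x)}=(\log\rme)\left\{\tfrac12\ln\frac{\det B}{\det A}-\tfrac12(\vec x-\vec a)\cdot A^{-1}(\vec x-\vec a)+\tfrac12(\vec x-\vec b)\cdot B^{-1}(\vec x-\vec b)\right\}.
\]
Integrating against $f(\vec x)\,\rmd\vec x$, the constant term survives as $\tfrac12\log\frac{\det B}{\det A}$; the term $\E_p[(\vec x-\vec a)\cdot A^{-1}(\vec x-\vec a)]$ equals $\Tr\{A^{-1}A\}=n$ by the definition of the covariance matrix; and for the last term I would write $\vec x-\vec b=(\vec x-\vec a)+(\vec a-\vec b)$, expand the quadratic form, note the cross term vanishes because $\E_p[\vec x-\vec a]=0$, and obtain $\E_p[(\vec x-\vec b)\cdot B^{-1}(\vec x-\vec b)]=\Tr\{B^{-1}A\}+(\vec a-\vec b)\cdot B^{-1}(\vec a-\vec b)$. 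Collecting terms and multiplying by $2$ yields exactly \eqref{Grelentr}, since $-n+\Tr\{B^{-1}A\}=\Tr\{B^{-1}A-\openone\}$.

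The only mild obstacle is keeping the elementary Gaussian moment facts straight — namely $\E_p[\vec x]=\vec a$ and $\E_p[(x_i-a_i)(x_j-a_j)]=A_{ij}$, hence $\E_p[(\vec x-\vec a)\cdot M(\vec x-\vec a)]=\Tr\{MA\}$ for any matrix $M$ — and being careful that the paper's logarithms are base $2$, so every natural logarithm arising from the exponential densities must be accompanied by the conversion factor $\log\rme=1/\ln 2$. No convergence issue arises because $A$ and $B$ are assumed invertible, so all the integrands are integrable against the Gaussian weight. I would present (i)--(iii) in a sentence or two with a reference and devote the bulk of the proof to the computation just sketched.
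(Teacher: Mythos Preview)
Your proof is correct, but note that the paper does not actually give its own proof of this proposition: it simply cites \cite{CovT06}, Theorem 8.6.1, and moves on. Your argument is the standard one --- Jensen's inequality for (i)--(ii), cancellation of Jacobians for (iii), and the direct Gaussian moment computation for (iv) --- and is exactly what one would find by unwinding the reference, so there is nothing to compare beyond observing that you have supplied the details the paper chose to outsource.
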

As $S(p\|q)$ is scale invariant, it quantifies a relative error for the use of $q$ as an approximation of $p$, not an absolute one.

Let us employ the relative entropy to evaluate the effect of an additive Gaussian noise $\nu\sim\Ncal(b;\beta^2)$ on an independent Gaussian random variable $X$. If $X\sim\Ncal(a;\alpha^2)$, then $X+\nu\sim\Ncal(a+b;\alpha^2+\beta^2)$, and the relative entropy of the true distribution of $X$ with respect to its disturbed version $X+\nu$ is
$$S(X\|X+\nu)=\frac{\log \rme}{2}\,\frac{b^2-\beta^2}{\alpha^2+\beta^2} +\frac12\log \frac{\alpha^2+\beta^2}{\alpha^2}.$$
This expression vanishes if the noise becomes negligible with respect to the true distribution, that is if $\beta^2/\alpha^2\to0$ and $b^2/\alpha^2\to0$. On the other hand, $S(X\|X+\nu)$ diverges if the noise becomes too strong with respect to the true distribution, or, in other words, if the true distribution becomes too peaked with respect to the noise, that is, $\beta^2/\alpha^2\to+\infty$ or $b^2/\alpha^2\to+\infty$.

\subsection{Differential entropy}\label{sec:diffent}
The \emph{differential entropy} of an absolutely continuous random vector $\vec X$ with a probability density $f$ is
\[
H(\vec X):= -\int_{\Rbb^n} f({\vec x})\log f({\vec x}) \rmd  \vec  x .
\]
This quantity is commonly used in the literature, even if it lacks many of the nice properties of the Shannon entropy for discrete random variables. For example, $H(\vec X)$ is not scale invariant, and it can be negative \cite[p.\ 244]{CovT06}.

Since the density $f$ enters in the logarithm argument, the definition of $H(\vec X)$ is meaningful only when $f$ is dimensionless, which is the same as $\vec X$ being dimensionless. Note that, if $\vec X$ is dimensioned and $c >0$ is a real parameter making $\widetilde{\vec X}=c \vec X$ a dimensionless random variable, then
\[
H(\widetilde{\vec X})=-\int_{\Rbb^n} \frac{f(\vec u/c)}{c^n}\, \log \frac{f(\vec u/c)}{c^n} \,\rmd  \vec u
=-\int_{\Rbb^n} f(\vec x) \log \frac{f(\vec x)}{c^n} \,\rmd  \vec x \,.
\]
In the following, we shall consider the differential entropy only for dimensionless random vectors $\vec X$.
\begin{proposition}[\cite{CovT06}, Section 8.6]\label{prop:Hprops}
The following properties hold.
\begin{enumerate}[(i)]

\item If $\vec X$ is an absolutely continuous random vector with
variance matrix $A$, then
\[
H(\vec X)\leq \frac 12 \,\log \Big((2\pi\rme)^n \det A\Big)=
\frac n2\, \log \left(2\pi \rme\right)+\frac 12\, \Tr \log  A .
\]
The equality holds iff $\vec X$ is Gaussian with variance matrix $A$ and arbitrary mean vector $\vec a$.
\item If $\vec X=(X_1,\ldots,X_n)$ is an absolutely continuous random vector,
    then
\[
H(\vec X)\leq\sum_{i=1}^nH(X_i).
\]
The equality holds iff the components $X_1,\ldots,X_n$ are independent.
\end{enumerate}
\end{proposition}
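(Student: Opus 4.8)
The plan is to obtain both inequalities as instances of the non-negativity of the relative entropy and of its equality case, as recorded in the proposition of Section~\ref{relentr}: in each part I will rewrite the gap between the two sides of the claimed inequality \emph{exactly} as a relative entropy $S(f\|g)\geq 0$ between the law of $\vec X$ and a suitable reference law $g$, so that equality holds precisely when $f=g$ almost everywhere.

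For part (i) I would take as reference the Gaussian density $g=\Ncal(\vec a;A)$, where $\vec a$ is the mean vector of $\vec X$ and $A$ its variance matrix; here $A$ is invertible, since if $\vec v\cdot A\vec v=0$ for some $\vec v\neq\vec 0$ then $\vec v\cdot\vec X$ would be almost surely constant, contradicting absolute continuity. Expanding, $0\leq S(f\|g)=-H(\vec X)-\int f\log g$. The point is that $\log g(\vec x)$ is a second-order polynomial in $\vec x$, so $\int f\log g$ depends on $f$ only through its first and second moments: explicitly $\log g(\vec x)=-\tfrac12\log\big((2\pi)^n\det A\big)-\tfrac{\log\rme}2\,(\vec x-\vec a)\cdot A^{-1}(\vec x-\vec a)$, and $\int f(\vec x)\,(\vec x-\vec a)\cdot A^{-1}(\vec x-\vec a)\,\rmd\vec x=\Tr\{A^{-1}A\}=n$ by the very definition of the variance matrix, so that $-\int f\log g=\tfrac12\log\big((2\pi\rme)^n\det A\big)$. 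This yields the bound; the second form of the right-hand side is just $\log\det A=\Tr\log A$, and equality holds iff $S(f\|g)=0$, i.e.\ iff $\vec X$ is Gaussian with variance matrix $A$ (the value of $H$ being independent of the mean).

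For part (ii) I would instead compare the joint density $f$ of $\vec X=(X_1,\dots,X_n)$ with the product $\prod_{i=1}^n f_i$ of its marginal densities. Again $0\leq S\big(f\,\|\,\prod_i f_i\big)=-H(\vec X)-\sum_{i=1}^n\int f\log f_i$, and integrating out the remaining coordinates gives $\int f(\vec x)\log f_i(x_i)\,\rmd\vec x=\int f_i(x_i)\log f_i(x_i)\,\rmd x_i=-H(X_i)$ for each $i$; hence $H(\vec X)\leq\sum_i H(X_i)$, with equality iff $f=\prod_i f_i$ almost everywhere, i.e.\ iff $X_1,\dots,X_n$ are independent.

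I do not expect a genuine obstacle: once the non-negativity and equality properties of $S$ are available, everything reduces to the two elementary moment computations above. The only points requiring a little care are the bookkeeping with base-$2$ logarithms (the factor $\log\rme$ produced when the natural logarithm inside the Gaussian exponential is converted) and the tacit integrability hypotheses under which $H(\vec X)$ and the relevant relative entropies are well defined. One could alternatively prove part (i) as a maximum-entropy statement via a Lagrange multiplier for the second-moment constraint, but the relative-entropy route is shorter and reuses machinery already set up in Section~\ref{relentr}.
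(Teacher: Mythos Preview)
Your proposal is correct and is precisely the standard argument: rewriting the gap as a relative entropy $S(f\|g)$ with $g$ the Gaussian of matching first and second moments in (i), and with $g$ the product of the marginals in (ii), then invoking $S(\cdot\|\cdot)\geq 0$ and its equality case from Section~\ref{relentr}. The paper itself gives no proof of this proposition at all --- it simply records it as a quotation from \cite[Section~8.6]{CovT06} --- so there is nothing to compare against beyond noting that what you wrote is essentially the proof one finds in that reference.
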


\begin{remark}\label{rem:logdet}
In property (i) we have used the following well-known matrix identity, which follows by diagonalization:
\[
\log \det A=\Tr \log A, \qquad \forall A> 0.
\]
\end{remark}
\begin{remark}
Property (i) yields that the differential entropy of a Gaussian random variable $X\sim\Ncal(a;\alpha^2)$ is
$$
H(X)=\frac 12\, \log\left(2\pi \rme\alpha^2\right),
$$
which is an increasing function of the variance $\alpha^2$, and thus it is a measure of the uncertainty of $X$. Note that $H(X)\geq 0$ iff $\displaystyle\alpha^2\geq  1 /(2\pi\rme)$.
\end{remark}

\section{Entropic PURs for position and momentum}\label{sec:PUR}

The idea of having an entropic formulation of the PURs for position and momentum goes back to \cite{Hir57,Beck75,B-BM75}. However, we have just seen that, due to the presence of the logarithm, the Shannon differential entropy needs dimensionless probability densities. So, this leads us to introduce dimensionless versions of position and momentum.

Let $\lambda>0$ be a dimensionless parameter and $\varkappa$ a second parameter with the dimension of a mass times a frequency. Then, we introduce the dimensionless versions of position and momentum:
\begin{equation}\label{vecadim}
\widetilde{\vec Q}:=\sqrt{\frac \varkappa \hbar}\, \vec Q, \qquad \widetilde{ \vec P}=\frac {\lambda}{\sqrt
{\hbar \varkappa}}\,\vec P \qquad \Rightarrow \quad \left[\widetilde Q_i,\, \widetilde P_j\right]=\rmi \lambda\delta_{ij}.
\end{equation}
We use a unique dimensional constant $\varkappa$, in order to respect rotation symmetry and do not distinguish different particles.
Anyway, there is no natural link between the parameter multiplying $\vec Q$ and the parameter  multiplying $\vec P$; this is the reason for introducing $\lambda$.
As we see from the commutation rules, the constant $\lambda$ plays the role of a dimensionless version of $\hbar$; in the literature on PURs, often $\lambda=1$ is used \cite{B-BM75,Beck75,ColesBTW17}.

\subsection{Vector observables}\label{sec:Hvecobs}

Let $\widetilde\Qo$ and $\widetilde\Po$ be the pvm's of $\widetilde{\vec Q}$ and $\widetilde{\vec P}$; then, $\widetilde\Qo_\rho$ and $\widetilde\Po_\rho$ are their probability distributions in the state $\rho$. The total preparation uncertainty is quantified by the sum of the two differential entropies $H(\widetilde\Qo_\rho)+ H(\widetilde\Po_\rho)$. For $\rho\in \Gscr$, by Proposition \ref{prop:Hprops} we get
\begin{equation}\label{gaussdiffent}
H(\widetilde\Qo_\rho)+ H(\widetilde\Po_\rho)= n\log \left(\pi \rme\lambda\right)+
\frac 12 \,\log \left[\left(\frac{4}{\hbar^2}\right)^n\left(\det A^\rho\right)\left(\det B^\rho\right)\right].
\end{equation}
In the case of product states of minimum uncertainty, we have $\left(\det A^\rho\right)\left(\det B^\rho\right)=\left(\hbar^2/4\right)^n$; then, by taking \eqref{uncdetdet} into account, we get
\begin{equation}\label{entunc}
\inf_{\rho\in \Gscr}\Big\{H(\widetilde\Qo_\rho)+ H(\widetilde\Po_\rho)\Big\}= n\log \left(\pi \rme\lambda\right).
\end{equation}
Thus, the bound \eqref{entunc} arises from quantum relations between $\vec Q$ and $\vec P$; indeed, there would be no lower bound for \eqref{gaussdiffent} if we could take both $\det A^\rho$ and $\det B^\rho$ arbitrarily small.

By item (ii) of Proposition \ref{prop:Hprops}, the differential entropy for the distribution of a random vector is smaller than the sum of the entropies of its marginals; however, the final bound \eqref{entunc} is a tight bound for both $H(\widetilde\Qo_\rho)+ H(\widetilde\Po_\rho)$ and $\sum_{i=1}^nH(\widetilde\Qo_{i,\rho})+ \sum_{i=1}^nH(\widetilde\Po_{i,\rho})$.

By the results of \cite{B-BM75,Beck75}, the same bound \eqref{entunc} is obtained even if the minimization is done over all the states, not only the Gaussian ones.

The uncertainty result \eqref{entunc} depends on $\lambda$, this being a consequence of the lack of scale invariance of the differential entropy; note that the bound is positive if and only if $\lambda>1/(\pi \rme)$. Sometimes in the literature the parameter $\hbar$ appears in the argument of the logarithm \cite{BusHOW,CF15}; this fact has to be interpreted as the appearance of a parameter with the numerical value of $\hbar$, but without dimensions. In this sense the formulation \eqref{entunc} is consistent with both the cases with $\lambda=1$ or $\lambda=\hbar$. Sometimes the smaller bound $\ln 2\pi$ appears in place of $\log \pi \rme$ \cite{MaaU88}; this is connected to a state dependent formulation of the entropic PUR
\cite[Sect.\ V.B]{ColesBTW17}.

\subsection{Scalar observables}
The dimensionless versions of the scalar observables introduced in \eqref{QuPv} are
\begin{equation}\label{adim}
\widetilde Q_{\vec u}=\sqrt{\frac \varkappa \hbar}\,  Q_{\vec u}, \qquad \widetilde P_{\vec v}=\frac {\lambda}{\sqrt
{\hbar \varkappa}}\, P_{\vec v} \quad \Rightarrow \quad \left[\widetilde Q_{\vec u},\, \widetilde P_{\vec v}\right]=\rmi \lambda\cos\alpha.
\end{equation}
We denote by $\widetilde\Qo_{\vec u,\rho}$ and $\widetilde\Po_{\vec v,\rho}$ the associated distributions in the state $\rho$. For $\rho\in\Sscr_2$, the respective means and variances are
\[
\sqrt{\frac \varkappa \hbar}\,\vec u\cdot\vec a^\rho, \qquad \frac {\lambda}{\sqrt{\hbar \varkappa}}\,\vec v\cdot\vec b^\rho,  \qquad
\Var(\widetilde\Qo_{\vec u,\rho})=\frac \varkappa \hbar\,\vec u \cdot A^\rho \vec u, \qquad
\Var(\widetilde\Po_{\vec v,\rho})=\frac {\lambda^2}{\hbar \varkappa}\,\vec v \cdot B^\rho \vec v,
\]
with $\sqrt{\Var(\widetilde\Qo_{\vec u,\rho})\,\Var(\widetilde\Po_{\vec v,\rho})}\geq \lambda\abs{\cos\alpha}/2$.

As in the vector case, the total preparation uncertainty is quantified by the sum of the two differential entropies $H(\widetilde\Qo_{\vec u,\rho})+ H(\widetilde\Po_{\vec v,\rho})$. For $\rho\in \Gscr$, Proposition \ref{prop:Hprops} gives
\begin{equation}\label{scgaussdiffent}
H(\widetilde\Qo_{\vec u,\rho})+ H(\widetilde\Po_{\vec v,\rho})=\log\left(2\pi\rme \sqrt{\Var(\widetilde\Qo_{\vec u,\rho})\,\Var(\widetilde\Po_{\vec v,\rho})}\right).
\end{equation}
Then, we have the lower bound
\begin{equation}\label{PUR}
\inf_{\rho\in \Gscr}\Big\{H(\widetilde\Qo_{\vec u,\rho})+ H(\widetilde\Po_{\vec v,\rho})\Big\}=\log \left(\pi\rme\lambda\abs{\cos\alpha}\right)
=\frac{1+\ln \left(\pi\abs{\lambda\cos\alpha}\right)}{\ln 2},
\end{equation}
which depends on $\lambda$, but not on  $\varkappa$. Of course, because of \eqref{scgaussdiffent}, for Gaussian states a lower bound for the sum $H(\widetilde\Qo_{\vec u,\rho})+ H(\widetilde\Po_{\vec v,\rho})$ is equivalent to a lower bound for the product $\Var(\widetilde\Qo_{\vec u,\rho})\,\Var(\widetilde\Po_{\vec v,\rho})$. By a slight generalization of the results of \cite{B-BM75,Beck75}, the bound \eqref{PUR} is obtained also when the minimization is done over all the states.

Let us note that the bound in \eqref{PUR} is positive for $\abs{\lambda\cos\alpha}>1/(\pi\rme)$, and it goes to $-\infty$ for $\alpha\to \pi/2$, which is the case of compatible $\Qo_{\vec u,\rho}$ and $\Po_{\vec v,\rho}$.
In the case $\alpha=0$, the bound \eqref{PUR} is the same as \eqref{entunc} for $n=1$.

\section{Approximate joint measurements of position and momentum}\label{jointmeas}

In order to deal with MURs for position and momentum observables, we have to introduce the class of approximate joint measurements of position and momentum, whose marginals we will compare with the respective sharp observables. As done in \cite{Hol01,Hol11,BHL07,CHT04}, it is natural to characterize such a class by requiring suitable properties of covariance under the group of space translations and velocity boosts: namely, by {\em approximate joint measurement of position and momentum} we will mean any POVM on the product space of the position and momentum outcomes sharing the same covariance properties of the two target sharp observables. As we have already discussed, two approximation problems will be of our concern: the approximation of the position and momentum vectors (vector case, with outcomes in the  phase-space $\Rbb^n\times \Rbb^n$), and the approximation of one position and one momentum component along two arbitrary directions (scalar case, with oucomes in $\Rbb\times \Rbb$). In order to treat the two cases altogether, we consider POVMs with outcomes in $\Rbb^m \times \Rbb^m\equiv \Rbb^{2m} $, which we call \emph{bi-observables}; they correspond to a measurement of $m$ position components and $m$ momentum components. The specific covariance requirements will be given in the Definitions \ref{def:cov_ph-sp}, \ref{def:phi_cov}, \ref{def:Jcov}.

In studying the properties of probability measures on $\Rbb^k$, a very useful notion is that of the characteristic function, that is, the Fourier cotransform of the measure at hand; the analogous quantity for POVMs turns out to have the same relevance. Different names have been used in the literature to refer to the characteristic function of POVMs, or, more generally, quantum instruments, such as characteristic operator or operator characteristic function \cite{Hol78,BarL85,BarL91,BarHL93,BarG09,Hol86,Hol01,BarL04,KiuS13}. As a variant, also the symplectic Fourier transform quite often appears \cite[Sect. 12.4.3]{Hol12}. The characteristic function has been used, for instance, to study the quantum analogues of the infinite-divisible distributions \cite{BarL85,Hol86,BarL91,BarHL93,Hol01,BarL04} and measurements of Gaussian type \cite{Hol78,Hol12,KiuS13}. Here, we are interested only in the latter application, as our approximating bi-observables will typically be Gaussian. Since we deal with bi-observables, we limit our definition of the characteristic function only to POVMs on $\Rbb^m\times\Rbb^m$, which have the same number of variables of position and momentum type.

Being measures, POVMs can be used to construct integrals, whose theory is presented e.g.~in \cite[Sect.~4.8]{BLPY16}, \cite[Sect.\ 2.9, Prop.\ 2.9.1]{Hol11}.

\begin{definition}
Given a bi-observable $\Mo : \bor{\Rbb^{2m}}\to\lh$, the \emph{characteristic function} of $\Mo$ is the operator valued function $\widehat{\Mo} : \Rbb^{2m}\to\lh$, with
\begin{equation}\label{eq:def_char}
\widehat{\Mo}(\vec{k},\vec{l}) = \int_{\Rbb^{2m}} \rme^{\rmi(\vec{k}\cdot\vec{x} + \vec{l}\cdot\vec{p})} \Mo(\rmd\vec{x}\rmd\vec{p}) .
\end{equation}
\end{definition}
Here, the dimensions of the vector variables $\vec{k}$ and $\vec{l}$ are the inverses of a length and momentum, respectively, as in the definition of the characteristic function of a state \eqref{rho(w)}. This definition is given so that $\Tr\left\{\widehat{\Mo}(\vec{k},\vec{l})\rho\right\}$ is the usual characteristic function of the probability distribution $\Mo_\rho$ on $\Rbb^{2m}$.

\subsection{Covariant vector observables}

In terms of the pvm's \eqref{eq:specQP}, the translation property \eqref{transl} is equivalent to the symmetry properties
\[
W(\vec{x},\vec{p}) \Qo(A) {W(\vec{x},\vec{p})}^* = \Qo(A+\vec{x}), \qquad W(\vec{x},\vec{p}) \Po(B) {W(\vec{x},\vec{p})}^* = \Po(B+\vec{p}), \qquad \forall A,B\in\bor{\Rbb^n} ,
\]
and they are taken as the transformation property defining the following class of POVMs on $\Rbb^{2n}$ \cite{Wer83,BLPY16,BGL97,KiuS13,CHT04}.

\begin{definition}\label{def:cov_ph-sp}
A \emph{covariant phase-space observable} is a bi-observable $\Mo : \bor{\Rbb^{2n}} \to \lh$ satisfying the covariance relation
\begin{equation*}
W(\vec{x},\vec{p})\Mo(Z){W(\vec{x},\vec{p})}^* = \Mo\left(Z + \begin{pmatrix} \vec{x} \\ \vec{p} \end{pmatrix} \right), \qquad \forall Z\in\bor{\Rbb^{2n}},\quad\forall\vec{x},\vec{p}\in\Rbb^n .
\end{equation*}
We denote by $\Cscr$\marginpar{$\Cscr$} the set of all the covariant phase-space observables.
\end{definition}

The interpretation of covariant phase-space observables as approximate joint measurements of position and momentum is based on the fact that their marginal POVMs
\[
\Mo_1(A) = \Mo(A\times\Rbb^n), \qquad \Mo_2(B) = \Mo(\Rbb^n\times B), \qquad A,B\in\bor{\Rbb^n},
\]
have the same symmetry properties of $\Qo$ and $\Po$, respectively. Although $\Qo$ and $\Po$ are not jointly measurable, the following well-known result says that there are plenty of covariant phase-space observables \cite{CDT03,KLY06}, \cite[Theor.~4.8.3]{Hol11}. In \eqref{Mo} below, we use the parity operator $\Pi$ on $\Hscr$, which is such that
\begin{equation}\label{parity}
\Pi\, W(\vec x,\vec p)\,\Pi=W(-\vec x,-\vec p)={W(\vec x,\vec p)}^*.
\end{equation}

\begin{proposition}\label{prop:C=S}
The covariant phase-space observables are in one-to-one correspondence with the states on $\Hscr$, so that we have the identification $\Sscr\sim \Cscr$; such a correspondence  $\sigma \leftrightarrow \Mo^\sigma$ is given by
\begin{equation}\label{Mo}
\begin{aligned}
\Mo^\sigma(B) & =\int_B   M^\sigma(\vec x,\vec p)\,\rmd \vec x \rmd  \vec p , \qquad \forall B\in \bor{\Rbb^{2n}},
\\
M^\sigma(\vec x,\vec p) & =\frac{1}{\left(2\pi \hbar\right)^n}\,  W(\vec x,\vec p) \Pi \sigma \Pi {W(\vec x,\vec p)}^* .
\end{aligned}
\end{equation}

\end{proposition}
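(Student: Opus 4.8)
The plan is to verify that the formula \eqref{Mo} does define a POVM, that it is covariant, that the map $\sigma \mapsto \Mo^\sigma$ is injective, and finally that every covariant phase-space observable arises this way. I would proceed in four steps, and I expect the surjectivity part to be the main obstacle.

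\textbf{Step 1: $\Mo^\sigma$ is a POVM.} First I would note that $M^\sigma(\vec x,\vec p)\geq 0$ for every $(\vec x,\vec p)$, since it is a unitary conjugate of the positive operator $\sigma$ (recall $\Pi$ is unitary and self-adjoint by \eqref{parity}, so $\Pi\sigma\Pi\geq 0$). Hence $\Mo^\sigma(B)\geq 0$ for all $B$, and $\sigma$-additivity is inherited from the Lebesgue integral (in the weak operator topology). The only nontrivial point is normalization: $\Mo^\sigma(\Rbb^{2n})=\openone$. This is precisely the content of \eqref{int=tr} applied to the state $\Pi\sigma\Pi$ (which is again a state, with unit trace), giving $\frac{1}{(2\pi\hbar)^n}\int_{\Rbb^{2n}} W(\vec x,\vec p)\,\Pi\sigma\Pi\,{W(\vec x,\vec p)}^*\,\rmd\vec x\rmd\vec p = \Tr\{\Pi\sigma\Pi\}\openone=\openone$.

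\textbf{Step 2: covariance.} Using the composition rule for Weyl operators (equivalently the translation property), a direct change of variables shows $W(\vec y,\vec q)\,\Mo^\sigma(B)\,{W(\vec y,\vec q)}^* = \int_B \frac{1}{(2\pi\hbar)^n} W(\vec y,\vec q)W(\vec x,\vec p)\Pi\sigma\Pi{W(\vec x,\vec p)}^*{W(\vec y,\vec q)}^*\,\rmd\vec x\rmd\vec p$; the scalar phase factors from $W(\vec y,\vec q)W(\vec x,\vec p)$ and its adjoint cancel, leaving $W(\vec x+\vec y,\vec p+\vec q)\Pi\sigma\Pi{W(\vec x+\vec y,\vec p+\vec q)}^*$, and the substitution $(\vec x,\vec p)\mapsto(\vec x-\vec y,\vec p-\vec q)$ yields $\Mo^\sigma\bigl(B+(\vec y,\vec q)^T\bigr)$. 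So $\Mo^\sigma\in\Cscr$.

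\textbf{Step 3: injectivity.} If $\Mo^\sigma=\Mo^\tau$, then the densities agree, $M^\sigma(\vec x,\vec p)=M^\tau(\vec x,\vec p)$ for a.e.\ $(\vec x,\vec p)$, hence (by continuity in the weak sense, or by conjugating back) $\Pi\sigma\Pi=\Pi\tau\Pi$, and applying $\Pi$ on both sides gives $\sigma=\tau$. One can also phrase this via characteristic functions: $\Tr\{\widehat{\Mo^\sigma}(\vec k,\vec l)\rho\}$ determines the distribution $\Mo^\sigma_\rho$ for every $\rho$, which determines $\Mo^\sigma$, and an explicit computation expresses $\widehat{\Mo^\sigma}$ in terms of $\widehat{\Pi\sigma\Pi}$.

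\textbf{Step 4: surjectivity --- the main obstacle.} Given an arbitrary $\Mo\in\Cscr$, I must produce a state $\sigma$ with $\Mo=\Mo^\sigma$. The natural candidate is to define $\sigma$ by ``undoing'' \eqref{Mo}: set $\sigma := (2\pi\hbar)^n\,\Pi\, M(\vec 0,\vec 0)\,\Pi$ where $M$ is a suitable density of $\Mo$, or more robustly define a positive operator via $\sigma := \Pi\left(\text{weak-}\lim\text{ of smeared }\Mo\text{ near }0\right)\Pi$. The careful route is: the covariance relation forces $\Mo$ to be absolutely continuous with respect to Lebesgue measure with density $M(\vec x,\vec p)=W(\vec x,\vec p)\,M_0\,{W(\vec x,\vec p)}^*$ for a fixed positive operator $M_0=M(\vec 0,\vec 0)$ (this is the standard structure theorem for covariant observables under the irreducible Weyl representation --- it is the imprimitivity-type argument, and here I would cite the quoted references \cite[Theor.~4.8.3]{Hol11}, \cite{CDT03,KLY06} rather than reprove it); then $\Mo(\Rbb^{2n})=\openone$ together with \eqref{int=tr} forces $\Tr\{M_0\}=1/(2\pi\hbar)^n$ up to the parity bookkeeping, so $\sigma:=(2\pi\hbar)^n\Pi M_0\Pi$ is a state and $\Mo=\Mo^\sigma$ by construction. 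The delicate points are justifying the existence of the density $M_0$ as a bounded positive operator and the interchange of trace and integral; both are handled by the references already cited in the text, so in the write-up I would invoke them and only check that the normalization constant comes out correctly via \eqref{int=tr}.
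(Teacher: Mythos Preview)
Your proposal is correct, but note that the paper gives no proof of this proposition at all: it simply introduces it as a ``well-known result'' and cites \cite{CDT03,KLY06} and \cite[Theor.~4.8.3]{Hol11}. Your Steps 1--3 are valid and self-contained, and your Step 4 correctly identifies the surjectivity as the substantive part and defers to exactly the same references the paper invokes; so you end up doing strictly more than the paper while landing on the same citations for the hard direction.
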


The characteristic function \eqref{eq:def_char} of a measurement $\Mo^\sigma\in\Cscr$ has a very simple structure in terms of the characteristic function \eqref{rho(w)} of the corresponding state $\sigma \in \Sscr$.

\begin{proposition}
The characteristic function of $\Mo^\sigma\in \Cscr$ is given by
\begin{equation}\label{hatMosigma}
\widehat \Mo^\sigma(\vec{k},\vec{l})  = W\left(-\Omega w\right) \widehat \sigma(w) , \qquad w\equiv\begin{pmatrix}\vec{k} \\ \vec{l}\end{pmatrix}\in \Rbb^{2n},
\end{equation}
and the characteristic function of the probability $\Mo_\rho^\sigma$ is
\begin{equation}\label{prod_ft}
\Tr\left\{\widehat \Mo^\sigma(\vec{k},\vec{l})\rho\right\}  = \widehat \rho(w) \widehat \sigma(w).
\end{equation}
\end{proposition}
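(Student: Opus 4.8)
The plan is to compute the characteristic function of $\Mo^\sigma$ directly from the integral representation \eqref{Mo} and the definition \eqref{eq:def_char}, and then to obtain \eqref{prod_ft} by pairing with $\rho$ and invoking the trace formula \eqref{traceform}. First I would substitute the explicit density $M^\sigma(\vec x,\vec p) = (2\pi\hbar)^{-n} W(\vec x,\vec p)\Pi\sigma\Pi {W(\vec x,\vec p)}^*$ into
\[
\widehat{\Mo}^\sigma(\vec k,\vec l) = \frac{1}{(2\pi\hbar)^n}\int_{\Rbb^{2n}} \rme^{\rmi(\vec k\cdot\vec x + \vec l\cdot\vec p)}\, W(\vec x,\vec p)\,\Pi\sigma\Pi\, {W(\vec x,\vec p)}^*\,\rmd\vec x\,\rmd\vec p .
\]
The idea is to recognize the right-hand side as a ``twirl'' of the operator $\Pi\sigma\Pi$ weighted by a plane wave. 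To evaluate it, I would conjugate by the Weyl operator $W(-\Omega w)$: using the translation/commutation relations \eqref{eq:comm}–\eqref{transl}, conjugating $W(\vec x,\vec p)$ by $W(-\Omega w)$ produces exactly the phase $\rme^{\rmi(\vec k\cdot\vec x+\vec l\cdot\vec p)}$ (this is where the pairing $w = \binom{\vec k}{\vec l}$ with $\Omega$ is designed to work out). Thus the plane-wave factor can be absorbed into a conjugation, reducing the integral to $W(-\Omega w)$ times $\frac{1}{(2\pi\hbar)^n}\int W(\vec x,\vec p)\, [\text{something}]\, {W(\vec x,\vec p)}^*\,\rmd\vec x\,\rmd\vec p$, and the latter integral collapses by \eqref{int=tr} to a multiple of the identity — but one must track the parity operators carefully, since \eqref{parity} gives $\Pi W(\vec x,\vec p)\Pi = {W(\vec x,\vec p)}^*$, so the conjugation by $\Pi$ flips signs in a way that recombines with the phase.

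An alternative and probably cleaner route for \eqref{hatMosigma}: test the operator identity weakly against an arbitrary state $\rho$, i.e. prove \eqref{prod_ft} first and then argue that it determines $\widehat{\Mo}^\sigma$. Pairing \eqref{eq:def_char} with $\rho$ gives $\Tr\{\widehat{\Mo}^\sigma(\vec k,\vec l)\rho\} = \int \rme^{\rmi(\vec k\cdot\vec x+\vec l\cdot\vec p)}\,\Tr\{M^\sigma(\vec x,\vec p)\rho\}\,\rmd\vec x\,\rmd\vec p$, i.e. the Fourier transform of the density of the probability $\Mo^\sigma_\rho$. Now $\Tr\{M^\sigma(\vec x,\vec p)\rho\} = (2\pi\hbar)^{-n}\Tr\{\Pi\sigma\Pi\, {W(\vec x,\vec p)}^*\rho\, W(\vec x,\vec p)\}$, and I would rewrite $\Tr\{\sigma'\, {W}^*\rho\, W\}$ using the trace formula \eqref{traceform} after expanding both $\rho$ and $W^* \cdot W$ conjugates of $\sigma'$ in terms of characteristic functions; the convolution structure of \eqref{traceform} turns the $\vec x,\vec p$-integral into a pointwise product of $\widehat\rho$ and the characteristic function of $\Pi\sigma\Pi$. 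Finally one checks that the characteristic function of $\Pi\sigma\Pi$ equals $\widehat\sigma(w)$ up to the substitution hidden in \eqref{rho(w)}: indeed $\Tr\{\Pi\sigma\Pi\, W(-\Omega w)\} = \Tr\{\sigma\,\Pi W(-\Omega w)\Pi\} = \Tr\{\sigma\, W(\Omega w)\} = \overline{\widehat\sigma(w)}$ or $\widehat\sigma(-w)$, depending on sign conventions, which is where I would have to be scrupulous about conventions to land exactly on \eqref{prod_ft} rather than a reflected version.

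The main obstacle I expect is precisely the bookkeeping of signs, the symplectic form $\Omega$, and the parity operator: there are three sources of sign flips (the Fourier kernel in \eqref{eq:def_char} versus the $-\Omega w$ in \eqref{rho(w)}, the conjugation by $\Pi$, and the composition rule for Weyl operators), and they must cancel to give the clean product $\widehat\rho(w)\widehat\sigma(w)$ with no extra phase or reflection. Once \eqref{prod_ft} is established for all $\rho\in\Tscr(\Hscr)$ — or at least for a weak-$*$ dense set — identity \eqref{hatMosigma} follows because an operator in $\lh$ is determined by its pairings with trace-class operators, and the operator $W(-\Omega w)\widehat\sigma(w)$ has, by the same trace formula, exactly the pairing $\widehat\rho(w)\widehat\sigma(w)$ against $\rho$. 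I would present the weak formulation as the primary argument and relegate the direct integral computation to a remark, since the weak route sidesteps convergence issues with the operator-valued integral \eqref{eq:def_char}.
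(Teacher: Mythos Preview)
Your first outlined approach is exactly what the paper does, and it is in fact the cleaner route: once you observe via \eqref{eq:comm} that conjugation by $W(-\Omega w)=W(-\hbar\vec l,\hbar\vec k)$ produces the phase $\rme^{\rmi(\vec k\cdot\vec x+\vec l\cdot\vec p)}$, the integral becomes $W(-\Omega w)$ times $(2\pi\hbar)^{-n}\!\int W(\vec x,\vec p)\,[{W(-\Omega w)}^*\Pi\sigma\Pi]\,{W(\vec x,\vec p)}^*\,\rmd\vec x\,\rmd\vec p$, which collapses by \eqref{int=tr} to $W(-\Omega w)\,\Tr\{{W(-\Omega w)}^*\Pi\sigma\Pi\}$; a single application of \eqref{parity} then turns this trace into $\widehat\sigma(w)$, and \eqref{prod_ft} follows by tracing against $\rho$ and invoking \eqref{rho(w)}.

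Your preferred second route---establish \eqref{prod_ft} first by pairing with $\rho$ and expanding via the trace formula \eqref{traceform}, then recover the operator identity \eqref{hatMosigma} by duality---is a genuinely different strategy and is viable in principle. But it is more roundabout: Fourier-transforming the density $\Tr\{M^\sigma(\vec x,\vec p)\rho\}$ through \eqref{traceform} ultimately unwinds to the same Weyl-operator manipulations, and you then need an additional weak-$*$ separation argument to lift the scalar identity to the operator one. The direct computation handles everything in four lines, and the sign/parity bookkeeping you flag as the main obstacle is minimal there---only one invocation of \eqref{parity} is needed, with no residual reflection. So your instinct to relegate the direct computation to a remark is inverted: it should be the main argument, and the weak formulation is the unnecessary detour.
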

In \eqref{hatMosigma} we have used the identification \eqref{identification}.
The characteristic function of a state is introduced in \eqref{rho(w)}.
\begin{proof} By the commutation relations \eqref{eq:comm}, we have
\[
W(-\hbar\vec{l},\hbar\vec{k})W(\vec{x},\vec{p}){W(-\hbar\vec{l},\hbar\vec{k})}^*= \rme^{i(\vec{k}\cdot\vec{x} + \vec{l}\cdot\vec{p})} W(\vec{x},\vec{p}).
\]
Then, we get
\begin{align*}
\widehat \Mo^\sigma(\vec{k},\vec{l}) & = \frac{1}{(2\pi\hbar)^n} \int_{\Rbb^{2n}} \rme^{i(\vec{k}\cdot\vec{x} + \vec{l}\cdot\vec{p})} W(\vec{x},\vec{p})\Pi\sigma\Pi {W(\vec{x},\vec{p})}^*\, \rmd\vec{x}\rmd\vec{p} \\
& = \frac{1}{(2\pi\hbar)^n} \int_{\Rbb^{2n}} W(-\hbar\vec{l},\hbar\vec{k})W(\vec{x},\vec{p}){W(-\hbar\vec{l},\hbar\vec{k})}^*\,\Pi\sigma\Pi {W(\vec{x},\vec{p})}^* \,\rmd\vec{x}\rmd\vec{p}\\
& = W(-\hbar\vec{l},\hbar\vec{k}) \Tr\{{W(-\hbar\vec{l},\hbar\vec{k})}^*\,\Pi\sigma\Pi\},
\end{align*}
where we have used formula \eqref{int=tr}. By \eqref{parity} and definition \eqref{rho(w)}, we get \eqref{hatMosigma}. Again by \eqref{rho(w)}, we get \eqref{prod_ft}.
\end{proof}

In terms of probability densities, measuring $\Mo^\sigma$ on the state $\rho$ yields the density function
$h^\sigma(\vec{x},\vec{p}|\rho) = \Tr\{M^\sigma(\vec{x},\vec{p})\rho\}$. Then, by \eqref{prod_ft}, the densities of the marginals $\Mo^\sigma_{1,\rho}$ and $\Mo^\sigma_{2\,\rho}$ are the convolutions
\begin{equation}\label{h*}
h^{\sigma}_1(\bullet|\rho) = f(\bullet|\rho) \ast f(\bullet|\sigma), \qquad h^{\sigma}_2(\bullet|\rho) = g(\bullet|\rho) \ast g(\bullet|\sigma),
\end{equation}
where $f$ and $g$ are the sharp densities introduced in \eqref{Q()P()}.
By the arbitrariness of the state $\rho$, the marginal POVMs of $\Mo^\sigma$ turn out to be the convolutions  (or `smearings')
\[
\Mo^\sigma_1 (A)
\int_A\rmd\vec{x}\int_{\Rbb^{n}}f(\vec{x}-\vec{x}'|\sigma) \Qo(\rmd\vec{x}'), \qquad
\Mo^\sigma_2 (B)
\int_B\rmd\vec{p}\int_{\Rbb^{n}}g(\vec{p}-\vec{p}'|\sigma) \Po(\rmd\vec{p}')
\]
(see e.g.~\cite[Sect.~III, Eqs.~(2.48), (2.49)]{BGL97}).

Let us remark that the distribution of the approximate position observable $\Mo^\sigma_1 $ in a state $\rho$ is the distribution of the sum of two independent random vectors: the first one is distributed as the sharp position $\Qo$ in the state $\rho$, the second one is distributed as the sharp position $\Qo$ in the state $\sigma$. In this sense, the approximate position $\Mo^\sigma_1 $ looks like a sharp position plus an independent noise given by $\sigma$. Of course, a similar fact holds for the momentum. However, this statement about the distributions can not be extended to a statement involving the observables. Indeed, since $\Qo$ and $\Po$ are incompatible, nobody can jointly observe $\Mo^\sigma$, $\Qo$ and $\Po$, so that the convolutions \eqref{h*} do not correspond to sums of random vectors that actually exist when measuring $\Mo^\sigma$.

\subsection{Covariant scalar observables}

Now we focus on the class of approximate joint measurements of the observables $Q_{\vec{u}}$ and  $P_{\vec{v}}$ representing position and momentum along two possibly different directions $\vec{u}$ and $\vec{v}$ (see  Section \ref{sec:refs}). As in the case of covariant phase-space observables, this class is defined in terms of the symmetries of its elements: we require them to transform as if they were joint measurements of $\Qo_{\vec{u}}$ and $\Po_{\vec{v}}$. Recall that $\Qo_{\vec{u}}$ and $\Po_{\vec{v}}$ denote the spectral measures of $Q_{\vec{u}}$, $P_{\vec{v}}$.

Due to the commutation relation \eqref{eq:comm}, the following covariance relations hold
\begin{align*}
W(\vec{x},\vec{p})\Qo_{\vec{u}}(A){W(\vec{x},\vec{p})}^* = \Qo_{\vec{u}}(A + \vec{u}\cdot\vec{x}), \qquad W(\vec{x},\vec{p})\Po_{\vec{v}}(B){W(\vec{x},\vec{p})}^* = \Po_{\vec{v}}(B + \vec{v}\cdot\vec{p}),
\end{align*}
for all $A,B\in\bor{\Rbb}$ and $\vec{x},\vec{p}\in\Rbb^n$. We employ covariance to define our class of approximate joint measurements of $\Qo_{\vec{u}}$ and $\Po_{\vec{v}}$.

\begin{definition}\label{def:phi_cov}
A \emph{$(\vec u,\vec v)$-covariant bi-observable} is a POVM $\Mo:\bor{\Rbb^2}\to\lh$ such that
\[
W(\vec{x},\vec{p})\Mo(Z){W(\vec{x},\vec{p})}^* = \Mo\left(Z + \begin{pmatrix} \vec{u}\cdot\vec{x} \\ \vec{v}\cdot\vec{p} \end{pmatrix}\right), \qquad \forall Z\in\bor{\Rbb^2},\quad \forall\vec{x}, \vec{p}\in\Rbb^n .
\]
We denote by $\Cscr_{\vec u, \vec v}$\marginpar{$\Cscr_{\vec u, \vec v}$} the class of such bi-observables.
\end{definition}

So, our approximate joint measurements of $\Qo_{\vec{u}}$ and $\Po_{\vec{v}}$ will be all the bi-observables in the class $\Cscr_{\vec u, \vec v}$.

\begin{example}
The marginal of a covariant phase-space observable $M^\sigma$ along the directions $\vec u$ and $\vec v$ is a $(\vec u,\vec v)$-covariant bi-observable. Actually, it can be proved that, if $\cos \alpha \neq 0$, all $(\vec u,\vec v)$-covariant bi-observables can be obtained in this way.
\end{example}

It is useful to work with a little more generality, and merge Definitions \ref{def:cov_ph-sp} and \ref{def:phi_cov} into a single notion of covariance.

\begin{definition}\label{def:Jcov}
Suppose $J$ is a $k\times 2n$ real matrix. A POVM $\Mo:\bor{\Rbb^k}\to\lh$ is a \emph{$J$-covariant observable} on $\Rbb^k$ if
\[
W(\vec{x},\vec{p})\Mo(Z){W(\vec{x},\vec{p})}^* = \Mo\left(Z + J \begin{pmatrix} \vec{x} \\ \vec{p}\end{pmatrix}\right), \qquad \forall Z\in\bor{\Rbb^k},\qquad \forall\vec{x},\vec{p}\in\Rbb^n .
\]
\end{definition}

Thus, approximate joint observables of $\Qo_{\vec{u}}$ and $\Po_{\vec{v}}$ are just $J$-covariant observables on $\Rbb^2$ for the choice of the $2\times 2n$ matrix
\begin{equation}\label{eq:def_phi}
J = \begin{pmatrix} \vec{u}^T & \vec{0}^T \\ \vec{0}^T & \vec{v}^T \end{pmatrix} \,.
\end{equation}
On the other hand, covariant phase-space observables constitute the class of $\id_{2n}$-covariant observables on $\Rbb^{2n}$, where $\id_{2n}$ is the identity map of $\Rbb^{2n}$.

\subsection{Gaussian measurements}\label{sec:gaussM}

When dealing with Gaussian states, the following class of bi-observables quite naturally arises.

\begin{definition}
A POVM $\Mo : \bor{\Rbb^{2m}}\to\lh$ is a {\em Gaussian bi-observable} if
\begin{equation}\label{eq:char_gauss_POVM}
\widehat{\Mo}(\vec{k},\vec{l}) = W\left(-\Omega (J^\Mo)^T \begin{pmatrix} \vec{k} \\ \vec{l} \end{pmatrix}\right) \exp\left\{\rmi \begin{pmatrix} \vec k^T & \vec l^T\end{pmatrix} \begin{pmatrix} \vec a^\Mo \\ \vec b^\Mo\end{pmatrix} - \frac{1}{2} \begin{pmatrix} \vec k^T & \vec l^T\end{pmatrix} V^\Mo \begin{pmatrix} \vec k \\ \vec l\end{pmatrix} \right\}
\end{equation}
for two vectors $\vec{a}^\Mo,\vec{b}^\Mo\in\Rbb^m$, a real $2m\times 2n$ matrix $J^\Mo$ and a real symmetric $2m\times 2m$ matrix $V^\Mo$ satisfying the condition
\begin{equation}\label{eq:def_pos_gauss_POVM}
V^\Mo  \pm\frac{\rmi}{2} J^\Mo \Omega (J^\Mo)^T\geq 0 .
\end{equation}
We set $\mu^\Mo=\begin{pmatrix}\vec{a}^\Mo\\ \vec{b}^\Mo\end{pmatrix}$. The triple $(\mu^\Mo,\,V^\Mo,\,J^\Mo)$ is the set of the \emph{parameters} of the Gaussian observable $\Mo$.
\end{definition}

In this definition, the vector $\vec{a}^\Mo$ has the dimension of a length, and $\vec{b}^\Mo$ of a momentum; similarly, the matrices $J^\Mo$, $V^\Mo$ decompose into blocks of different dimensions.
The condition \eqref{eq:def_pos_gauss_POVM} is necessary and sufficient in order that the function \eqref{eq:char_gauss_POVM} defines the characteristic function of a POVM.

For unbiased Gaussian measurements, i.e., Gaussian bi-observables with $\vec{a}^\Mo = \vec{b}^\Mo = \vec{0}$, the previous definition coincides with the one of \cite[Section 12.4.3]{Hol12}. It is also a particular case of the more general definition of Gaussian observables on arbitrary (not necessarily symplectic) linear spaces that is given in \cite{KiuS13,HKS15}. We refer to \cite{Hol12,KiuS13} for the proof that Eq.~\eqref{eq:char_gauss_POVM} is actually the characteristic function of a POVM.

Measuring the Gaussian observable $\Mo$ on the Gaussian state $\rho$ yields the probability distribution $\Mo_\rho$ whose characteristic function is
\begin{multline*}
\Tr\{\widehat{\Mo}(\vec{k},\vec{l})\rho\} = \widehat{\rho}\left((J^\Mo)^T\begin{pmatrix} \vec{k} \\ \vec{l} \end{pmatrix}\right) \exp\left\{\rmi \begin{pmatrix} \vec k^T & \vec l^T\end{pmatrix} \begin{pmatrix} \vec a^\Mo \\ \vec b^\Mo\end{pmatrix} - \frac{1}{2} \begin{pmatrix} \vec k^T & \vec l^T\end{pmatrix} V^\Mo \begin{pmatrix} \vec k \\ \vec l\end{pmatrix} \right\} \\
{}= \exp\left\{\rmi \begin{pmatrix} \vec k^T & \vec l^T\end{pmatrix} \left[\begin{pmatrix} \vec a^\Mo \\ \vec b^\Mo\end{pmatrix} + J^\Mo \begin{pmatrix} \vec a^\rho \\ \vec b^\rho\end{pmatrix} \right] - \frac{1}{2} \begin{pmatrix} \vec k^T & \vec l^T\end{pmatrix} \left[V^\Mo + J^\Mo V^\rho (J^\Mo)^T\right] \begin{pmatrix} \vec k \\ \vec l\end{pmatrix} \right\};
\end{multline*}
hence the output distribution is Gaussian,
\begin{equation}\label{eq:hMo}
\Mo_\rho = \Ncal \left( J^\Mo \mu^\rho+\mu^\Mo  ;\, J^\Mo V^\rho (J^\Mo)^T + V^\Mo \right) .
\end{equation}

\subsubsection{Covariant Gaussian observables}

For Gaussian bi-observables, $J$-covariance has a very easy characterization.

\begin{proposition}\label{prop:gauss_Tcov}
Suppose $\Mo$ is a Gaussian bi-observable on $\Rbb^{2m}$ with parameters $(\mu^\Mo,\, V^\Mo,\, J^\Mo)$. Let $J$ be any $2m\times 2n$ real matrix. Then, the POVM $\Mo$ is a $J$-covariant observable if and only if $J^\Mo = J$.
\end{proposition}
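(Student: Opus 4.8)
The plan is to work directly with the characteristic function formula \eqref{eq:char_gauss_POVM} and compare it with the $J$-covariance condition in Definition \ref{def:Jcov}, reformulated in terms of characteristic functions. First I would translate $J$-covariance into an identity for $\widehat\Mo$: applying the defining relation $W(\vec x,\vec p)\Mo(Z){W(\vec x,\vec p)}^* = \Mo(Z + J\binom{\vec x}{\vec p})$ inside the integral \eqref{eq:def_char} and using the change of variables $Z\mapsto Z - J\binom{\vec x}{\vec p}$ in the outcome space, one finds that $\Mo$ is $J$-covariant if and only if
\[
W(\vec x,\vec p)\,\widehat\Mo(\vec k,\vec l)\,{W(\vec x,\vec p)}^* = \exp\left\{-\rmi\begin{pmatrix}\vec k^T & \vec l^T\end{pmatrix} J \begin{pmatrix}\vec x \\ \vec p\end{pmatrix}\right\}\widehat\Mo(\vec k,\vec l)
\]
for all $\vec x,\vec p\in\Rbb^n$ and all $\vec k,\vec l\in\Rbb^{2m}$.

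Next I would plug the Gaussian form \eqref{eq:char_gauss_POVM} into this identity. The scalar exponential factor involving $\mu^\Mo$ and $V^\Mo$ commutes with everything, so it cancels, and the whole question reduces to how the single Weyl operator $W\big(-\Omega (J^\Mo)^T \binom{\vec k}{\vec l}\big)$ transforms under conjugation by $W(\vec x,\vec p)$. Using the commutation relation \eqref{eq:comm} — or equivalently the translation property \eqref{transl} — conjugation by $W(\vec x,\vec p)$ multiplies $W(-\Omega (J^\Mo)^T \binom{\vec k}{\vec l})$ by a phase that is linear in $\binom{\vec x}{\vec p}$, with the coefficient being a bilinear form in $\binom{\vec x}{\vec p}$ and $(J^\Mo)^T\binom{\vec k}{\vec l}$ built from $\Omega$. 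A short computation (using $\Omega^{-1}\Omega = \id$ and the antisymmetry of $\Omega$) shows this phase equals $\exp\{-\rmi \binom{\vec k^T\ \vec l^T} J^\Mo \binom{\vec x}{\vec p}\}$. Comparing with the phase $\exp\{-\rmi \binom{\vec k^T\ \vec l^T} J \binom{\vec x}{\vec p}\}$ required above, $J$-covariance holds if and only if $J^\Mo\binom{\vec x}{\vec p} = J\binom{\vec x}{\vec p}$ as an identity in $\binom{\vec x}{\vec p}$ that must hold for all $\vec k,\vec l$; since a linear functional that vanishes for all arguments is zero, this forces $J^\Mo = J$. For the converse, if $J^\Mo = J$ the same computation shows the covariance identity is satisfied, so $\Mo$ is $J$-covariant.

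The one genuinely delicate point — the main obstacle — is justifying the passage from the integral-operator relation defining $J$-covariance of the \emph{POVM} $\Mo$ to the pointwise relation for its \emph{characteristic function} $\widehat\Mo$, and back. The forward direction is just linearity of the POVM integral plus dominated convergence (the integrand is bounded since $\Mo$ is a POVM and the exponential has modulus one). The reverse direction requires that the characteristic function determines the POVM uniquely, which is the operator-valued analogue of the uniqueness theorem for Fourier transforms of measures; this is standard for POVMs on $\Rbb^{2m}$ (it follows by testing against an arbitrary state $\rho$ and invoking uniqueness of characteristic functions of the scalar probability measures $\Mo_\rho$). Once that equivalence is in hand, the rest is the elementary Weyl-algebra phase computation sketched above, and the proof is essentially a one-line comparison of exponents.
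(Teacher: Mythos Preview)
Your proposal is correct and follows essentially the same route as the paper: compute the characteristic functions of the conjugated and translated POVMs, use the Weyl commutation relation \eqref{eq:comm} to extract the phase $\exp\{-\rmi\,(\vec k^T\ \vec l^T)\,J^\Mo\,(\vec x^T\ \vec p^T)^T\}$, and compare with the phase coming from $J$. The only cosmetic difference is that the paper disposes of the ``delicate point'' you flag simply by observing that $\widehat\Mo(\vec k,\vec l)\neq 0$ for all $\vec k,\vec l$ (the Gaussian scalar factor never vanishes and Weyl operators are unitary), which lets one cancel and compare exponents directly; the injectivity of the characteristic function is then used implicitly, just as you note.
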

\begin{proof}
For $\vec{x},\,\vec{p}\in\Rbb^{n}$, we let $\Mo^\prime$ and $\Mo^{\prime\prime}$ be the two POVMs on $\Rbb^{2m}$ given by
\[
\Mo^\prime(Z) = W(\vec{x},\vec{p})\Mo(Z){W(\vec{x},\vec{p})}^*, \qquad \Mo^{\prime\prime}(Z) = \Mo\left(Z+J\begin{pmatrix} \vec{x} \\ \vec{p}\end{pmatrix}\right) , \qquad \forall Z\in\bor{\Rbb^{2m}} .
\]
By the commutation relations \eqref{eq:comm} for the Weyl operators, we immediately get
\begin{multline*}
\widehat  \Mo^\prime(\vec{k},\vec{l})  = W(\vec{x},\vec{p})\widehat{\Mo}(\vec{k},\vec{l}){W(\vec{x},\vec{p})}^* = \exp\left\{-\rmi \begin{pmatrix} \vec{x}^T & \vec{p}^T\end{pmatrix} \Omega^{-1} \left[-\Omega (J^\Mo)^T \begin{pmatrix} \vec{k} \\ \vec{l} \end{pmatrix}\right]\right\}\widehat{\Mo}(\vec{k},\vec{l})
\\
{} = \exp \left\{-\rmi\begin{pmatrix} \vec{k}^T & \vec{l}^T \end{pmatrix} J^\Mo \begin{pmatrix} \vec{x} \\ \vec{p} \end{pmatrix} \right\} \widehat{\Mo}(\vec{k},\vec{l}) ;
\end{multline*}
we have also
\begin{multline*}
\widehat  \Mo^{\prime\prime}(\vec{k},\vec{l})  = \int_{\Rbb^{2m}} \exp\left\{\rmi\begin{pmatrix} \vec{k}^T & \vec{l}^T \end{pmatrix} \left[\begin{pmatrix} \vec{x}' \\ \vec{p}' \end{pmatrix} - J\begin{pmatrix} \vec{x} \\ \vec{p} \end{pmatrix} \right]\right\} \Mo(\rmd\vec{x}'\rmd\vec{p}') \\
{}= \exp \left\{-\rmi\begin{pmatrix} \vec{k}^T & \vec{l}^T \end{pmatrix} J \begin{pmatrix} \vec{x} \\ \vec{p} \end{pmatrix} \right\} \widehat{\Mo}(\vec{k},\vec{l}) .
\end{multline*}
Since $\widehat{\Mo}(\vec{k},\vec{l}) \neq 0$ for all $\vec{k},\vec{l}$, by comparing the last two expressions we see that $\Mo^\prime = \Mo^{\prime\prime}$ if and only if
\[
\exp \left\{-\rmi\begin{pmatrix} \vec{k}^T & \vec{l}^T \end{pmatrix} J^\Mo \begin{pmatrix} \vec{x} \\ \vec{p} \end{pmatrix} \right\} = \exp \left\{-\rmi\begin{pmatrix} \vec{k}^T & \vec{l}^T \end{pmatrix} J \begin{pmatrix} \vec{x} \\ \vec{p} \end{pmatrix} \right\} , \qquad \forall \vec{x},\vec{p}\in\Rbb^n,\quad \forall\vec{k},\vec{l}\in\Rbb^m ,
\]
which in turn is equivalent to $J^\Mo = J$.
\end{proof}

\subsubsection*{Vector observables}
Let us point out the structure of the Gaussian approximate joint measurements of $\Qo$ and $\Po$.
\begin{proposition}
A bi-observable $\Mo^\sigma\in\Cscr$ is Gaussian if and only if the state $\sigma$ is Gaussian. In this case, the covariant bi-observable $\Mo^\sigma$ is Gaussian with parameters
$$
\mu^{\Mo^\sigma}=\mu^\sigma,\qquad V^{\Mo^\sigma} = V^\sigma,\qquad J^{\Mo^\sigma} = \id_{2n}.
$$
\end{proposition}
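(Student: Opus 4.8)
The plan is to extract everything from the explicit formula \eqref{hatMosigma} for the characteristic function of a covariant phase-space observable, namely $\widehat\Mo^\sigma(\vec k,\vec l)=W(-\Omega w)\,\widehat\sigma(w)$ with $w\equiv\begin{pmatrix}\vec k \\ \vec l\end{pmatrix}$, and to compare it with the defining form \eqref{eq:char_gauss_POVM} of a Gaussian bi-observable. The one structural observation needed is that every $\Mo^\sigma\in\Cscr$ is an $\id_{2n}$-covariant observable, so that by Proposition \ref{prop:gauss_Tcov}, \emph{whenever} $\Mo^\sigma$ is Gaussian one necessarily has $J^{\Mo^\sigma}=\id_{2n}$; then $(J^{\Mo^\sigma})^T=\id_{2n}$ and the Weyl prefactor in \eqref{eq:char_gauss_POVM} is exactly the factor $W(-\Omega w)$ already present in \eqref{hatMosigma}. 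Cancelling this unitary prefactor reduces the matching problem to an identity between the scalar $\widehat\sigma(w)$ and the exponential of a second order polynomial, which is precisely the Gaussian condition on states from Definition \ref{def:Gstates}.

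For the implication ``$\sigma$ Gaussian $\Rightarrow\Mo^\sigma$ Gaussian'', I would insert $\widehat\sigma(w)=\exp\{\rmi w^T\mu^\sigma-\tfrac12 w^T V^\sigma w\}$ (from \eqref{eq:char_gauss}) into \eqref{hatMosigma}; the outcome coincides with \eqref{eq:char_gauss_POVM} under the identifications $J^{\Mo^\sigma}=\id_{2n}$, $\mu^{\Mo^\sigma}=\mu^\sigma$, $V^{\Mo^\sigma}=V^\sigma$. It then remains only to verify the admissibility condition \eqref{eq:def_pos_gauss_POVM}: with $J^{\Mo^\sigma}=\id_{2n}$ one has $J^{\Mo^\sigma}\Omega(J^{\Mo^\sigma})^T=\Omega$, so \eqref{eq:def_pos_gauss_POVM} reads $V^\sigma\pm\tfrac{\rmi}2\Omega\geq0$, that is $V^\sigma_\pm\geq0$, which holds because $V^\sigma$ is the variance matrix of a Gaussian state (this is the very requirement in Definition \ref{def:Gstates}, equivalently \eqref{uncertmatrix}). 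Hence $\Mo^\sigma$ is a Gaussian bi-observable with exactly the stated parameters.

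For the converse I would assume $\Mo^\sigma$ Gaussian; since $\Mo^\sigma\in\Cscr$ is $\id_{2n}$-covariant, Proposition \ref{prop:gauss_Tcov} forces $J^{\Mo^\sigma}=\id_{2n}$, so $\widehat\Mo^\sigma(\vec k,\vec l)=W(-\Omega w)\exp\{\rmi w^T\mu^{\Mo^\sigma}-\tfrac12 w^T V^{\Mo^\sigma} w\}$. Comparing with \eqref{hatMosigma} and multiplying on the left by the unitary $W(-\Omega w)^{*}$ yields $\widehat\sigma(w)=\exp\{\rmi w^T\mu^{\Mo^\sigma}-\tfrac12 w^T V^{\Mo^\sigma} w\}$ for all $w\in\Rbb^{2n}$, while \eqref{eq:def_pos_gauss_POVM} (again with $J=\id_{2n}$) gives $V^{\Mo^\sigma}_\pm\geq0$. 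By Definition \ref{def:Gstates}, $\sigma$ is then Gaussian with $\mu^\sigma=\mu^{\Mo^\sigma}$ and $V^\sigma=V^{\Mo^\sigma}$; substituting this back into the first implication recovers the parameter identities in the statement.

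I do not anticipate a genuine obstacle: the argument is essentially bookkeeping with Weyl prefactors and transposes of $J$. The two points to handle with a little care are (i) the appeal to Proposition \ref{prop:gauss_Tcov}, which requires $\widehat\Mo^\sigma(\vec k,\vec l)\neq0$ for all $\vec k,\vec l$ — automatic once $\Mo^\sigma$ is assumed Gaussian, its characteristic function being then a unitary times a nowhere-vanishing scalar — and (ii) the remark that the POVM positivity condition \eqref{eq:def_pos_gauss_POVM} and the state positivity condition $V_+\geq0$ literally coincide when $J=\id_{2n}$, so no information is lost in either direction of the equivalence.
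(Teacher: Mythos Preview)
Your proof is correct and follows essentially the same strategy as the paper: both compare the explicit formula \eqref{hatMosigma} with the defining form \eqref{eq:char_gauss_POVM} and read off the parameters. The only minor difference is in how you pin down $J^{\Mo^\sigma}=\id_{2n}$ in the converse direction: the paper argues directly that two Weyl operators $W(\vec x_1,\vec p_1)$ and $W(\vec x_2,\vec p_2)$ are proportional iff their arguments coincide, whereas you invoke Proposition~\ref{prop:gauss_Tcov} (which packages that same observation). Your route is slightly more modular; the paper's is more self-contained---but the underlying content is identical.
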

\begin{proof}
By comparing \eqref{eq:char_gauss}, \eqref{hatMosigma} and \eqref{eq:char_gauss_POVM}, and using the fact that $W(\vec{x}_1,\vec{p}_2) \propto W(\vec{x}_2,\vec{p}_2)$ if and only if $\vec{x}_1 = \vec{x}_2$ and $\vec{p}_1 = \vec{p}_2$, we have the first statement. Then, for $\sigma\in \Gscr$, we see immediately that $\Mo^\sigma$ is a Gaussian observable with the above parameters.
\end{proof}

We call $\Cscr^G$\marginpar{$\Cscr^G$} the class of the Gaussian covariant phase-space observables.
By \eqref{eq:hMo}, observing $\Mo^\sigma$ on a Gaussian state $\rho\in\Gscr$ yields the normal probability distribution
$\Mo^\sigma_\rho= \Ncal\left(\mu^\rho+\mu^\sigma ;\,V^\rho+V^\sigma\right)$,
with marginals
\begin{equation}\label{eq:hsigma}
\Mo^\sigma_{1,\rho} = \Ncal( \vec a^\rho+\vec a^\sigma  ;  A^\rho+A^\sigma), \qquad \Mo^\sigma_{2,\rho} = \Ncal( \vec b^\rho +\vec b^\sigma;  B^\rho+B^\sigma ).
\end{equation}
When $\vec a^\sigma = \vec 0$ and $\vec b^\sigma = \vec 0$, we have an \emph{unbiased measurement}.

\subsubsection*{Scalar observables}
We now study the Gaussian approximate joint measurements of the target observables $Q_{\vec{u}}$ and $P_{\vec{u}}$ defined in \eqref{QuPv}.
\begin{proposition}
A Gaussian bi-observable $\Mo$ with parameters $(\mu^\Mo, V^\Mo, J^\Mo)$ is in $\Cscr_{\vec u, \vec v}$ if and only if
$J^\Mo = J$, where $J$ is given by \eqref{eq:def_phi}. In this case, the condition \eqref{eq:def_pos_gauss_POVM} is equivalent to
\begin{equation}\label{eq:ineq_gaussian_R2}
V^\Mo_{11} \geq 0, \qquad V^\Mo_{22} \geq 0, \qquad V^\Mo_{11}V^\Mo_{22} \geq \frac{\hbar^2}{4} (\cos \alpha)^2 + (V^\Mo_{12})^2 .
\end{equation}
\end{proposition}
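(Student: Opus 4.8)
The proposition splits into two essentially independent assertions, and the plan is to dispatch each by reducing it to a result already available above.

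For the equivalence ``$\Mo\in\Cscr_{\vec u,\vec v}$ iff $J^\Mo=J$'', I would simply observe that, by the discussion following Definition \ref{def:Jcov}, the class $\Cscr_{\vec u,\vec v}$ is by construction the class of $J$-covariant observables on $\Rbb^2$ for the particular $2\times 2n$ matrix $J$ of \eqref{eq:def_phi}. Hence Proposition \ref{prop:gauss_Tcov}, specialised to this $J$, immediately yields the claim, and nothing further is needed.

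For the second assertion, the plan is to insert $J^\Mo=J$ into the positivity condition \eqref{eq:def_pos_gauss_POVM} and evaluate $J\Omega J^T$ explicitly. Writing $\Omega=\begin{pmatrix}0&\hbar\openone\\-\hbar\openone&0\end{pmatrix}$ as in \eqref{def:Omega} and $J=\begin{pmatrix}\vec u^T&\vec 0^T\\\vec 0^T&\vec v^T\end{pmatrix}$, the block multiplication collapses all the $n$-dimensional contractions to the single scalar $\vec u\cdot\vec v=\cos\alpha$, giving $J\Omega J^T=\hbar\cos\alpha\begin{pmatrix}0&1\\-1&0\end{pmatrix}$. Therefore \eqref{eq:def_pos_gauss_POVM} amounts to requiring that the two Hermitian $2\times 2$ matrices
\[
\begin{pmatrix}V^\Mo_{11}&V^\Mo_{12}\pm\frac{\rmi}{2}\,\hbar\cos\alpha\\ V^\Mo_{12}\mp\frac{\rmi}{2}\,\hbar\cos\alpha&V^\Mo_{22}\end{pmatrix}
\]
be positive semi-definite.

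To conclude, I would apply the elementary criterion that a Hermitian $2\times 2$ matrix is positive semi-definite exactly when its two diagonal entries are non-negative and its determinant is non-negative. Both sign choices produce the same determinant $V^\Mo_{11}V^\Mo_{22}-(V^\Mo_{12})^2-\frac{\hbar^2}{4}(\cos\alpha)^2$, so the combined content of \eqref{eq:def_pos_gauss_POVM} is precisely the three inequalities of \eqref{eq:ineq_gaussian_R2}. I do not anticipate any genuine difficulty here: the only steps demanding a moment's care are the bookkeeping in the product $J\Omega J^T$ (tracking which blocks are $1\times n$ and which $n\times 1$) and the standard minor characterisation of positive semi-definiteness for $2\times 2$ Hermitian matrices, in particular the fact that non-negativity of both diagonal entries is genuinely needed (not only of the leading one) when the top-left entry vanishes.
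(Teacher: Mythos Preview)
Your proposal is correct and follows essentially the same route as the paper: invoke Proposition \ref{prop:gauss_Tcov} for the first assertion, then compute $J\Omega J^T=\hbar\cos\alpha\begin{pmatrix}0&1\\-1&0\end{pmatrix}$ and read off the $2\times 2$ positive semi-definiteness conditions. You simply spell out in full the block computation and the principal-minor criterion that the paper leaves implicit.
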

\begin{proof}
The first statement follows from Proposition \ref{prop:gauss_Tcov}. Then, the matrix inequality \eqref{eq:def_pos_gauss_POVM} reads
\[
V^\Mo  \pm \frac{\rmi\hbar}{2} \begin{pmatrix} 0 & \cos\alpha \\ - \cos\alpha & 0 \end{pmatrix}\geq 0,
\]
which is equivalent to \eqref{eq:ineq_gaussian_R2}.
\end{proof}

We write $\Cscr_{\vec u, \vec v}^G$\marginpar{$\Cscr_{\vec u, \vec v}^G$} for the class of the Gaussian $(\vec u,\vec v)$-covariant phase-space observables.
An observable $\Mo\in\Cscr^G_{\vec u, \vec v}$ is thus characterized by the couple $(\mu^\Mo,V^\Mo)$. From \eqref{eq:hMo} with $J^\Mo = J$ given by \eqref{eq:def_phi}, we get that
measuring $\Mo\in \Cscr_{\vec u, \vec v}^G$ on a Gaussian state $\rho$ yields the probability distribution
$
\Mo_\rho = \Ncal\left( \mu^\rho_{\vec u,\vec v}+ \mu^\Mo  ;\, V^\rho_{\vec u,\vec v} +V^\Mo \right)$.
Its marginals with respect to the first and second entry are, respectively,
\begin{equation}\label{Gapprjmeas}
\Mo_{1,\rho}  = \Ncal\left(\vec u\cdot\vec a^\rho+ a^\Mo;\,  \Var(\Qo_{\vec u,\rho})+V^\Mo_{11} \right), \qquad
\Mo_{2,\rho}  = \Ncal\left(\vec v\cdot\vec b^\rho+ b^\Mo;\,  \Var(\Po_{\vec v,\rho})+ V^\Mo_{22} \right) .
\end{equation}

\begin{example}\label{ex:Delta}
Let us construct an example of an approximate joint measurement of $Q_{\vec u}$ and $P_{\vec v}$, by using a noisy measurement of position along $\vec u$ followed by a sharp measurement of momentum along $\vec v$. Let $\Delta$ be a positive real number yielding the precision of the position measurement, and consider the POVM $\Mo$ on $\Rbb^2$ given by
\[
\Mo(A\times B) = \frac{1}{\sqrt{2\pi\Delta}} \int_A \exp\left\{-\frac{(x-Q_{\vec{u}})^2}{4\Delta}\right\} \Po_{\vec{v}}(B) \exp\left\{-\frac{(x-Q_{\vec{u}})^2}{4\Delta}\right\} \, \rmd x , \qquad \forall A,B\in\bor{\Rbb} .
\]
The characteristic function of $\Mo$ is
\begin{align*}
\widehat{\Mo}(k,l) & = \frac{1}{\sqrt{2\pi\Delta}} \int_{\Rbb} \rme^{\rmi kx} \exp\left\{-\frac{(x-Q_{\vec{u}})^2}{4\Delta}\right\} \left[\int_{\Rbb} \rme^{\rmi lp} \Po_{\vec{v}}(\rmd p)\right] \exp\left\{-\frac{(x-Q_{\vec{u}})^2}{4\Delta}\right\} \, \rmd x \\
& = \frac{1}{\sqrt{2\pi\Delta}} \int_{\Rbb} \exp\left\{\rmi kx -\frac{(x-Q_{\vec{u}})^2}{4\Delta}\right\} \rme^{\rmi lP_{\vec{v}}} \exp\left\{-\frac{(x-Q_{\vec{u}})^2}{4\Delta}\right\} \, \rmd x \\
& = \frac{\rme^{\rmi lP_{\vec{v}}}}{\sqrt{2\pi\Delta}} \int_{\Rbb} \exp\left\{\rmi kx-\frac{(x-Q_{\vec{u}}+\hbar l\vec{u}\cdot\vec{v})^2}{4\Delta}\right\} \exp\left\{-\frac{(x-Q_{\vec{u}})^2}{4\Delta}\right\} \, \rmd x \\
& = \frac{1}{\sqrt{2\pi\Delta}} \exp\left\{\rmi lP_{\vec{v}} - \frac{(\hbar l \cos\alpha)^2}{8\Delta}\right\} \int_{\Rbb} \exp\left\{\rmi kx-\frac{(x-Q_{\vec{u}}+\hbar l \cos\alpha/2)^2}{2\Delta}\right\} \, \rmd x \\
& = \exp\left\{\rmi lP_{\vec{v}} + \rmi k\left(Q_{\vec{u}}+\frac{\hbar l \cos \alpha}{2}\right)- \frac{\Delta}{2} k^2 - \frac{(\hbar \cos\alpha)^2}{8\Delta} l^2 \right\} \\
& = W(-\hbar l \vec{v} , \hbar k \vec{u}) \exp\left\{- \frac{\Delta}{2}\, k^2 - \frac{(\hbar \cos\alpha)^2}{8\Delta}\, l^2 \right\} .
\end{align*}
Therefore, $\Mo$ is a Gaussian bi-observable with parameters $a^\Mo =0$, $b^\Mo = 0$ and $J^\Mo =J$, where $J$ is given by \eqref{eq:def_phi} and $V^\Mo_{11}=\Delta $, $V^\Mo_{22}=\frac{(\hbar\cos\alpha)^2}{4\Delta}$ and $V^\Mo_{12}=0$.
This implies  $\Mo\in \Cscr_{\vec u, \vec v}^G$; in particular, the set $\Cscr_{\vec u, \vec v}^G$ is non-empty. Moreover, the lower bound $V^\Mo_{11}V^\Mo_{22} = \frac{\hbar^2}{4} (\cos \alpha)^2 $ is attained, cf.\ \eqref{eq:ineq_gaussian_R2}.
\end{example}

\begin{example}\label{ex:pvm}
Let us consider the case $\alpha=\pm \pi/2$; now the target observables $\Qo_{\vec u}$ and $\Po_{\vec v}$ are compatible and we can define a pvm $\Mo$ on $\Rbb^2$  by setting $\Mo(A\times B) = \Qo_{\vec u}(A)\Po_{\vec v}(B)$ for all $A,B\in\bor{\Rbb}$.   Its characteristic function is
\[
\widehat{\Mo}(k,l) = \int_{\Rbb} \rme^{i kx} \Qo_{\vec u}(\rmd x) \int_{\Rbb} \rme^{i lp} \Po_{\vec v}(\rmd p) = \rme^{i (k Q_{\vec u} + lP_{\vec v})} = W(-\hbar l \vec v, \hbar k \vec u) .
\]
Then, $\Mo\in \Cscr_{\vec u, \vec v}^G$ with parameters
$a^\Mo =0$, $ b^\Mo = 0$, $V^\Mo = 0$ and $J^\Mo = J$ given by \eqref{eq:def_phi}. Note that $\Mo$ can be regarded as the limit case of the observables of the previous example when $\cos \alpha = 0$ and $\Delta\downarrow 0$.
\end{example}

\section{Entropic MURs for position and momentum}\label{sec:entMURs}

In the case of two discrete target observables, in \cite{BarGT16} we found an entropic bound for the precision of their approximate joint measurements, which we named \emph{entropic incompatibility degree}. Its definition followed a three steps procedure. Firstly, we introduced an \emph{error function}: when the system is in a given state $\rho$, such a function quantifies the total amount of information that is lost by approximating the target observables by means of the marginals of a bi-observable; the error function is nothing else than the sum of the two relative entropies of the respective distributions. Then, we considered the worst possible case by maximizing the error function over $\rho$, thus obtaining an {\em entropic divergence} quantifying the approximation error in a state independent way. Finally, we got our index of the incompatibility of the two target observables by minimizing the entropic divergence over all bi-observables. In particular, when symmetries are present, we showed that the minimum is attained at some covariant bi-observables. So, the covariance followed as a byproduct of the optimization procedure, and was not a priori imposed upon the class of approximating bi-observables.

As we shall see, the extension of the previous procedure to position and momentum target observables is not straightforward, and peculiar problems of the continuous case arise. In order to overcome them, in this paper we shall fully analyse only a case in which explicit computations can be done: Gaussian preparations, and  Gaussian bi-observables, which we a priori assume to be covariant. We conjecture that the final result should be independent of these simplifications, as we shall discuss in Section \ref{sec:concl}.

As we said in Section \ref{jointmeas}, by ``approximate joint measurement'' we mean ``a bi-observable with the `right' covariance properties''.

\subsection{Scalar observables}\label{sec:uvbounds}

Given the directions $\vec u$ and $\vec v$, the target observables are $Q_{\vec u}$ and $P_{\vec v}$ in \eqref{QuPv} with pvm's $\Qo_{\vec u}$ and $\Po_{\vec v}$. For $\rho\in \Gscr$ with parameters $(\mu^\rho,V^\rho)$ given in \eqref{murho+V}, the target distributions $\Qo_{\vec u,\rho}$ and $\Po_{\vec v,\rho}$ are normal with means and variances \eqref{uvmom}.

An approximate joint measurements of $\Qo_{\vec u}$ and $\Po_{\vec v}$ is given by a covariant bi-observable $\Mo\in\Cscr_{\vec u, \vec v}$; then, we denote its marginals with respect to the first and second entry by $\Mo_1$ and $\Mo_2$, respectively. For a Gaussian covariant bi-observable $\Mo\in\Cscr^G_{\vec u, \vec v}$ with parameters $(\mu^\Mo,V^\Mo)$, the distribution of $\Mo$ in a Gaussian state $\rho$ is normal,
\[
\Mo_\rho=\Ncal\left(\mu^\rho + \mu^\Mo ;\, V^{\rho}_{\vec u,\vec v} + V^\Mo \right),
\]
so that its marginal distributions $\Mo_{1,\rho}$ and $\Mo_{2,\rho}$ are normal with means $\vec u\cdot\vec a^\rho+a^\Mo$ and $\vec v\cdot\vec b^\rho +  b^\Mo$ and variances
\begin{equation}\label{+var+}
\Var\left(\Mo_{1,\rho}\right)= \Var\left(\Qo_{\vec u,\rho}\right)+V^\Mo_{11},
\qquad
\Var\left(\Mo_{2,\rho}\right)= \Var\left(\Po_{\vec v,\rho}\right)+V^\Mo_{22}.
\end{equation}
Let us recall that $\abs{ \vec u}=1$, \ $\abs{\vec v}=1$, \ $\vec u\cdot \vec v=\cos \alpha$, and that by \eqref{RobUncert} and \eqref{eq:ineq_gaussian_R2}, we have
\begin{equation}\label{varvarcos}
\Var\left(\Qo_{\vec u,\rho}\right)\Var\left(\Po_{\vec v,\rho}\right)\geq \frac{\hbar^2}4 \left(\cos \alpha\right)^2,\qquad
V^\Mo_{11}\,V^\Mo_{22}\geq \frac{\hbar^2}4 \left(\cos \alpha\right)^2.
\end{equation}

\subsubsection{Error function}\label{sec:uverf}
The relative entropy is the amount of information that is lost when an approximating distribution is used in place of a target one. For this reason, we use it to give an informational quantification of the error made in approximating the distributions of sharp position and momentum by means of the marginals of a joint covariant observable.
\begin{definition}
Given the preparation $\rho\in \Sscr$ and the covariant bi-observable $\Mo\in \Cscr_{\vec u,\vec v}$, the \emph{error function} for the scalar case is the sum of the two relative entropies:
\begin{equation}\label{Salpha}
S(\rho,\Mo):=S(\Qo_{\vec u,\rho}\|\Mo_{1,\rho}) + S(\Po_{\vec v,\rho}\|\Mo_{2,\rho}).
\end{equation}
\end{definition}
The relative entropy is invariant under a change of the unit of measurement, so that the error function is scale invariant, too; indeed, it quantifies a relative error, not an absolute one.
In the Gaussian case the error function can be explicitly computed.

\begin{proposition}[Error function for the scalar Gaussian case] For $\rho\in \Gscr$ and $\Mo\in\Cscr_{\vec u,\vec v}^G$, the error function is
\begin{equation}\label{SalphaG}
S(\rho,\Mo)=\frac {\log \rme}2 \left[s(x)+s(y)+ \Delta(\rho,\Mo)\right],
\end{equation}
where
\begin{equation*}
x:=\frac{V^\Mo_{11}}{\Var\left(\Qo_{\vec u,\rho}\right)},\qquad y:=\frac{V^\Mo_{22}}{\Var\left(\Po_{\vec v,\rho}\right)},\qquad
\Delta(\rho,\Mo) :=\frac{(a^\Mo)^2}{\Var\left(\Mo_{1,\rho}\right)} + \frac{(b^\Mo)^2}{\Var\left(\Mo_{2,\rho}\right)},
\end{equation*}
and $s:[0,+\infty)\to[0,+\infty)$ is the following $\Ccal^\infty$ strictly increasing function with $s(0) = 0$:
\begin{equation}\label{def:s}
s(x):=\ln \left(1+x\right)-\frac x{1+x}.
\end{equation}
\end{proposition}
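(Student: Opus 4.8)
The plan is to reduce everything to the closed-form expression \eqref{Grelentr} for the relative entropy between two Gaussians, applied twice in dimension one. First I would recall that, by \eqref{Gsharpdistr}, the target distributions $\Qo_{\vec u,\rho}$ and $\Po_{\vec v,\rho}$ are the one-dimensional normal laws $\Ncal(\vec u\cdot\vec a^\rho;\Var(\Qo_{\vec u,\rho}))$ and $\Ncal(\vec v\cdot\vec b^\rho;\Var(\Po_{\vec v,\rho}))$, and that, by \eqref{Gapprjmeas} (equivalently \eqref{+var+}), the marginals of $\Mo$ in the state $\rho$ are $\Mo_{1,\rho}=\Ncal(\vec u\cdot\vec a^\rho+a^\Mo;\Var(\Qo_{\vec u,\rho})+V^\Mo_{11})$ and $\Mo_{2,\rho}=\Ncal(\vec v\cdot\vec b^\rho+b^\Mo;\Var(\Po_{\vec v,\rho})+V^\Mo_{22})$. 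In particular, all four measures have strictly positive, hence invertible, variances (by \eqref{varvarcos} together with $V^\Mo_{11},V^\Mo_{22}\geq0$), so \eqref{Grelentr} applies term by term.

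Next I would substitute $p=\Qo_{\vec u,\rho}$, $q=\Mo_{1,\rho}$ into the one-dimensional instance of \eqref{Grelentr}. The mean-difference contribution is $(a^\Mo)^2/\Var(\Mo_{1,\rho})$, which is the first summand of $\Delta(\rho,\Mo)$. For the variance part, writing $\Var(\Mo_{1,\rho})=\Var(\Qo_{\vec u,\rho})(1+x)$ with $x=V^\Mo_{11}/\Var(\Qo_{\vec u,\rho})$, the trace term becomes $\frac{1}{1+x}-1=-\frac{x}{1+x}$ while the log-determinant term becomes $\log(1+x)=(\log\rme)\ln(1+x)$; together these give exactly $(\log\rme)\,s(x)$ with $s$ as in \eqref{def:s}. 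The momentum term is treated identically with $y=V^\Mo_{22}/\Var(\Po_{\vec v,\rho})$, yielding $(\log\rme)\,s(y)$ plus $(b^\Mo)^2/\Var(\Mo_{2,\rho})$. Adding the two relative entropies and carrying the factor $2$ of \eqref{Grelentr} over to the left-hand side produces \eqref{SalphaG}.

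The one point requiring care is the orientation of \eqref{Grelentr}: since the approximating measure always has the larger variance ($V^\Mo_{11},V^\Mo_{22}\geq0$), the ratio inside its logarithm is $\geq1$ and its trace term is $\leq0$, so one must keep the roles of the target variance and the approximating variance straight in order for the signs to combine into the single function $s$. Beyond this there is no real obstacle; what remains is the elementary verification of the stated properties of $s$. For that I would note that $s(0)=\ln 1-0=0$, that $s$ is $\Ccal^\infty$ on $[0,+\infty)$ because $1+x>0$ there, and that $s'(x)=\frac{1}{1+x}-\frac{1}{(1+x)^2}=\frac{x}{(1+x)^2}>0$ for $x>0$, so $s$ is strictly increasing and takes values in $[0,+\infty)$, as claimed.
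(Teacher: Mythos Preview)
Your proposal is correct and follows exactly the route the paper takes: the paper's proof is the single sentence ``a straightforward combination of \eqref{Gsharpdistr}, \eqref{Grelentr}, \eqref{Gapprjmeas} and \eqref{Salpha}'', and you have simply unpacked that combination in detail, together with the elementary check that $s$ has the stated properties.
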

\begin{proof}
The statement follows by a straightforward combination of \eqref{Gsharpdistr}, \eqref{Grelentr}, \eqref{Gapprjmeas} and \eqref{Salpha}.
\end{proof}

Note that the error function does not depend on the mixed covariances $\vec{u} \cdot C^\rho \vec{v}$ and $V^\Mo_{12}$.
Note also that, if we select a possible approximation $\Mo$, then the error function $S(\rho,\Mo)$ decreases for states $\rho$ with increasing sharp variances $\Var\left(\Qo_{\vec u,\rho}\right)$ and $\Var\left(\Po_{\vec v,\rho}\right)$: the loss of information decreases when the sharp distributions make the approximation error negligible. Finally, note that
$$
s(x)+s(y)= \ln[(1+x)(1+y) ] +(1+x)^{-1} +(1+y)^{-1}-2,
$$
\[
1+ x
=\frac{\Var\left(\Mo_{1,\rho}\right)}{\Var\left(\Qo_{\vec u,\rho}\right)},\qquad 1+y=\frac{\Var\left(\Mo_{2,\rho}\right)}{\Var\left(\Po_{\vec v,\rho}\right)}.
\]
This means that, apart from the term $\Delta(\rho,\Mo)$ due to the bias, our error function $S(\rho,\Mo)$ only depends on the two ratios ``variance of the approximating distribution over variance of the target distribution''. Thus, in order to optimize the error function, one has to optimize these two ratios.
We use formula \eqref{SalphaG} to firstly give a state dependent MUR, and then, following the scheme of \cite{BarGT16}, a state independent MUR.

A lower bound for the error function can be found by minimizing it over all possible approximate joint measurements of $Q_{\vec u}$ and $P_{\vec v}$.
First of all, let us remark that this minimization makes sense because we consider only $(\vec{u},\vec{v})$-covariant bi-observables: if we minimized over all possible bi-observables, then the minimum would be trivially zero for every given preparation $\rho$. Indeed, the trivial bi-observable $\Mo(A\times B) = \Qo_{\vec{u},\rho}(A)\Po_{\vec{v},\rho}(B)\,\id$ yields $S(\rho,\Mo)=0$.

When minimizing the error function over all $(\vec{u},\vec{v})$-covariant bi-observables, both the minimum and the best measurement attaining it are state dependent. When $\alpha=\pm \pi/2$, the two target observables are compatible, so that their joint measurement trivially exists (see Example \ref{ex:pvm}) and we get $\inf_{\Mo\in \Cscr_{\vec u,\vec v}} S(\rho,\Mo)=0$. In order to have explicit results for any angle $\alpha$, we consider only the Gaussian case.

\begin{theorem}[State dependent MUR, scalar observables]\label{prop:min_sigma}
For every $\rho\in \Gscr$ and $\Mo\in\Cscr_{\vec u,\vec v}^G$,
\begin{equation}\label{lbalpharho}
S(\Qo_{{\vec u},\rho}\|\Mo_{1,\rho}) + S(\Po_{{\vec v},\rho}\|\Mo_{2,\rho})\geq c_\rho(\alpha),
\end{equation}
where the lower bound is
\begin{equation}\label{lbalpharhoB}
\begin{aligned}
& c_\rho(\alpha) =s\left(z_\rho\right)\log \rme\\
&\quad =\left(\log \rme\right) \left\{\ln \left(1+\frac{\hbar|\cos\alpha|}{2\sqrt{\Var\left(\Qo_{\vec u,\rho}\right) \Var\left(\Po_{\vec v,\rho}\right)}}\right)-\frac{\hbar|\cos\alpha|}{\hbar|\cos\alpha|+2\sqrt{\Var\left(\Qo_{\vec u,\rho}\right) \Var\left(\Po_{\vec v,\rho}\right)}}\right\},
\end{aligned}
\end{equation}
with
\begin{equation}\label{defzrho}
z_\rho:=\frac{\hbar\abs{\cos\alpha}}{2\sqrt{ \Var\left(\Qo_{\vec u,\rho}\right) \Var\left(\Po_{\vec v,\rho}\right)}} \in [0,1].
\end{equation}
The lower bound is tight and the optimal measurement is unique: $c_\rho(\alpha) = S(\rho,\Mo_*)$, for a unique $\Mo_*\in\Cscr_{\vec u,\vec v}^G$; such a  Gaussian $(\vec{u},\vec{v})$-covariant bi-observable is characterized by
\begin{equation}\label{sigma*}
\mu^{\Mo_*}=0, \quad V^{\Mo_*}_{12}=0,
\qquad V^{\Mo_*}_{11} =\frac \hbar 2 \,\sqrt{\frac{\Var\left(\Qo_{\vec u,\rho}\right)}{\Var\left(\Po_{\vec v,\rho}\right)}}\abs{\cos\alpha},
\quad V^{\Mo_*}_{22} =\frac \hbar 2 \,\sqrt{\frac{\Var\left(\Po_{\vec v,\rho}\right)}{\Var\left(\Qo_{\vec u,\rho}\right)}}\abs{\cos\alpha}.
\end{equation}
\end{theorem}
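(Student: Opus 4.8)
The plan is to insert the explicit Gaussian formula \eqref{SalphaG} for the error function and minimise its right-hand side over all $\Mo\in\Cscr^G_{\vec u,\vec v}$, which are parametrised by $(\mu^\Mo,V^\Mo)$ subject only to \eqref{eq:ineq_gaussian_R2}. Writing $x=V^\Mo_{11}/\Var(\Qo_{\vec u,\rho})$ and $y=V^\Mo_{22}/\Var(\Po_{\vec v,\rho})$, we have $S(\rho,\Mo)=\tfrac{\log\rme}2\bigl[s(x)+s(y)+\Delta(\rho,\Mo)\bigr]$ with $\Delta(\rho,\Mo)\geq 0$, the latter vanishing exactly for the unbiased choice $a^\Mo=b^\Mo=0$; so the first step is to drop the bias and reduce to minimising $s(x)+s(y)$. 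The positivity condition \eqref{eq:ineq_gaussian_R2} gives $V^\Mo_{11}V^\Mo_{22}\geq\frac{\hbar^2}4(\cos\alpha)^2+(V^\Mo_{12})^2\geq\frac{\hbar^2}4(\cos\alpha)^2$, hence $xy\geq z_\rho^2$ with $z_\rho$ as in \eqref{defzrho}; conversely, any $x,y\geq 0$ with $xy\geq z_\rho^2$ is realised by some $\Mo\in\Cscr^G_{\vec u,\vec v}$ with $V^\Mo_{12}=0$. Thus the whole statement reduces to the scalar claim
\[
s(x)+s(y)\geq 2\,s(z_\rho)\qquad\text{for all }x,y\geq 0\text{ with }xy\geq z_\rho^2,
\]
since then $S(\rho,\Mo)\geq\tfrac{\log\rme}2\cdot 2s(z_\rho)=s(z_\rho)\log\rme=c_\rho(\alpha)$, and plugging $z_\rho$ into \eqref{def:s} gives the explicit form in \eqref{lbalpharhoB}.

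The key to the scalar claim is that $t\mapsto s(\rme^t)$ is \emph{convex} on $\Rbb$: from \eqref{def:s} one gets $s'(u)=u/(1+u)^2$ and $u\,s''(u)+s'(u)=2u/(1+u)^3>0$, whence $\frac{\rmd^2}{\rmd t^2}s(\rme^t)=\rme^{2t}s''(\rme^t)+\rme^t s'(\rme^t)>0$. This is the one genuinely delicate point — $s$ itself is convex only on $[0,1]$ and concave on $[1,+\infty)$, so an argument using $s$ directly would fail. Granting the convexity, Jensen's inequality at $\ln x$ and $\ln y$ and then the monotonicity of $s$ give
\[
s(x)+s(y)\geq 2\,s\bigl(\sqrt{xy}\bigr)\geq 2\,s(z_\rho),
\]
with equality in the first inequality iff $x=y$ and in the second iff $xy=z_\rho^2$; together these force $x=y=z_\rho$.

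For tightness and uniqueness I would track the equality cases backwards. Attaining $c_\rho(\alpha)$ requires $\Delta(\rho,\Mo)=0$, hence $\mu^{\Mo_*}=0$; it also requires $V^{\Mo_*}_{12}=0$ (so that $xy$ can reach $z_\rho^2$) and $x=y=z_\rho$, which are exactly the values of $V^{\Mo_*}_{11}$, $V^{\Mo_*}_{22}$ displayed in \eqref{sigma*}. Finally one checks that this $(\mu^{\Mo_*},V^{\Mo_*})$ satisfies \eqref{eq:ineq_gaussian_R2}, with equality in its product inequality, so $\Mo_*$ indeed belongs to $\Cscr^G_{\vec u,\vec v}$ and attains the bound; uniqueness follows because each inequality used is strict off its equality set. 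When $\cos\alpha=0$ we have $z_\rho=0$ and $c_\rho(\alpha)=0$, and $\Mo_*$ reduces to the sharp joint measurement of Example \ref{ex:pvm}, in agreement with \eqref{sigma*}. Beyond the convexity computation, everything is routine bookkeeping with the Gaussian formulae already established.
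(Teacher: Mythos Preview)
Your proof is correct and takes a genuinely different route at the key minimisation step. The paper first uses the monotonicity of $s$ to saturate the constraint $xy\geq z_\rho^{\,2}$, reducing to the one-variable problem of minimising $s(x)+s(z_\rho^{\,2}/x)$ over $x>0$; it then computes the derivative and factors it as
\[
\frac{\rmd}{\rmd x}\left(s(x)+s(z_\rho^{\,2}/x)\right)=\frac{(x^2-z_\rho^{\,2})(x^2+2z_\rho^{\,2} x+z_\rho^{\,2})}{x(z_\rho^{\,2}+x)^2(1+x)^2},
\]
from which $x=z_\rho$ is read off as the unique minimiser. You instead keep both variables and identify the structural fact that $t\mapsto s(\rme^t)$ is strictly convex on $\Rbb$, so that Jensen gives $s(x)+s(y)\geq 2\,s(\sqrt{xy})$ directly; monotonicity of $s$ then finishes the bound. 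Your argument is more conceptual and yields the equality case (hence uniqueness) immediately from strict convexity and strict monotonicity, whereas the paper tracks the sign of a factored derivative; the paper's computation, on the other hand, is completely elementary and self-contained. Both treatments dispose of the bias term, the parameter $V^\Mo_{12}$, and the degenerate case $\cos\alpha=0$ in the same way.
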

\begin{proof}
As already discussed, the case $\cos\alpha=0$ is trivial. If $\cos \alpha\neq 0$,
we have to minimize the error function \eqref{SalphaG} over $\Mo$. First of all we can eliminate the positive term $\Delta(\rho,\Mo)$ by taking an unbiased measurement. Then, since
$s$ is an increasing function, by the second condition in \eqref{varvarcos} we can also take $V^{\Mo_*}_{11}\,V^{\Mo_*}_{22}=
\frac{\hbar^2}4 \left(\cos \alpha\right)^2$. This implies $V^{\Mo_*}_{12}=0$ by \eqref{eq:ineq_gaussian_R2}. In this case the error function \eqref{SalphaG} reduces to
\[
S(\rho,\Mo_*)=\frac {\log \rme}2 \left(s(x)+s(z_\rho^{\,2}/x)\right),
\qquad  x =\frac{V^{\Mo_*}_{11}}{\Var\left(\Qo_{\vec u,\rho}\right)},
\]
with $z_\rho$ given by \eqref{defzrho};
by the first of \eqref{varvarcos}, we have $z_\rho\in (0,1]$.

Now, we can minimize the error function with respect to $x$ by studying its first derivative:
\[
\frac{\rmd \ }{\rmd x}\left(s(x)+s(z_\rho^{\,2}/x)\right)=\frac x{(1+x)^2}-\frac{z_\rho^{\,4}}{x(z_\rho^{\,2}+x)^2}
=\frac{\left(x^2-z_\rho^{\,2}\right)\left(x^2+2z_\rho^{\,2} x +z_\rho^{\,2}\right)}
{x\left(z_\rho^{\,2}+x\right)^2\left(1+x\right)^2}.
\]
Having $x>0$, we immediately get that $x=z_\rho$ gives the unique minimum. Thus
\[
S(\rho,\Mo)\geq S(\rho,\Mo_*)=s(z_\rho)\log \rme=\log(1+z_\rho)-\frac{z_\rho}{1+z_\rho}\,\log \rme,
\]
and
\[
V^{\Mo_*}_{11} =z_\rho \Var\left(\Qo_{\vec u,\rho}\right)\equiv\frac \hbar 2 \,\sqrt{\frac{\Var\left(\Qo_{\vec u,\rho}\right)}{\Var\left(\Po_{\vec v,\rho}\right)}}\abs{\cos\alpha},
\quad
V^{\Mo_*}_{22} =z_\rho \Var\left(\Po_{\vec v,\rho}\right)\equiv\frac \hbar 2 \,\sqrt{\frac{\Var\left(\Po_{\vec v,\rho}\right)}{\Var\left(\Qo_{\vec u,\rho}\right)}}\abs{\cos\alpha},
\]
which conclude the proof.
\end{proof}

\begin{remark}
The minimum information loss $c_\rho(\alpha)$ depends on both the preparation $\rho$ and the angle $\alpha$. When $\alpha\neq\pm\pi/2$, that is when the target observables are not compatible, $c_\rho(\alpha)$ is strictly grater than zero. This is a peculiar quantum effect: given $\rho$, $\vec u$ and $\vec v$, there is no Gaussian approximate joint measurement of $\Qo_{\vec u}$ and $\Po_{\vec v}$ that can approximate them arbitrarily well.
On the other side, in the limit $\alpha\to \pm\pi/2$, the lower bound $c_\rho(\alpha)$ goes to zero; so, the case of commuting target observables is approached with continuity.
\end{remark}

\begin{remark}
The lower bound $c_\rho(\alpha)$ goes to zero also in the classical limit $\hbar\to0$. This holds for every angle $\alpha$ and every Gaussian state $\rho$.
\end{remark}

\begin{remark}
Another case in which $c_\rho(\alpha)\to 0$ is the limit of large uncertainty states, that is, if we let the product $\Var\left(\Qo_{\vec u,\rho}\right)\,\Var\left(\Po_{\vec v,\rho}\right)\to\infty$: our entropic MUR disappears because, roughly speaking, the variance of (at least) one of the two target observables goes to infinity, its relative entropy vanishes by itself, and an optimal covariant bi-observable $\Mo_*$ has to take care of (at most) only the other target observable.
\end{remark}

\begin{remark}\label{scmacrolim}
Actually, something similar to the previous remark happens also at the macroscopic limit, and does not require the measuring instrument to be an optimal one; indeed, unbiasedness is enough in this case. This happens because the error function $S(\rho,\Mo)$ quantifies a relative error; even if the measurement approximation $\Mo$ is fixed, such an error can be reduced by suitably changing the preparation $\rho$. Indeed, if we consider the position and momentum of a macroscopic particle, for instance the center of mass of many particles, it is natural that its state has much larger position and momentum uncertainties than the intrinsic uncertainties of the measuring instrument; that is, $\frac{V^\Mo_{11}}{\Var\left(\Qo_{\vec u,\rho}\right)}\ll1$ and $\frac{V^\Mo_{22}}{\Var\left(\Po_{\vec v,\rho}\right)}\ll1$, implying that the error function \eqref{SalphaG} is negligible. In practice, this is a classical case: the preparation has large position and momentum uncertainties and the measuring instrument is relatively good. In this situation we do not see the difference between the joint measurement of position and momentum and their separate sharp observations.
\end{remark}

\begin{remark}
The optimal approximating joint measurement $\Mo_*\in\Cscr_{\vec u,\vec v}^G$ is unique; by \eqref{sigma*} it depends on the preparation $\rho$ one is considering, as well as on the directions $\vec u$ and $\vec v$. A realization of $\Mo_*$  is the measuring procedure of Example \ref{ex:Delta}.
\end{remark}

\begin{remark}
The MUR \eqref{lbalpharho} is scale invariant, as both the error function $S(\rho,\Mo)$ and the lower bound $c_\rho(\alpha)$ are such.
\end{remark}

\begin{remark}
For $\cos \alpha\neq 0$, we get $\inf_{\Mo\in \Cscr^G_{\vec u,\vec v}}S(\rho,\Mo)=s(z_\rho)\log\rme$, where $z_\rho$ is defined by \eqref{defzrho}. As $z_\rho$ ranges in the interval $(0,1]$, the quantity
$\inf_{\Mo\in \Cscr^G_{\vec u,\vec v}}S(\rho,\Mo)$ takes all the values in the interval $\left(0,\;1
-\frac{\log\rme}{2}\right]$,
so that
\begin{equation}\label{supinf3}
\sup_{\rho\in \Gscr}\inf_{\Mo\in \Cscr^G_{\vec u,\vec v}}S(\rho,\Mo)=1 -\frac{\log\rme}{2}.
\end{equation}
In order to get this result, we needed $\cos
\alpha\neq 0$; however, the final result does not depend on $\alpha$. Therefore, in the $\sup_{\rho}\inf_{\Mo}$-approach
of \eqref{supinf3},
the continuity from quantum to classical is lost.
\end{remark}

\subsubsection{Entropic divergence of $\Qo_{\vec u}, \Po_{\vec v}$ from $\Mo$}\label{sec:uventdiv}

Now we want to find an entropic quantification of the error made in observing $\Mo\in\Cscr_{\vec u,\vec v}$ as an approximation of $\Qo_{\vec u}$ and $\Po_{\vec v}$ in an arbitrary state $\rho$. The procedure of \cite{BarGT16}, already suggested in \cite[Sect.\ VI.C]{BLW14a} for a different error function, is to consider the worst case by maximizing the error function over all the states. However, in the continuous framework this is not possible for the error function \eqref{Salpha}; indeed, from \eqref{SalphaG} we get $\sup_{\rho\in \Gscr}S(\rho,\Mo)=+\infty$ even if we restrict to unbiased covariant bi-observables.

Anyway, the reason for $S(\rho,\Mo)$ to diverge is classical: it depends only on the continuous nature of $Q_{\vec u}$ and $P_{\vec v}$, without any relation to their (quantum) incompatibility. Indeed, as we noted in Section \ref{relentr}, if an instrument measuring a random variable $X\sim\Ncal(a;\alpha^2)$ adds an independent noise $\nu\sim\Ncal(b;\beta^2)$, thus producing an output $X+\nu\sim\Ncal(a+b;\alpha^2+\beta^2)$, then the relative entropy $S(X\|X+\nu)$ diverges for $\alpha^2\to0$; this is what happens if we fix the noise and we allow for arbitrarily peaked preparations.
Thus, the sum $S(\Qo_{\vec u,\rho}\|\Mo_{1,\rho}) + S(\Po_{\vec v,\rho}\|\Mo_{2,\rho})$ diverges if, fixed $\Mo$, we let
$\Var(\Qo_{\vec u,\rho})$ or $\Var(\Po_{\vec v,\rho})$ go to 0.

The difference between the classical and quantum frameworks emerges if we bound from below the variances of the sharp position and momentum observables.
Indeed, in the classical framework we have $\inf_{b,\beta^2}\sup_{\alpha^2\geq \vb}S(X\|X+\nu)=0$ for every $\vb>0$; the same holds for the sum of two relative entropies if no relation exists between the two noises.
On the contrary, in the quantum framework the entropic MURs appear due to the relation between the position and momentum errors occurring in any approximate joint measurement.

In order to avoid that $S(\rho,\Mo)\to +\infty$ due to merely classical effects, we thus introduce the following subset of the Gaussian states:\marginpar{$\Gscr_{\vec \vb}^{\vec u,\vec v}$}
\begin{equation}\label{eq:Guv_eps}
\Gscr_{\vec \vb}^{\vec u,\vec v}:=\left\{\rho \in \Gscr: \Var\left(\Qo_{\vec u,\rho}\right)\geq \vb_1, \  \Var\left(\Po_{\vec v,\rho}\right) \geq \vb_2\right\}, \qquad  \vb_i>0 ,
\end{equation}
and we evaluate the error made in approximating $\Qo_{\vec{u}}$ and $\Po_{\vec{v}}$ with the marginals of a $(\vec{u},\vec{v})$-covariant bi-observable by maximizing the error function over all these states.
\begin{definition}\label{def:div_scal}
The \emph{Gaussian $\vec\vb$-entropic divergence} of $\Qo_{\vec u}, \Po_{\vec v}$ from $\Mo\in\Cscr_{\vec u,\vec v}$ is
\begin{equation}\label{def:uvDG}
D^G_{\vec \vb}(\Qo_{\vec u}, \Po_{\vec v}\|\Mo):= \sup_{\rho\in \Gscr_{\vec \vb}^{\vec u,\vec v}} S(\rho,\Mo).
\end{equation}
\end{definition}

For Gaussian $\Mo$, depending on the choice of the thresholds $\vb_1$ and $\vb_2$, the divergence $D^G_{\vec \vb}(\Qo_{\vec u}, \Po_{\vec v}\|\Mo)$ can be easily computed or at least bounded.

\begin{theorem}\label{prop:uventdiv}
Let the bi-observable $\Mo\in\Cscr_{\vec u,\vec v}^G$ be fixed.
\begin{enumerate}[(i)]
\item For $\displaystyle\vb_1\vb_2\geq \frac{\hbar^2}4 \left(\cos \alpha\right)^2$, the divergence $D_{\vec\vb}^G(\Qo_{\vec u}, \Po_{\vec v}\|\Mo)$ is given by
\begin{equation}\label{DGuv}
D_{\vec\vb}^G(\Qo_{\vec u}, \Po_{\vec v}\|\Mo)=S(\rho_{\vec\vb}(\vec u,\vec v),\Mo)=\frac {\log \rme}2\left[s(x_{\vec\vb})+s(y_{\vec\vb})+\Delta(\vec\vb;\Mo)\right],
\end{equation}
where $\rho_{\vec\vb}(\vec u,\vec v)$ is any Gaussian state with $\Var\left(\Qo_{\vec u,\rho_{\vec\vb}(\vec u,\vec v)}\right)= \vb_1$ and  $\Var\left(\Po_{\vec v,\rho_{\vec\vb}(\vec u,\vec v)}\right) = \vb_2$, and
\[
x_{\vec\vb}:=\frac{V^\Mo_{11}}{\vb_1},  \quad y_{\vec\vb}:=\frac{V^\Mo_{22}}{\vb_2}, \qquad \Delta(\vec \vb;\sigma) :=\frac{(a^\Mo)^2}{V^\Mo_{11} +\vb_1} + \frac{(b^\Mo)^2}{V^\Mo_{22}+\vb_2}.
\]

\item For $\displaystyle\vb_1\vb_2<\frac{\hbar^2}4 \left(\cos \alpha\right)^2$, the divergence $D_{\vec\vb}^G(\Qo_{\vec u}, \Po_{\vec v}\|\Mo)$ is bounded from below by
\begin{equation}\label{DGuvB}
D_{\vec\vb}^G(\Qo_{\vec u}, \Po_{\vec v}\|\Mo)\geq S(\rho_{\vec\vb}(\vec u,\vec v),\Mo)=\frac {\log \rme}2\left[s(x_{\vec\vb})+s(y_{\vec\vb})+\Delta(\vec\vb;\Mo)\right],
\end{equation}
where $\rho_{\vec\vb}(\vec u,\vec v)$ is any Gaussian state with $\Var\left(\Qo_{\vec u,\rho_{\vec\vb}(\vec u,\vec v)}\right)= \vb_1$ and  $\displaystyle\Var\left(\Po_{\vec v,\rho_{\vec\vb}(\vec u,\vec v)}\right) = \frac{\hbar^2}{4\vb_1} \left(\cos \alpha\right)^2$, and
\[
x_{\vec\vb}:=\frac{V^\Mo_{11}}{\vb_1},  \quad y_{\vec\vb}:=\frac{4\vb_1\,V^\Mo_{22}}{\hbar^2\left(\cos \alpha\right)^2}, \qquad
\Delta(\vec \vb;\sigma) :=\frac{(a^\Mo)^2}{V^\Mo_{11} +\vb_1} + \frac{(b^\Mo)^2}{V^\Mo_{22}+\frac{\hbar^2}{4\vb_1} \left(\cos \alpha\right)^2}.
\]
\end{enumerate}
\end{theorem}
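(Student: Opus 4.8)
The plan is to reduce the supremum defining $D^G_{\vec\vb}(\Qo_{\vec u},\Po_{\vec v}\|\Mo)$ to a two-variable optimization over the admissible pairs of sharp variances, and then exploit monotonicity. By \eqref{SalphaG}, for fixed $\Mo\in\Cscr^G_{\vec u,\vec v}$ the error function $S(\rho,\Mo)$ depends on the Gaussian state $\rho$ only through the pair $c_Q:=\Var(\Qo_{\vec u,\rho})$ and $c_P:=\Var(\Po_{\vec v,\rho})$ (the means drop out of \eqref{Grelentr} because the two marginal distributions differ by the fixed shifts $a^\Mo$, $b^\Mo$). Writing $S(\rho,\Mo)=\Phi(c_Q,c_P)$ with
\[
\Phi(c_Q,c_P)=\frac{\log\rme}{2}\left[s\!\left(\frac{V^\Mo_{11}}{c_Q}\right)+s\!\left(\frac{V^\Mo_{22}}{c_P}\right)+\frac{(a^\Mo)^2}{c_Q+V^\Mo_{11}}+\frac{(b^\Mo)^2}{c_P+V^\Mo_{22}}\right],
\]
I would first record that, since $s$ is increasing with $s(0)=0$ and since $t\mapsto V^\Mo_{11}/t$ and $t\mapsto(a^\Mo)^2/(t+V^\Mo_{11})$ are non-increasing on $(0,\infty)$ (and likewise for the momentum terms), the function $\Phi$ is non-increasing in each of its two arguments separately. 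Thus enlarging the sharp variances can only decrease the error.

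Next I would identify the feasible set. By Proposition \ref{admissvar}, a Gaussian state with $\Var(\Qo_{\vec u,\rho})=c_Q$ and $\Var(\Po_{\vec v,\rho})=c_P$ exists if and only if $c_Qc_P\geq\frac{\hbar^2}{4}(\cos\alpha)^2$ (the variance matrix $\mathrm{diag}(A,B)$ built there satisfies $V_+\geq0$, hence is carried by a Gaussian state). Since $\Phi$ depends only on $(c_Q,c_P)$ and every admissible pair is realized, intersecting with the constraints $c_Q\geq\vb_1$, $c_P\geq\vb_2$ defining $\Gscr_{\vec\vb}^{\vec u,\vec v}$ yields
\[
D^G_{\vec\vb}(\Qo_{\vec u},\Po_{\vec v}\|\Mo)=\sup\Big\{\Phi(c_Q,c_P):\ c_Q\geq\vb_1,\ c_P\geq\vb_2,\ c_Qc_P\geq\tfrac{\hbar^2}{4}(\cos\alpha)^2\Big\}.
\]
For part (i), if $\vb_1\vb_2\geq\frac{\hbar^2}{4}(\cos\alpha)^2$ then $(\vb_1,\vb_2)$ is itself admissible and is the componentwise minimum of the feasible region, so by the monotonicity of $\Phi$ the supremum is attained there; taking any Gaussian state $\rho_{\vec\vb}(\vec u,\vec v)$ with exactly these two sharp variances (it exists by the same inequality and Proposition \ref{admissvar}) and substituting $c_Q=\vb_1$, $c_P=\vb_2$ into \eqref{SalphaG} produces \eqref{DGuv}.

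For part (ii), if $\vb_1\vb_2<\frac{\hbar^2}{4}(\cos\alpha)^2$ (which forces $\cos\alpha\neq0$), the feasible region has no least element, but the pair $\big(\vb_1,\frac{\hbar^2}{4\vb_1}(\cos\alpha)^2\big)$ is admissible: its product equals $\frac{\hbar^2}{4}(\cos\alpha)^2$ and its second component $\frac{\hbar^2}{4\vb_1}(\cos\alpha)^2>\vb_2$. Hence a Gaussian state $\rho_{\vec\vb}(\vec u,\vec v)$ with these sharp variances exists, and evaluating $\Phi$ at this point gives the lower bound $D^G_{\vec\vb}\geq\Phi\big(\vb_1,\frac{\hbar^2}{4\vb_1}(\cos\alpha)^2\big)$; substituting $c_Q=\vb_1$ and $c_P=\frac{\hbar^2}{4\vb_1}(\cos\alpha)^2$ into \eqref{SalphaG} yields \eqref{DGuvB}. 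The only real subtlety, and the reason part (ii) gives merely an inequality, is the shape of the set of realizable variance pairs: Proposition \ref{admissvar} shows it to be $\{c_Qc_P\geq\frac{\hbar^2}{4}(\cos\alpha)^2\}$, which together with $c_Q\geq\vb_1$, $c_P\geq\vb_2$ has a unique componentwise-minimal point precisely when $\vb_1\vb_2\geq\frac{\hbar^2}{4}(\cos\alpha)^2$; otherwise its Pareto frontier is an arc of the hyperbola $c_Qc_P=\frac{\hbar^2}{4}(\cos\alpha)^2$ along which $\Phi$ would have to be optimized to obtain the exact value, and here we are content with the single convenient feasible point above, which already suffices for the MUR to follow.
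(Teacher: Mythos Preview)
Your proof is correct and follows essentially the same approach as the paper: both reduce the supremum over $\Gscr_{\vec\vb}^{\vec u,\vec v}$ to an optimization over admissible variance pairs $(c_Q,c_P)$ via Proposition~\ref{admissvar}, and then exploit the monotonicity of the error function \eqref{SalphaG} in each variance to locate the maximizer at $(\vb_1,\vb_2)$ in case~(i) and to obtain the lower bound at the feasible point $\big(\vb_1,\frac{\hbar^2}{4\vb_1}(\cos\alpha)^2\big)$ in case~(ii). Your discussion of the Pareto frontier in case~(ii) is a nice clarification of why only an inequality is claimed there.
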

The existence of the above  states $\rho_{\vec\vb}(\vec u,\vec v)$  is guaranteed by Proposition \ref{admissvar}.
\begin{proof}
By Proposition \ref{admissvar}, maximizing the error function over the states in $\Gscr_{\vec \vb}^{\vec u,\vec v}$ is the same as maximizing \eqref{SalphaG} with \eqref{+var+} over the parameters $\Var\left(\Qo_{\vec u,\rho}\right)$ and  $\Var\left(\Po_{\vec v,\rho}\right)$ satisfying \eqref{varvarcos} and \eqref{eq:Guv_eps}.

(i) In the case $\displaystyle\vb_1\vb_2\geq \frac{\hbar^2}4 \left(\cos \alpha\right)^2$, the thresholds  themselves satisfy Heisenberg uncertainty relation, and so equality \eqref{DGuv} follows from the expression \eqref{SalphaG} and the fact the functions $s(x)$, $s(y)$, $\Delta(\rho,\Mo)$ are decreasing in $\Var\left(\Qo_{\vec u,\rho}\right)$ and  $\Var\left(\Po_{\vec v,\rho}\right)$.

(ii) In the case $\displaystyle\vb_1\vb_2<\frac{\hbar^2}4 \left(\cos \alpha\right)^2$, we have to take into account the relation \eqref{varvarcos} for $\Var\left(\Qo_{\vec u,\rho}\right)$ and  $\Var\left(\Po_{\vec v,\rho}\right)$: the supremum of $S(\rho,\Mo)$ is achieved when $\Var\left(\Qo_{\vec u,\rho}\right)\,\Var\left(\Po_{\vec v,\rho}\right)=\frac{\hbar^2}4 \left(\cos \alpha\right)^2$, with $\Var\left(\Qo_{\vec u,\rho}\right)\geq \vb_1$ and  $\Var\left(\Po_{\vec v,\rho}\right) \geq \vb_2$. Then inequality \eqref{DGuvB} follows by chosing $\Var\left(\Qo_{\vec u,\rho}\right)=\vb_1$ and $\displaystyle\Var\left(\Po_{\vec v,\rho}\right)=\frac{\hbar^2}{4\vb_1} \left(\cos \alpha\right)^2$.
\end{proof}
\begin{remark}
The conditions on the states $\rho_{\vec\vb}(\vec u,\vec v)$ do not depend on $\Mo$, but only on the parameters defining $\Gscr_{\vec\vb}^{\vec u,\vec v}$. Thus, in the case $\vb_1\vb_2\geq \frac{\hbar^2}4 \left(\cos \alpha\right)^2$, any choice of $\rho_{\vec\vb}(\vec u,\vec v)$ yields a state which is the worst one for every Gaussian approximate joint measurement $\Mo$.
\end{remark}

\subsubsection{Entropic incompatibility degree of $\Qo_{\vec u}$ and $\Po_{\vec v}$}\label{sec:uveid}

The last step is to optimize the state independent $\vec{\vb}$-entropic divergence \eqref{def:uvDG} over all the approximate joint measurements of $\Qo_{\vec u}$ and $\Po_{\vec v}$. This is done in the next definition.
\begin{definition}
The \emph{Gaussian $\vec\vb$-entropic incompatibility degree} of $\Qo_{\vec u}$, $\Po_{\vec v}$ is
\begin{equation}\label{def:cincuv}
c_{\rm inc}^G(\Qo_{\vec u},\Po_{\vec v};\vec\vb):=\inf_{\Mo\in\Cscr_{\vec u,\vec v}^G} D_{\vec\vb}^G(\Qo_{\vec u}, \Po_{\vec v}\|\Mo)\equiv \inf_{\Mo\in\Cscr_{\vec u,\vec v}^G} \sup_{\rho\in \Gscr_{\vec\vb}^{\vec u,\vec v}}  S(\rho,\Mo).
\end{equation}
\end{definition}

Again, depending on the choice of the thresholds $\vb_1$ and $\vb_2$, the entropic incompatibility degree  $c_{\rm inc}^G(\Qo_{\vec u},\Po_{\vec v};\vec\vb)$ can be easily computed or at least bounded.

\begin{theorem}\label{prop:uvincdeg}
\begin{enumerate}[(i)]
\item For $\displaystyle\vb_1\vb_2\geq \frac{\hbar^2}4 \left(\cos \alpha\right)^2$, the incompatibility degree $c_{\rm inc}^G(\Qo_{\vec u},\Po_{\vec v};\vec\vb)$ is given by
\begin{equation}\label{MUR0}
c_{\rm inc}^G(\Qo_{\vec u},\Po_{\vec v};\vec\vb) =(\log \rme)\left\{\ln\left(1+\frac {\hbar \abs{\cos \alpha}} {2\sqrt{\vb_1\vb_2}}\right)-\frac{\hbar \abs{\cos \alpha}}{2\sqrt{\vb_1\vb_2}+\hbar \abs{\cos \alpha}}\right\}.
\end{equation}
The infimum in \eqref{def:cincuv} is attained and the optimal measurement is unique, in the sense that
\begin{equation}\label{uvcinc1}
c_{\rm inc}^G(\Qo_{\vec u},\Po_{\vec v};\vec\vb)
= D_{\vec\vb}^G(\Qo_{\vec u}, \Po_{\vec v}\|\Mo_{\vec\vb})
\end{equation}
for a unique $\Mo_{\vec\vb}\in \Cscr^G_{\vec u,\vec v}$; such a bi-observable is characterized by
\begin{equation}\label{uvcinc2}
a^{\Mo_{\vec\vb}}=0,\quad b^{\Mo_{\vec\vb}}=0,\quad
V^{\Mo_{\vec\vb}}_{11} = \frac \hbar 2 \, \sqrt{\frac{\vb_1}{\vb_2}}\abs{\cos \alpha}, \quad V^{\Mo_{\vec\vb}}_{22} = \frac \hbar 2 \, \sqrt{\frac{\vb_2}{\vb_1}}\abs{\cos \alpha},\quad V^{\Mo_{\vec\vb}}_{12} =0.
\end{equation}
\item For $\displaystyle\vb_1\vb_2<\frac{\hbar^2}4 \left(\cos \alpha\right)^2$, the incompatibility degree $c_{\rm inc}^G(\Qo_{\vec u},\Po_{\vec v};\vec\vb)$ is bounded from below by
\begin{equation}\label{uvcinc1b}
c_{\rm inc}^G(\Qo_{\vec u},\Po_{\vec v};\vec\vb) \geq (\log \rme)\left\{\ln\left(2\right)-\frac{1}{2}\right\}.
\end{equation}
The latter bound is
\begin{equation}\label{uvcinc3}
(\log \rme)\left\{\ln\left(2\right)-\frac{1}{2}\right\}= S\big(\rho_{\vec\vb}(\vec u,\vec v),\Mo_{\vec\vb}\big) = \inf_{\Mo\in \Cscr^G_{\vec u,\vec v}} S\big(\rho_{\vec\vb}(\vec u,\vec v),\Mo\big),
\end{equation}
where the state \ $\rho_{\vec\vb}(\vec u,\vec v)$ \ is defined in item (ii) of Theorem \ref{prop:uventdiv} and \ $\Mo_{\vec\vb}$ \ is the bi-observable in $\Cscr^G_{\vec u,\vec v}$ such that
\begin{equation}\label{uvcinc4}
a^{\Mo_{\vec\vb}}=0,\qquad b^{\Mo_{\vec\vb}}=0,\qquad
V^{\Mo_{\vec\vb}}_{11} = \vb_1, \qquad V^{\Mo_{\vec\vb}}_{22} = \frac{\hbar^2}{4\vb_1} \left(\cos \alpha\right)^2,\qquad V^{\Mo_{\vec\vb}}_{12} =0.
\end{equation}
\end{enumerate}
\end{theorem}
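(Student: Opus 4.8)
The plan is to substitute the explicit formulae for $D^G_{\vec\vb}(\Qo_{\vec u},\Po_{\vec v}\|\Mo)$ furnished by Theorem~\ref{prop:uventdiv} into the outer infimum over $\Mo\in\Cscr^G_{\vec u,\vec v}$, and to reuse the one-variable optimisation already performed in the proof of Theorem~\ref{prop:min_sigma}.

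\emph{Case (i).} By Theorem~\ref{prop:uventdiv}(i), every $\Mo\in\Cscr^G_{\vec u,\vec v}$ satisfies $D^G_{\vec\vb}(\Qo_{\vec u},\Po_{\vec v}\|\Mo)=\frac{\log\rme}{2}\big[s(V^\Mo_{11}/\vb_1)+s(V^\Mo_{22}/\vb_2)+\Delta(\vec\vb;\Mo)\big]$, so I would minimise this over the parameters $(\mu^\Mo,V^\Mo)$ subject to \eqref{eq:ineq_gaussian_R2}, in three steps. First, $\Delta(\vec\vb;\Mo)\ge 0$ vanishes iff $a^\Mo=b^\Mo=0$, so one restricts to unbiased measurements. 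Second, $V^\Mo_{12}$ is absent from the objective and only tightens \eqref{eq:ineq_gaussian_R2}, so for the lower bound it is enough to use $V^\Mo_{11}V^\Mo_{22}\ge\frac{\hbar^2}{4}(\cos\alpha)^2$. Third, since $s$ is strictly increasing, with $x:=V^\Mo_{11}/\vb_1$ and $y:=V^\Mo_{22}/\vb_2$ the task becomes minimising $s(x)+s(y)$ over $x,y\ge 0$ with $xy\ge \hbar^2(\cos\alpha)^2/(4\vb_1\vb_2)=:t^{\,2}$, where $t\le 1$ by the hypothesis $\vb_1\vb_2\ge\frac{\hbar^2}{4}(\cos\alpha)^2$; monotonicity of $s$ puts the optimum on the hyperbola $xy=t^{\,2}$, and then the derivative computation $\frac{\rmd}{\rmd x}\big(s(x)+s(t^{\,2}/x)\big)=\frac{(x^2-t^{\,2})(x^2+2t^{\,2}x+t^{\,2})}{x(t^{\,2}+x)^2(1+x)^2}$, i.e.\ the one in the proof of Theorem~\ref{prop:min_sigma} with $t$ replacing $z_\rho$, gives the unique minimiser $x=y=t$ with value $2s(t)$. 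Thus $D^G_{\vec\vb}(\Qo_{\vec u},\Po_{\vec v}\|\Mo)\ge(\log\rme)\,s(t)$, which is \eqref{MUR0}; equality forces $a^\Mo=b^\Mo=0$, $V^\Mo_{11}=t\vb_1$, $V^\Mo_{22}=t\vb_2$ (the parameters \eqref{uvcinc2}), and then $V^\Mo_{11}V^\Mo_{22}=\frac{\hbar^2}{4}(\cos\alpha)^2$ forces $V^\Mo_{12}=0$ by \eqref{eq:ineq_gaussian_R2}; conversely this $\Mo_{\vec\vb}$ saturates \eqref{eq:ineq_gaussian_R2}, hence lies in $\Cscr^G_{\vec u,\vec v}$ and attains the bound, proving \eqref{uvcinc1} and uniqueness.

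\emph{Case (ii).} Here $t>1$ and only a lower bound is claimed. By Theorem~\ref{prop:uventdiv}(ii), $D^G_{\vec\vb}(\Qo_{\vec u},\Po_{\vec v}\|\Mo)\ge S(\rho_{\vec\vb}(\vec u,\vec v),\Mo)$ holds for \emph{every} $\Mo\in\Cscr^G_{\vec u,\vec v}$, where $\rho_{\vec\vb}(\vec u,\vec v)$ is a fixed Gaussian state with $\Var(\Qo_{\vec u,\rho})=\vb_1$ and $\Var(\Po_{\vec v,\rho})=\frac{\hbar^2}{4\vb_1}(\cos\alpha)^2$, whose existence is granted by Proposition~\ref{admissvar}. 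Taking the infimum over $\Mo$ gives $c^G_{\rm inc}(\Qo_{\vec u},\Po_{\vec v};\vec\vb)\ge\inf_{\Mo}S(\rho_{\vec\vb}(\vec u,\vec v),\Mo)$. Since $\Var(\Qo_{\vec u,\rho_{\vec\vb}})\,\Var(\Po_{\vec v,\rho_{\vec\vb}})=\frac{\hbar^2}{4}(\cos\alpha)^2$, the quantity $z_{\rho_{\vec\vb}}$ of \eqref{defzrho} equals $1$, and Theorem~\ref{prop:min_sigma} yields $\inf_{\Mo}S(\rho_{\vec\vb}(\vec u,\vec v),\Mo)=s(1)\log\rme=(\log\rme)(\ln 2-\frac12)$, attained at the unique bi-observable described by \eqref{sigma*}; inserting the two variances of $\rho_{\vec\vb}(\vec u,\vec v)$ into \eqref{sigma*} reproduces the parameters \eqref{uvcinc4}. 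This gives \eqref{uvcinc1b} and the chain of equalities \eqref{uvcinc3}.

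\emph{Main obstacle.} No genuinely new estimate is required: the explicit divergence formulae are Theorem~\ref{prop:uventdiv}, and the scalar calculus $\min_{x>0}\{s(x)+s(t^{\,2}/x)\}$ is already in the proof of Theorem~\ref{prop:min_sigma}. The delicate points are the correct ordering of the three reductions in Case~(i) and the uniqueness bookkeeping — in particular the remark that the optimum necessarily lies on the boundary $V^\Mo_{11}V^\Mo_{22}=\frac{\hbar^2}{4}(\cos\alpha)^2$, which is precisely what pins $V^\Mo_{12}$ to $0$ — together with recognising in Case~(ii) that $\rho_{\vec\vb}(\vec u,\vec v)$ is a minimum-uncertainty state, so that Theorem~\ref{prop:min_sigma} applies there with $z_\rho=1$.
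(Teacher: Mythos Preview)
Your proposal is correct and follows essentially the same approach as the paper: the paper's proof simply says that, in each case, one invokes the explicit divergence expression from Theorem~\ref{prop:uventdiv} and then repeats the optimisation of Theorem~\ref{prop:min_sigma} with $\Var(\Qo_{\vec u,\rho})\mapsto\vb_1$ and $\Var(\Po_{\vec v,\rho})\mapsto\vb_2$ (respectively $\frac{\hbar^2}{4\vb_1}(\cos\alpha)^2$), which is exactly what you spell out in detail. Your three-step reduction in Case~(i) and the observation that $z_{\rho_{\vec\vb}}=1$ in Case~(ii) are precisely the content of those substitutions.
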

\begin{proof}
(i) In the case $\displaystyle\vb_1\vb_2\geq \frac{\hbar^2}4 \left(\cos \alpha\right)^2$, due to \eqref{DGuv}, the proof is the same as that of Theorem \ref{prop:min_sigma} with the replacements $\Var\left(\Qo_{\vec u,\rho}\right)\mapsto \vb_1$ and $\Var\left(\Po_{\vec v,\rho}\right)\mapsto \vb_2$.

(ii) In the case $\displaystyle\vb_1\vb_2<\frac{\hbar^2}4 \left(\cos \alpha\right)^2$, starting from \eqref{DGuvB}, the proof is the same as that of Theorem \ref{prop:min_sigma} with the replacements $\Var\left(\Qo_{\vec u,\rho}\right)\mapsto \vb_1$ and $\Var\left(\Po_{\vec v,\rho}\right)\mapsto \frac{\hbar^2}{4\vb_1} \left(\cos \alpha\right)^2$.
\end{proof}

\begin{remark}[State independent MUR, scalar observables]\label{rem:MURscal}
By means of the above results, we can formulate a state independent entropic MUR for the position $Q_{\vec u}$ and the momentum $P_{\vec v}$ in the following way. Chosen two positive thresholds $\vb_1$ and $\vb_2$, there exists a preparation $\rho_{\vec\vb}(\vec u,\vec v)\in \Gscr_{\vec\vb}^{\vec u,\vec v}$ (introduced in Theorem \ref{prop:uventdiv}) such that, for all Gaussian approximate joint measurements $\Mo$ of $Q_{\vec u}$ and $P_{\vec v}$, we have
\begin{multline}\label{cincbounduv}
S(\Qo_{\vec u,\rho_{\vec\vb}(\vec u,\vec v)}\|\Mo_{1,\rho_{\vec\vb}(\vec u,\vec v)}) + S(\Po_{\vec v,\rho_{\vec\vb}(\vec u,\vec v)}\|\Mo_{2,\rho_{\vec\vb}(\vec u,\vec v)}) \\ {}\geq
\begin{cases}
\displaystyle(\log \rme)\left\{\ln\left(1+\frac {\hbar \abs{\cos \alpha}} {2\sqrt{\vb_1\vb_2}}\right)-\frac{\hbar \abs{\cos \alpha}}{2\sqrt{\vb_1\vb_2}+\hbar \abs{\cos \alpha}}\right\},&\text{if }\displaystyle\vb_1\vb_2\geq \frac{\hbar^2}4 \left(\cos \alpha\right)^2,\\
\displaystyle(\log \rme)\left\{\ln\left(2\right)-\frac{1}{2}\right\},&\text{if }\displaystyle\vb_1\vb_2<\frac{\hbar^2}4 \left(\cos \alpha\right)^2.
\end{cases}
\end{multline}
The inequality follows by \eqref{DGuv} and \eqref{MUR0} in the case $\vb_1\vb_2\geq \frac{\hbar^2}4 \left(\cos \alpha\right)^2$, and \eqref{uvcinc3} in the case $\vb_1\vb_2<\frac{\hbar^2}4 \left(\cos \alpha\right)^2$.

What is relevant is that, for every approximate joint measurement $\Mo$, the total information loss $S(\rho,\Mo)$ does exceed the lower bound \eqref{cincbounduv} even if the set of states $\Gscr^{\vec{u},\vec{v}}_{\vec \vb}$ forbids preparations $\rho$ with too peaked target distributions. Indeed, without the thresholds $\epsilon_1$, $\epsilon_2$, it would be trivial to exceed the lower bound \eqref{cincbounduv}, as we noted in Section \ref{sec:uventdiv}.

We also remark that, chosen $\vb_1$ and $\vb_2$, we found a single state $\rho_{\vec\vb}(\vec u,\vec v)$ in $\Gscr^{\vec{u},\vec{v}}_{\vec \vb}$ that satisfies \eqref{cincbounduv} for every $\Mo$, so that $\rho_{\vec\vb}(\vec u,\vec v)$ is a `bad' state for all Gaussian approximate joint measurements of position and momentum.

When $\vb_1\vb_2\geq \frac{\hbar^2}4 \left(\cos \alpha\right)^2$, the optimal approximate joint measurement $\Mo_{\vec \vb}$ is unique in the class of Gaussian $(\vec u,\vec v)$-covariant bi-observables; it depends only on the class of preparations $\Gscr_{\vec \vb}^{\vec u,\vec v}$: it is the best measurement for the worst choice of the preparation in the class $\Gscr_{\vec \vb}^{\vec u,\vec v}$.
\end{remark}

\begin{remark}
The entropic incompatibility degree $c_{\rm inc}^G(\Qo_{\vec u},\Po_{\vec v};\vec\vb)$ is strictly positive for $\cos \alpha \neq 0$ (incompatible target observables) and it goes to zero in the limits $\alpha\to \pm\pi/2$ (compatible observables), $\hbar\to0$ (classical limit), and $\vb_1\vb_2\to\infty$ (large uncertainty states).
\end{remark}

\begin{remark}
The scale invariance of the relative entropy extends to the error function $S(\rho,\Mo)$, hence to the divergence $D_{\vec\vb}^G(\Qo_{\vec u}, \Po_{\vec v}\|\Mo)$ and
the entropic incompatibility degree $c_{\rm inc}^G(\Qo_{\vec u},\Po_{\vec v};\vec\vb)$, as well as the entropic MUR \eqref{cincbounduv}.
\end{remark}

\subsection{Vector observables}

Now the target observables are $\vec Q$ and $\vec P$ given in \eqref{vecQP}, with pvm's $\Qo$ and $\Po$; the approximating bi-observables are the covariant phase-space observables $\Cscr$ of Definition \ref{def:cov_ph-sp}. Each bi-observable $\Mo\in\Cscr$ is of the form $\Mo = \Mo^\sigma$ for some $\sigma\in\Sscr$, where $\Mo^\sigma$ is given by \eqref{Mo}. $\Cscr^G$ is the subset of the Gaussian bi-observables in $\Cscr$, and $\Mo^\sigma\in\Cscr^G$ if and only if $\sigma$ is a Gaussian state.

We proceed to define the analogues of the scalar quantities introduced in Sections \ref{sec:uverf}, \ref{sec:uventdiv}, \ref{sec:uveid}. In order to do it, in the next proposition we recall some known results on matrices.
\begin{proposition}[\cite{Bha97,Carlen10,OhP93,Petz08}]\label{mon+matr}
Let $M_1$ and $M_2$ be $n\times n$ complex matrices such that $M_1> M_2> 0$. Then, we have  $0< M_1^{-1}< M_2^{-1}$. Moreover, if $s: \Rbb_+\to \Rbb$ is a strictly increasing continuous function, we have $\Tr \{s(M_1)\}> \Tr \{s(M_2)\}$.
\end{proposition}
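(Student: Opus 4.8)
The plan is to prove the two assertions separately, both reducing to elementary facts about the eigenvalues of Hermitian matrices; note first that the hypothesis $M_1 > M_2 > 0$ forces $M_1$ and $M_2$ to be Hermitian and positive definite, so the spectral theorem and functional calculus apply throughout.

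\textbf{Anti-monotonicity of the inverse.} First I would use that congruence by an invertible matrix preserves the strict L\"owner order: if $S$ is invertible and $A>B$, then $S^*AS>S^*BS$, since $x^*S^*(A-B)Sx=(Sx)^*(A-B)(Sx)>0$ for every $x\neq 0$. Applying this with $S=M_2^{-1/2}$ (well defined because $M_2>0$), from $M_1>M_2$ I get $N:=M_2^{-1/2}M_1M_2^{-1/2}>M_2^{-1/2}M_2M_2^{-1/2}=\id$. Now $N$ is Hermitian with all eigenvalues strictly larger than $1$, hence $N^{-1}$ is Hermitian with all eigenvalues strictly between $0$ and $1$, i.e. $0<N^{-1}<\id$. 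Since $N^{-1}=M_2^{1/2}M_1^{-1}M_2^{1/2}$, conjugating the chain $0<N^{-1}<\id$ once more by the invertible matrix $M_2^{-1/2}$ gives $0<M_1^{-1}<M_2^{-1}$, which is the first claim.

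\textbf{Strict trace monotonicity.} For the second claim the one point requiring care is upgrading Weyl's non‑strict eigenvalue monotonicity (the inequality already used in the proof of Lemma \ref{weylscor}) to a strict statement. Writing $M_1=M_2+(M_1-M_2)$ with $M_1-M_2>0$, the lower Weyl bound gives $\lambda^\downarrow_i(M_1)\geq\lambda^\downarrow_i(M_2)+\lambda^\downarrow_n(M_1-M_2)$ for every $i$; since $\lambda^\downarrow_n(M_1-M_2)=\lambda_{\min}(M_1-M_2)>0$, this reads $\lambda^\downarrow_i(M_1)>\lambda^\downarrow_i(M_2)$ for all $i=1,\dots,n$. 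Because $s$ is strictly increasing, $s\big(\lambda^\downarrow_i(M_1)\big)>s\big(\lambda^\downarrow_i(M_2)\big)$ for each $i$, and summing over $i$ — using $\Tr\{s(M)\}=\sum_i s(\lambda^\downarrow_i(M))$, valid by the spectral theorem, with continuity of $s$ guaranteeing that the functional calculus is well defined on the (positive) spectra — yields $\Tr\{s(M_1)\}>\Tr\{s(M_2)\}$.

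\textbf{Main obstacle.} Neither step is deep; the only thing to watch is that every inequality remains \emph{strict}. In the first part this is automatic, since congruence by an invertible matrix and inversion both act strictly on the order; in the second part it rests on the observation that $M_1-M_2$ being positive \emph{definite} (not merely semidefinite) contributes a strictly positive term $\lambda_{\min}(M_1-M_2)$ in the lower Weyl estimate, so that the strictness of $M_1>M_2$ propagates to each eigenvalue individually. Alternatively, one may simply invoke the cited references \cite{Bha97,Carlen10,OhP93,Petz08}, where both facts are standard.
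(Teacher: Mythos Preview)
Your proof is correct. The paper itself does not supply a proof of this proposition; it simply states the result and cites the references \cite{Bha97,Carlen10,OhP93,Petz08}, so there is no argument in the paper to compare against. Your two steps---the congruence-by-$M_2^{-1/2}$ reduction for the inverse, and the strict Weyl lower bound $\lambda^\downarrow_i(M_1)\geq\lambda^\downarrow_i(M_2)+\lambda_{\min}(M_1-M_2)$ for the trace---are exactly the standard textbook arguments one finds in those references, and your handling of strictness is clean.
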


\subsubsection{Error function}\label{sec:erf}

\begin{definition}
Given the preparation $\rho\in \Sscr$ and the covariant phase-space observable $\Mo^\sigma$, with $\sigma\in \Sscr$, the \emph{error function} for the vector case is the sum of the two relative entropies:
\begin{equation}
S(\rho,\Mo^\sigma):=S(\Qo_\rho\|\Mo_{1,\rho}^\sigma) + S(\Po_\rho\|\Mo_{2,\rho}^\sigma).
\end{equation}
\end{definition}
As in the scalar case, the error function is scale invariant, it quantifies a relative error, and we always have $S(\rho,\Mo^\sigma)>0$ because position and momentum are incompatible.
Indeed, since the marginals of a bi-observable $\Mo^\sigma\in \Cscr$ turn out to be convolutions of the respective sharp observables $\Qo$ and $\Po$ with some probability densities on $\Rbb^n$, $\Qo_{\rho} \neq \Mo_{1,\rho}^\sigma$ and $\Po_{\rho} \neq \Mo_{2,\rho}^\sigma$ for all states $\rho$; this is an easy consequence, for instance, of \cite[Problem 26.1, p.\ 362]{Bill86}.
In the Gaussian case the error function can be explicitly computed.

\begin{proposition}[Error function for the vector Gaussian case] For $\rho, \sigma\in \Gscr$, the error function has the two equivalent expressions:
\begin{subequations}
\begin{align}
S(\rho,\Mo^\sigma)&=\frac {\log \rme}2\left[\Tr \left\{s(E_{\rho,\sigma})+s(F_{\rho,\sigma})\right\}+ \vec a^\sigma \cdot (A^\rho+A^\sigma)^{-1}\vec a^\sigma + \vec b^\sigma \cdot (B^\rho+ B^\sigma)^{-1}\vec b^\sigma \right] \label{SEF}
\\ {} &=\frac {\log \rme}2\left[\Tr \left\{s(N_{\rho,\sigma}^{-1})+s(R_{\rho,\sigma}^{-1})\right]+ \vec a^\sigma \cdot (A^\rho+A^\sigma)^{-1}\vec a^\sigma + \vec b^\sigma \cdot (B^\rho+ B^\sigma)^{-1}\vec b^\sigma\right], \label{SNR}
\end{align}
\end{subequations}
where the function $s$ is defined in \eqref{def:s}, and
\begin{subequations}
\begin{equation}\label{def:EF}
E_{\rho,\sigma}:= (A^\rho)^{-1/2}A^\sigma (A^\rho)^{-1/2}, \qquad F_{\rho,\sigma}:= (B^\rho)^{-1/2}B^\sigma (B^\rho)^{-1/2},
\end{equation}
\begin{equation}\label{def:NR}
N_{\rho,\sigma}:= (A^\sigma)^{-1/2}A^\rho (A^\sigma )^{-1/2}, \qquad R_{\rho,\sigma}:= (B^\sigma)^{-1/2}B^\rho (B^\sigma )^{-1/2}.
\end{equation}
\end{subequations}
\end{proposition}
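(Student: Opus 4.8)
The plan is to reduce the computation to the Gaussian relative-entropy formula \eqref{Grelentr} and then repackage the resulting trace-and-determinant expression by means of the scalar function $s$ of \eqref{def:s}.

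First I would pin down the four distributions involved. By \eqref{Gsharpdistr} the target distributions are $\Qo_\rho=\Ncal(\vec a^\rho;A^\rho)$ and $\Po_\rho=\Ncal(\vec b^\rho;B^\rho)$, while by \eqref{eq:hsigma} the marginals of $\Mo^\sigma$ in the state $\rho$ are $\Mo^\sigma_{1,\rho}=\Ncal(\vec a^\rho+\vec a^\sigma;A^\rho+A^\sigma)$ and $\Mo^\sigma_{2,\rho}=\Ncal(\vec b^\rho+\vec b^\sigma;B^\rho+B^\sigma)$. Since $\rho,\sigma\in\Gscr$, all of $A^\rho,B^\rho,A^\sigma,B^\sigma$ are strictly positive (they are diagonal blocks of quantum variance matrices), so $E_{\rho,\sigma}$, $F_{\rho,\sigma}$, $N_{\rho,\sigma}$, $R_{\rho,\sigma}$ are well-defined positive matrices and every inverse written below exists.

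Next I would apply \eqref{Grelentr} to $S(\Qo_\rho\|\Mo^\sigma_{1,\rho})$, with $A^\rho$ in the role of $A$, with $A^\rho+A^\sigma$ in the role of $B$, and with mean difference $\vec a^\rho-(\vec a^\rho+\vec a^\sigma)=-\vec a^\sigma$. This gives
\[
2S(\Qo_\rho\|\Mo^\sigma_{1,\rho})=(\log\rme)\Big[\vec a^\sigma\cdot(A^\rho+A^\sigma)^{-1}\vec a^\sigma+\Tr\{(A^\rho+A^\sigma)^{-1}A^\rho-\openone\}\Big]+\log\frac{\det(A^\rho+A^\sigma)}{\det A^\rho},
\]
and the analogous identity for the momentum term with $B$'s and $\vec b^\sigma$. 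To bring this into the form \eqref{SEF} I would use $\log z=(\log\rme)\ln z$ together with $\ln\det=\Tr\ln$ (Remark \ref{rem:logdet}), the factorisation $\det(A^\rho+A^\sigma)=(\det A^\rho)\det(\openone+E_{\rho,\sigma})$ coming from the similarity of $(A^\rho)^{-1}A^\sigma$ and $E_{\rho,\sigma}$, and, by cyclicity of the trace, $\Tr\{(A^\rho+A^\sigma)^{-1}A^\rho\}=\Tr\{(\openone+E_{\rho,\sigma})^{-1}\}$. Then $\Tr\{(A^\rho+A^\sigma)^{-1}A^\rho-\openone\}=-\Tr\{E_{\rho,\sigma}(\openone+E_{\rho,\sigma})^{-1}\}$, and the position contribution becomes $(\log\rme)\big[\vec a^\sigma\cdot(A^\rho+A^\sigma)^{-1}\vec a^\sigma+\Tr\{s(E_{\rho,\sigma})\}\big]$ directly from $s(x)=\ln(1+x)-x/(1+x)$ read through the functional calculus of the positive matrix $E_{\rho,\sigma}$. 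Summing the position and momentum contributions and dividing by two yields \eqref{SEF}.

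Finally, for the equivalent form \eqref{SNR} I would observe that $E_{\rho,\sigma}$ has the same spectrum as $(A^\rho)^{-1}A^\sigma$, hence as the inverse of $(A^\sigma)^{-1}A^\rho$, hence as $N_{\rho,\sigma}^{-1}$ (because $(A^\sigma)^{-1}A^\rho$ and $N_{\rho,\sigma}$ are conjugate); since $\Tr\{s(\cdot)\}$ depends only on the eigenvalues, $\Tr\{s(E_{\rho,\sigma})\}=\Tr\{s(N_{\rho,\sigma}^{-1})\}$, and likewise $\Tr\{s(F_{\rho,\sigma})\}=\Tr\{s(R_{\rho,\sigma}^{-1})\}$. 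The only slightly delicate point is the simplification of the third paragraph: one must keep in mind that every matrix appearing there is a function of a single fixed positive matrix (up to a common conjugation), so that functional calculus applies verbatim and the noncommutativity of $A^\rho$ and $A^\sigma$ causes no trouble. Once one commits to working with $E_{\rho,\sigma}$ rather than with $A^\rho$ and $A^\sigma$ separately, the identification with $\Tr\{s(E_{\rho,\sigma})\}$ is immediate, and the whole proof is a routine combination of \eqref{Gsharpdistr}, \eqref{eq:hsigma} and \eqref{Grelentr}.
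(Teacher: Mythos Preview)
Your proof is correct and follows essentially the same route as the paper: identify the four Gaussian distributions via \eqref{Gsharpdistr} and \eqref{eq:hsigma}, apply the Gaussian relative-entropy formula \eqref{Grelentr}, and rewrite the determinant and trace terms through the conjugated matrices $E_{\rho,\sigma}$, $F_{\rho,\sigma}$ so that the function $s$ appears. The only cosmetic difference is in the passage to \eqref{SNR}: the paper re-expresses the determinant ratio and the trace directly in terms of $N_{\rho,\sigma}$, whereas you argue that $E_{\rho,\sigma}$ and $N_{\rho,\sigma}^{-1}$ are isospectral and hence give the same value of $\Tr\{s(\cdot)\}$; both arguments are equally short and valid.
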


\begin{proof}
First of all, recall that
\begin{align*}
\Qo_\rho & = \Ncal(\vec a^\rho; A^\rho) , \qquad \Mo_{1,\rho}^\sigma = \Ncal(\vec a^\rho+\vec a^\sigma; A^\rho+A^\sigma) \\
\Po_\rho & = \Ncal(\vec b^\rho; B^\rho), \qquad \Mo_{2,\rho}^\sigma = \Ncal(\vec b^\rho+\vec b^\sigma; B^\rho+B^\sigma) .
\end{align*}
A direct application of \eqref{Grelentr} yields
\[
S(\Qo_\rho\|\Mo_{1,\rho}^\sigma) =\frac 12 \, \log \frac{\det(A^\rho+A^\sigma)} {\det A^\rho}+ \frac{\log \rme}2\left[\Tr\left\{(A^\rho+A^\sigma)^{-1} A^\rho-\openone\right\}+  \vec a^\sigma \cdot (A^\rho+A^\sigma)^{-1}\vec a^\sigma \right].
\]
We can transform this equation by using
\[
\frac{\det\left(A^\sigma +A^\rho\right)} {\det A^\rho }=\det\left[ (A^\rho)^{-1/2}\left(A^\sigma +A^\rho\right)(A^\rho)^{-1/2}
\right]=
\det\left(\openone +E_{\rho,\sigma}\right),
\]
\[
\ln\det\left(\openone +E_{\rho,\sigma}\right)=\Tr\left\{\ln\left(\openone +E_{\rho,\sigma}\right)\right\},
\]
\[
\Tr\left\{(A^\rho+A^\sigma)^{-1}A^\rho-\openone\right\}= \Tr\left\{(A^\rho)^{1/2}(A^\rho+A^\sigma)^{-1}(A^\rho)^{1/2}-\openone\right\} = -\Tr\left\{(\openone+E_{\rho,\sigma})^{-1}E_{\rho,\sigma}\right\}.
\]
This gives
\[
S(\Qo_\rho\|\Mo_{1,\rho}^\sigma) =\frac {\log \rme}2\left[\Tr\{s(E_{\rho,\sigma})\} + \vec a^\sigma \cdot (A^\rho+A^\sigma)^{-1}\vec a^\sigma \right].
\]
In the same way a similar expression is obtained for $S(\Po_\rho\|\Mo_{2,\rho}^\sigma)$ and \eqref{SEF} is proved.

On the other hand, by using
\[
\ln\frac{\det\left(A^\sigma +A^\rho\right)} {\det A^\rho }=\ln\frac{\det\left(\openone +N_{\rho,\sigma}\right)} {\det N_{\rho,\sigma} }=\ln\det\left( \openone +N_{\rho,\sigma}^{-1}\right)=\Tr\left\{ \ln\left(\openone +N_{\rho,\sigma}^{-1}\right)\right\},
\]
\[
\Tr\left\{(A^\rho+A^\sigma)^{-1}A^\rho-\openone\right\}=- \Tr\left\{\left( A^\rho+A^\sigma\right)^{-1}A^\sigma\right\} = -\Tr\left\{\left(\openone+N_{\rho,\sigma}^{-1}\right)^{-1}N_{\rho,\sigma}^{-1}\right\},
\]
and the analogous expressions involving $B^\rho$ and $R_{\rho,\sigma}$, one gets \eqref{SNR}.
\end{proof}

\subsubsection*{State dependent lower bound} In principle,
a state dependent lower bound for the error function could be found by analogy with Theorem \ref{prop:min_sigma}, by taking again the infimum over all joint covariant measurements, that is $\inf_\sigma S(\rho,\Mo^\sigma)$. By considering only Gaussian states $\rho$ and measurements $\Mo^\sigma$, from \eqref{ineqB>}, \eqref{SEF}, \eqref{def:EF} the infimum over $\sigma \in \Gscr $ can be reduced to an infimum over the matrices $A^\sigma$:
\[
\inf_{\sigma\in \Gscr} S(\rho,\Mo^\sigma)= \frac {\log \rme}2\,\inf_{A^\sigma}\Tr \left\{ s\left((A^\rho)^{-1/2}A^\sigma (A^\rho)^{-1/2}\right)+s\left(\frac{\hbar^2}4(B^\rho)^{-1/2} (A^\sigma)^{-1} (B^\rho)^{-1/2}\right)\right\}.
\]
The above equality follows since the monotonicity of $s$ (Proposition \ref{mon+matr}) implies that the trace term in \eqref{SEF} attains its minimum when $B^\sigma = \frac{\hbar^2}{4}(A^\rho)^{-1}$.
However, it remains an open problem to explicitly compute the infimum over the matrices $A^\sigma$ when the preparation $\rho$ is arbitrary.

Nevertheless, the computations can be done at least for a preparation $\rho_*$ of minimum uncertainty (Proposition \ref{prop:minun}). Indeed, by \eqref{mus2} we get
\[
\inf_{\sigma\in \Gscr} S(\rho_*,\Mo^\sigma)= \frac {\log \rme}2\,\inf_{A^\sigma}\Tr \left\{ s\left(E_{\rho,\sigma}\right)+s\left(E_{\rho,\sigma}^{\ -1}\right)\right\}.
\]
Now we can diagonalize $E_{\rho,\sigma}$ and minimize over its eigenvalues; since $s(x)+s(x^{-1})$ attains its minimum value at $x=1$, this procedure gives $E_{\rho,\sigma}=\openone$. So, by denoting by $\sigma_*$ the state giving the minimum, we have
\begin{equation}\label{quasicincsigma}
A^{\sigma_*} =A^{\rho_*}, \qquad B^{\sigma_*} =B^{\rho_*}=\frac{\hbar^2}4\left(A^{\rho_*}\right)^{-1},
\end{equation}
\begin{equation}\label{quasicinc}
\inf_{\sigma\in \Gscr} S(\rho_*,\Mo^\sigma)=S(\rho_*,\Mo^{\sigma_*})=ns(1)\log \rme.
\end{equation}

For an arbitrary $\rho\in\Gscr$, we can use the last formula to deduce an upper bound for $\inf_{\sigma\in \Gscr} S(\rho,\Mo^\sigma)$. Indeed, if $\rho_*$ is a minimum uncertainty state with $A^{\rho_*}=A^\rho$, then $B^\rho\geq\frac{\hbar^2}{4}(A^\rho)^{-1}=B^{\rho_*}$ by \eqref{weakineqB>}, and, using again the state $\sigma_*$ of \eqref{quasicincsigma}, we find
\[
\inf_{\sigma\in \Gscr} S(\rho,\Mo^\sigma)\leq S(\rho,\Mo^{\sigma_*})\leq S(\rho_*,\Mo^{\sigma_*})=ns(1)\log \rme.
\]
The second inequality in the last formula follows from \eqref{SNR}, \eqref{def:NR} and the monotonicity of $s$ (Prop. \ref{mon+matr}).

\subsubsection{Entropic divergence of $\Qo, \Po$ from $\Mo^\sigma$}\label{sec:entdiv}

In order to define a state independent measure of the error made in regarding the marginals of $\Mo^\sigma$ as approximations of $\Qo$ and $\Po$, we can proceed along the lines of the scalar case in Section \ref{sec:uventdiv}.
To this end, we introduce the following vector analogue of the Gaussian states defined in \eqref{eq:Guv_eps}:\marginpar{$\Gscr_{\vec \vb}$}
\begin{equation}\label{Gscreps}
\Gscr_{\vec \vb}:=\left\{\rho \in \Gscr: A^\rho\geq \vb_1\openone, \ B^\rho\geq \vb_2\openone\right\}, \qquad \vec \vb\equiv (\vb_1,\vb_2), \quad \vb_i>0.
\end{equation}
In the vector case, Definition \ref{def:div_scal} then reads as follows.
\begin{definition}
The \emph{Gaussian $\vec\vb$-entropic divergence} of $\Qo, \Po$ from $\Mo^\sigma\in\Cscr$ is
\begin{equation}\label{def:DG}
D^G_{\vec\vb}(\Qo, \Po\|\Mo^\sigma):= \sup_{\rho\in \Gscr_{\vec\vb}} S(\rho,\Mo^\sigma).
\end{equation}
\end{definition}
As in the scalar case, when $\Mo^\sigma$ is Gaussian, depending on the choice of the product $\vb_1\vb_2$, we can compute the divergence $D^G_{\vec\vb}(\Qo, \Po\|\Mo^\sigma)$ or at least bound it from below.
\begin{theorem}\label{prop:DG}
Let the bi-observable $\Mo^\sigma\in\Cscr^G$ be fixed.
\begin{enumerate}[(i)]
\item For $\displaystyle\vb_1\vb_2\geq\frac{\hbar^2}{4}$, the divergence $D^G_{\vec\vb}(\Qo, \Po\|\Mo^\sigma)$ is given by
\begin{multline}\label{DG}
D^G_{\vec\vb}(\Qo, \Po\|\Mo^\sigma)=  S(\rho_{\vec\vb},\Mo^\sigma)
=\frac {\log \rme}2\Bigl[\Tr \left\{ s\left( A^\sigma/\vb_1\right) + s\left( B^\sigma/\vb_2 \right)\right\} \\ {}+ \vec a^\sigma \cdot (A^\sigma +\vb_1\openone)^{-1}\vec a^\sigma + \vec b^\sigma \cdot (B^\sigma+\vb_2\openone)^{-1}\vec b^\sigma \Bigr],
\end{multline}
where $\rho_{\vec\vb}$ is any Gaussian state with $A^{\rho_{\vec\vb}}= \vb_1\openone$ and $B^{\rho_{\vec\vb}}= \vb_2\openone$.

\item
For $\displaystyle\vb_1\vb_2<\frac{\hbar^2}4$, the divergence $D^G_{\vec\vb}(\Qo, \Po\|\Mo^\sigma)$ is bounded from below by
\begin{multline}\label{DGB}
D^G_{\vec\vb}(\Qo, \Po\|\Mo^\sigma)\geq S(\rho_{\vec\vb},\Mo^\sigma)=\frac {\log \rme}2\Biggl[\Tr \left\{ s\left( A^\sigma/\vb_1\right)+s\left(4\vb_1B^\sigma/\hbar^2 \right)\right\} \\ {}+ \vec a^\sigma \cdot (A^\sigma +\vb_1\openone)^{-1}\vec a^\sigma + \vec b^\sigma \cdot \Bigg(B^\sigma+\frac{\hbar^2}{4\vb_1}\openone\Bigg)^{-1}\vec b^\sigma \Biggr],
\end{multline}
where $\rho_{\vec\vb}$ is any Gaussian state with $A^{\rho_{\vec\vb}}= \vb_1\openone$ and $\displaystyle B^{\rho_{\vec\vb}}= \frac{\hbar^2}{4\vb_1}\openone$.
\end{enumerate}
\end{theorem}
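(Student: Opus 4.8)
The plan is to imitate, at the level of matrices, the proof of the scalar Theorem \ref{prop:uventdiv}: replace the scalar variances by the blocks $A^\rho$, $B^\rho$ of the quantum variance matrix, and replace the scalar monotonicity of $s$ by the matrix monotonicity collected in Proposition \ref{mon+matr}. First I would note that, for $\rho\in\Gscr$ and $\Mo^\sigma\in\Cscr^G$, formula \eqref{SEF} shows that $S(\rho,\Mo^\sigma)$ depends on $\rho$ only through $A^\rho$ and $B^\rho$ (not through $\mu^\rho$ nor $C^\rho$), and that it splits as the sum of a \emph{position part} $\frac{\log\rme}{2}\bigl[\Tr\{s(E_{\rho,\sigma})\}+\vec a^\sigma\cdot(A^\rho+A^\sigma)^{-1}\vec a^\sigma\bigr]$ depending only on $A^\rho$ (together with $A^\sigma,\vec a^\sigma$) and an analogous \emph{momentum part} depending only on $B^\rho$. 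Hence $\sup_{\rho\in\Gscr_{\vec\vb}}S(\rho,\Mo^\sigma)$ reduces to maximising these two parts separately over the admissible pairs $(A^\rho,B^\rho)$; by Proposition \ref{admissvar} together with \eqref{weakineqB>} these are exactly the pairs of real matrices with $A^\rho\geq\vb_1\openone$, $B^\rho\geq\vb_2\openone$ and $B^\rho\geq\frac{\hbar^2}{4}(A^\rho)^{-1}$ (the last being automatic for any state, and realizable with $C^\rho=0$ by Proposition \ref{admissvar}).

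Next I would establish the key monotonicity: each of the two parts is non-increasing in the Loewner order of its matrix argument. For the position part this rests on two facts. The eigenvalues of $E_{\rho,\sigma}=(A^\rho)^{-1/2}A^\sigma(A^\rho)^{-1/2}$ coincide with those of the positive matrix $(A^\sigma)^{1/2}(A^\rho)^{-1}(A^\sigma)^{1/2}$, whose ordered eigenvalue sequence, by Weyl's inequality, is monotone in $(A^\rho)^{-1}$, hence antitone in $A^\rho$; therefore, by the monotonicity of $s$ and Proposition \ref{mon+matr}, $\Tr\{s(E_{\rho,\sigma})\}$ decreases when $A^\rho$ increases. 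The bias term $\vec a^\sigma\cdot(A^\rho+A^\sigma)^{-1}\vec a^\sigma$ likewise decreases when $A^\rho$ increases, since inversion is antitone on positive matrices (Proposition \ref{mon+matr}). The momentum part is treated identically with $B^\rho$, $B^\sigma$, $\vec b^\sigma$.

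Then I would split into the two cases. In case (i), $\vb_1\vb_2\geq\hbar^2/4$, the pair $A^\rho=\vb_1\openone$, $B^\rho=\vb_2\openone$ (with $C^\rho=0$) satisfies $\vb_2\openone\geq\frac{\hbar^2}{4\vb_1}\openone=\frac{\hbar^2}{4}(\vb_1\openone)^{-1}$, so by Proposition \ref{admissvar} it is realized by some $\rho_{\vec\vb}\in\Gscr_{\vec\vb}$; as it is Loewner-minimal in both blocks among all admissible pairs, the monotonicity just proved makes it the maximiser of $S(\,\cdot\,,\Mo^\sigma)$ over $\Gscr_{\vec\vb}$. Substituting $A^{\rho_{\vec\vb}}=\vb_1\openone$, $B^{\rho_{\vec\vb}}=\vb_2\openone$ into \eqref{SEF} gives $E_{\rho_{\vec\vb},\sigma}=A^\sigma/\vb_1$, $F_{\rho_{\vec\vb},\sigma}=B^\sigma/\vb_2$ and thus formula \eqref{DG}. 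In case (ii), $\vb_1\vb_2<\hbar^2/4$, the pair $(\vb_1\openone,\vb_2\openone)$ is excluded by the uncertainty relation, so one cannot drive both blocks to their thresholds simultaneously; instead I would simply evaluate $S$ at the explicit admissible state $\rho_{\vec\vb}$ with $A^{\rho_{\vec\vb}}=\vb_1\openone$, $B^{\rho_{\vec\vb}}=\frac{\hbar^2}{4\vb_1}\openone$, which lies in $\Gscr_{\vec\vb}$ because $\frac{\hbar^2}{4\vb_1}\geq\vb_2$ and because it saturates \eqref{weakineqB>}. Since $\rho_{\vec\vb}\in\Gscr_{\vec\vb}$ we get $D^G_{\vec\vb}(\Qo,\Po\|\Mo^\sigma)\geq S(\rho_{\vec\vb},\Mo^\sigma)$, and substitution into \eqref{SEF} (now $F_{\rho_{\vec\vb},\sigma}=4\vb_1B^\sigma/\hbar^2$) yields \eqref{DGB}; the existence of such states $\rho_{\vec\vb}$ in both cases is guaranteed by Proposition \ref{admissvar}.

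I expect the only delicate point to be the Loewner-monotonicity of $\Tr\{s(E_{\rho,\sigma})\}$ in $A^\rho$: although $X\mapsto X^{-1}$ is operator-antitone, $E_{\rho,\sigma}$ is not itself an operator-monotone function of $A^\rho$, so one must pass to eigenvalues and invoke the congruence-invariance of the spectrum of $A^\sigma(A^\rho)^{-1}$ together with Weyl's inequality — precisely the content of the second statement of Proposition \ref{mon+matr}. Everything else is bookkeeping, provided one is careful to keep the thresholds $\vb_1$ and $\vb_2$ matched to the position and momentum blocks respectively.
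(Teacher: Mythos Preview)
Your proof is correct and follows essentially the same strategy as the paper. The only difference is cosmetic: the paper works directly with the alternative form \eqref{SNR} of the error function, in which the matrix $N_{\rho,\sigma}=(A^\sigma)^{-1/2}A^\rho(A^\sigma)^{-1/2}$ is manifestly Loewner-monotone in $A^\rho$, so that $A^\rho\geq\vb_1\openone$ yields $N_{\rho,\sigma}^{-1}\leq A^\sigma/\vb_1$ in one step via Proposition~\ref{mon+matr}; this sidesteps your spectral-equivalence detour, which is precisely the observation that $E_{\rho,\sigma}$ and $N_{\rho,\sigma}^{-1}$ are isospectral.
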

\begin{proof}
(i) In the case $\displaystyle\vb_1\vb_2\geq\frac{\hbar^2}{4}$, for $\rho\in \Gscr_{\vec\vb}$ we have $N_{\rho,\sigma}\geq \vb_1 (A^\sigma)^{-1}$ and $R_{\rho,\sigma}\geq \vb_2 (B^\sigma)^{-1}$; by Proposition \ref{mon+matr} we get
\[
\Tr\{ s(N_{\rho,\sigma}^{-1})\} \leq \Tr\left\{ s\left(A^\sigma/\vb_1\right)\right\},
\qquad  \Tr \{s(R_{\rho,\sigma}^{-1}) \}\leq \Tr \left\{ s\left(B^\sigma/\vb_2\right)\right\},
\]
\[
(A^\rho+A^\sigma)^{-1}\leq (\vb_1\openone+A^\sigma)^{-1}, \qquad (B^\rho+B^\sigma)^{-1}\leq (\vb_2\openone+B^\sigma)^{-1}.
\]
By using these inequalities in the expression \eqref{SNR}, we get \eqref{DG}.

(ii) In the case $\displaystyle\vb_1\vb_2<\frac{\hbar^2}{4}$, the lower bound \eqref{DGB} follows by evaluating $S(\rho,\Mo^\sigma)$ at the state $\rho=\rho_{\vec\vb}\in\Gscr_{\vec\vb}$ with $A^{\rho_{\vec\vb}}= \vb_1\openone$ and $\displaystyle B^{\rho_{\vec\vb}}= \frac{\hbar^2}{4\vb_1}\openone$.
\end{proof}

Note that $\rho_{\vec\vb}$ does not depend on $\sigma$, but only on the parameters defining $\Gscr_{\vec\vb}$:
again, in the case $\displaystyle\vb_1\vb_2\geq\frac{\hbar^2}{4}$, the error attains its maximum at a state which is independent of the approximate measurement.

\subsubsection{Entropic incompatibility degree of $\Qo$ and $\Po$}\label{sec:eid}

By analogy with Section \ref{sec:uveid}, we can optimize the $\vec{\vb}$-entropic divergence over all the approximate joint measurements of $\vec Q$ and $\vec P$.
\begin{definition}
The \emph{Gaussian $\vec\vb$-entropic incompatibility degree} of $\Qo$ and $\Po$ is
\begin{equation}\label{def:cG}
c_{\rm inc}^G(\Qo,\Po;\vec\vb):=\inf_{\sigma \in \Gscr}D_{\vec\vb}^G(\Qo, \Po\|\Mo^\sigma)\equiv \inf_{\sigma \in \Gscr} \sup_{\rho\in \Gscr_{\vec\vb}}  S(\rho,\Mo^\sigma).
\end{equation}
\end{definition}

Again, depending on the product $\vb_1\vb_2$, we can compute or at least bound $c_{\rm inc}^G(\Qo,\Po;\vec\vb)$ from below.

\begin{theorem}\label{prop:cinc}
\begin{enumerate}[(i)]
\item For $\displaystyle\vb_1\vb_2\geq \frac{\hbar^2}4$, the incompatibility degree $c_{\rm inc}^G(\Qo,\Po;\vec\vb)$ is given by
\begin{equation}\label{MUR1}
c_{\rm inc}^G(\Qo,\Po;\vec\vb)
= n\left(\log \rme\right) \left\{\ln\left(1+\frac {\hbar} {2\sqrt{\vb_1\vb_2}}\right)-\frac{\hbar}{2\sqrt{\vb_1\vb_2}+\hbar}\right\}.
\end{equation}
The infimum in \eqref{def:cG} is attained and the optimal measurement is unique, in the sense that
\begin{equation}\label{cinc1}
c_{\rm inc}^G(\Qo,\Po;\vec\vb)
= D_{\vec\vb}^G(\Qo, \Po\|\Mo^{\sigma_{\vec\vb}})
\end{equation}
for a unique $\sigma_{\vec\vb}\in \Gscr$; such a state is the minimal uncertainty state characterized by
\begin{equation}\label{cinc2}
\vec a^{\sigma_{\vec\vb}}=\vec 0, \qquad \vec b^{\sigma_{\vec\vb}}=\vec 0,\qquad A^{\sigma_{\vec\vb}} = \frac \hbar 2 \, \sqrt{\frac{\vb_1}{\vb_2}}\,\openone, \qquad B^{\sigma_{\vec\vb}} = \frac \hbar 2 \, \sqrt{\frac{\vb_2}{\vb_1}}\,\openone, \qquad C^{\sigma_{\vec\vb}} =0 .
\end{equation}

\item For $\displaystyle\vb_1\vb_2<\frac{\hbar^2}4 \left(\cos \alpha\right)^2$, the incompatibility degree $c_{\rm inc}^G(\Qo,\Po;\vec\vb)$ is bounded from below by
\begin{equation}\label{MUR2}
c_{\rm inc}^G(\Qo,\Po;\vec\vb) \geq n(\log \rme)\left\{\ln\left(2\right)-\frac{1}{2}\right\}.
\end{equation}
The latter bound is
\begin{equation}\label{cinc3}
n(\log \rme)\left\{\ln\left(2\right)-\frac{1}{2}\right\}= S(\rho_{\vec\vb},\Mo^{\sigma_{\vec\vb}}) = \inf_{\sigma\in \Gscr} S(\rho_{\vec\vb},\Mo^{\sigma}),
\end{equation}
where the preparation $\rho_{\vec\vb}$ is defined in item (ii) of Theorem \ref{prop:DG} and $\sigma_{\vec\vb}$ is the state in $\Gscr$ such that
\begin{equation}\label{cinc4}
\vec a^{\sigma_{\vec\vb}}=\vec 0, \qquad \vec b^{\sigma_{\vec\vb}}=\vec 0,\qquad A^{\sigma_{\vec\vb}} = \vb_1\,\openone, \qquad B^{\sigma_{\vec\vb}} = \frac{\hbar^2}{4\vb_1}\,\openone, \qquad C^{\sigma_{\vec\vb}} =0.
\end{equation}
\end{enumerate}
\end{theorem}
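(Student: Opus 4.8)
The plan is to feed the explicit divergence formulas of Theorem~\ref{prop:DG} into the definition \eqref{def:cG} and then carry out the remaining minimisation over the approximating state $\sigma$. For part~(i), since $\vb_1\vb_2\ge\hbar^2/4$, Theorem~\ref{prop:DG}(i) already gives $D^G_{\vec\vb}(\Qo,\Po\|\Mo^\sigma)=S(\rho_{\vec\vb},\Mo^\sigma)$ in the closed form \eqref{DG}, so I only have to minimise that expression over $\sigma\in\Gscr$. First I would drop the two bias terms, which are non-negative and vanish exactly when $\vec a^\sigma=\vec b^\sigma=\vec 0$; then, since the trace term does not involve $C^\sigma$, the remark after \eqref{weakineqB>} together with Proposition~\ref{admissvar} lets me take $C^\sigma=0$ and work under the sole constraint $A^\sigma>0$, $B^\sigma\ge\frac{\hbar^2}4(A^\sigma)^{-1}$. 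Strict monotonicity of $\Tr\{s(\cdot)\}$ (Proposition~\ref{mon+matr}) forces, for each fixed $A^\sigma$, the optimal $B^\sigma$ to be $\frac{\hbar^2}4(A^\sigma)^{-1}$, after which the objective becomes $\Tr\{s(A^\sigma/\vb_1)+s(\tfrac{\hbar^2}{4\vb_2}(A^\sigma)^{-1})\}$, which is a function of the eigenvalues $\lambda_1,\dots,\lambda_n>0$ of $A^\sigma$ alone because the two matrices now commute.

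This decoupling of the matrix problem into $\sum_i\bigl[s(\lambda_i/\vb_1)+s(\tfrac{\hbar^2}{4\vb_2\lambda_i})\bigr]$ is the key step; it is precisely what is unavailable for the general state-dependent bound discussed around \eqref{quasicinc}, and it works here only because $A^{\rho_{\vec\vb}}$ is a multiple of $\openone$. Each summand is then the one-variable problem already solved inside the proof of Theorem~\ref{prop:min_sigma}: setting $x=\lambda_i/\vb_1$ and $z=\hbar/(2\sqrt{\vb_1\vb_2})$, which lies in $(0,1]$ precisely because $\vb_1\vb_2\ge\hbar^2/4$, one minimises $s(x)+s(z^2/x)$, whose derivative $\frac{(x^2-z^2)(x^2+2z^2x+z^2)}{x(z^2+x)^2(1+x)^2}$ vanishes only at $x=z$, a strict minimum of value $2s(z)$. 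Hence $\inf_\sigma\Tr\{\cdots\}=2n\,s(z)$ and $c_{\rm inc}^G(\Qo,\Po;\vec\vb)=n(\log\rme)\,s(z)$, which is \eqref{MUR1} after unfolding \eqref{def:s}. For uniqueness I would note that the minimiser has all $\lambda_i=z\vb_1$, i.e.\ $A^\sigma=\frac\hbar2\sqrt{\vb_1/\vb_2}\,\openone$ and $B^\sigma=\frac\hbar2\sqrt{\vb_2/\vb_1}\,\openone$; then $(\det A^\sigma)(\det B^\sigma)=(\hbar/2)^{2n}$ forces $C^\sigma=0$ via \eqref{mus2}, while the strict inequalities in Proposition~\ref{mon+matr} (with the equality case handled as in Lemma~\ref{weylscor}) pin the whole of $\sigma_{\vec\vb}$ down to \eqref{cinc2}, and Proposition~\ref{prop:minun} identifies it as a minimum uncertainty state.

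For part~(ii), with $\vb_1\vb_2<\hbar^2/4$, Theorem~\ref{prop:DG}(ii) gives $D^G_{\vec\vb}(\Qo,\Po\|\Mo^\sigma)\ge S(\rho_{\vec\vb},\Mo^\sigma)$ for a state $\rho_{\vec\vb}$ with $A^{\rho_{\vec\vb}}=\vb_1\openone$ and $B^{\rho_{\vec\vb}}=\frac{\hbar^2}{4\vb_1}\openone$; since $(\det A^{\rho_{\vec\vb}})(\det B^{\rho_{\vec\vb}})=(\hbar/2)^{2n}$, this $\rho_{\vec\vb}$ is a minimum uncertainty state. Taking the infimum over $\sigma$ and invoking the evaluation \eqref{quasicincsigma}--\eqref{quasicinc} of $\inf_{\sigma\in\Gscr}S(\rho_*,\Mo^\sigma)$ for minimum uncertainty $\rho_*$ — which equals $n\,s(1)\log\rme$ and is attained at the state \eqref{cinc4} — one obtains $c_{\rm inc}^G(\Qo,\Po;\vec\vb)\ge n\,s(1)\log\rme=n(\log\rme)(\ln 2-\tfrac12)$, which is \eqref{MUR2}, and \eqref{cinc3} is then just a restatement of what was computed. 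The hard part will be legitimising the reduction from the minimisation over the matrix $A^\sigma$ — constrained by positivity of the quantum variance matrix — to $n$ decoupled scalar minimisations, and tracking strict monotonicity carefully enough to yield the uniqueness claim; once that is done, everything downstream is the one-variable calculus already carried out for the scalar theorems.
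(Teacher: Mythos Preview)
Your proposal is correct and follows essentially the same route as the paper: reduce to the closed form from Theorem~\ref{prop:DG}, kill the bias terms, use $B^\sigma\geq\frac{\hbar^2}{4}(A^\sigma)^{-1}$ together with the monotonicity of $\Tr\{s(\cdot)\}$ to saturate $B^\sigma$, then diagonalise $A^\sigma$ and minimise eigenvalue by eigenvalue via the one-variable calculus of Theorem~\ref{prop:min_sigma}. Your invocation of \eqref{quasicincsigma}--\eqref{quasicinc} for part~(ii) is a slight shortcut over the paper, which simply reruns the eigenvalue minimisation, but the content is identical.
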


\begin{proof}
(i) In the case $\displaystyle\vb_1\vb_2\geq \frac{\hbar^2}4$, from the expression \eqref{DG} we get immediately \ $\vec a^{\sigma_{\vec\vb}}=\vec 0$, \ $\vec b^{\sigma_{\vec\vb}}=\vec 0$ \ and by \eqref{weakineqB>} we have \ $B^\sigma\geq \frac{\hbar^2}{4}(A^\sigma)^{-1}$. \ So, by \eqref{DG} and Propositions \ref{admissvar} and \ref{mon+matr}, we get $B^\sigma= \frac{\hbar^2}{4}(A^\sigma)^{-1}$, and
\[
\inf_{\sigma\in \Gscr}\sup_{\rho\in \Gscr_{\vec\vb}} S(\rho,\Mo^\sigma) =\frac {\log \rme}2\,\inf_{A^\sigma} \Tr \left\{ s\left( A^\sigma/\vb_1\right)+s\left(\frac{\hbar^2}{4\vb_2}(A^\sigma)^{-1}\right)\right\}.
\]
By minimizing over all the eigenvalues of $A^\sigma$, we get the minimum \eqref{MUR1}, which is attained if and only if $A^\sigma$ is as in \eqref{cinc2}. Hence, $A^{\sigma_{\vec \vb}}$ and $B^{\sigma_{\vec \vb}}$ are as in (87). This implies that any optimal state $\sigma_{\vec \vb}$ is a minimum uncertainty state; so, $C^{\sigma_{\vec \vb}}=0$ and the state $\sigma_{\vec \vb}$ is unique.

(ii) In the case $\displaystyle\vb_1\vb_2<\frac{\hbar^2}{4}$, by \eqref{weakineqB>} and Proposition \ref{mon+matr}, inequality \eqref{DGB} implies
\[
\inf_{\sigma\in \Gscr}\sup_{\rho\in \Gscr_{\vec\vb}} S(\rho,\Mo^\sigma) \geq \frac {\log \rme}2\,\inf_{A^\sigma} \Tr \left\{ s\left( A^\sigma/\vb_1\right)+s\left(\vb_1 (A^\sigma)^{-1}\right)\right\}.
\]
By minimizing over all the eigenvalues of $A^\sigma$, we get \eqref{MUR2}. Then \eqref{MUR2} holds for $\rho_{\vec\vb}$ as in item (ii) of Theorem \ref{prop:DG} and $\sigma_{\vec\vb}$ in \eqref{cinc4}.
\end{proof}

\begin{remark}[State independent MUR, vector observables]\label{rem:MURvect}
By means of the above results, we can formulate the following state independent entropic MUR for the position $\vec Q$ and momentum $\vec P$. Chosen two positive thresholds $\vb_1$ and $\vb_2$, there exists a preparation $\rho_{\vec\vb}\in \Gscr_{\vec\vb}$ (introduced in Theorem \ref{prop:DG}) such that, for all Gaussian approximate joint measurements $\Mo^\sigma$ of $\Qo$ and $\Po$, we have
\begin{multline}\label{cincbound}
S(\Qo_{\rho_{\vec\vb}}\|\Mo_{1,\rho_{\vec\vb}}^\sigma) + S(\Po_{\rho_{\vec\vb}}\|\Mo_{2,\rho_{\vec\vb}}^\sigma)\\
{}\geq
\begin{cases}
\displaystyle  n\left(\log \rme\right) \left\{\ln\left(1+\frac {\hbar} {2\sqrt{\vb_1\vb_2}}\right)-\frac{\hbar}{2\sqrt{\vb_1\vb_2}+\hbar}\right\},&\text{if }\displaystyle\vb_1\vb_2\geq \frac{\hbar^2}4,\\
\displaystyle n(\log \rme)\left\{\ln\left(2\right)-\frac{1}{2}\right\},&\text{if }\displaystyle\vb_1\vb_2<\frac{\hbar^2}4.
\end{cases}
\end{multline}
The inequality follows by \eqref{DG} and \eqref{MUR1} for $\vb_1\vb_2\geq \frac{\hbar^2}4$, and \eqref{cinc3} for $\vb_1\vb_2<\frac{\hbar^2}4$.

Thus, also in the vector case, for every approximate joint measurement $\Mo^\sigma$,
the total information loss $S(\rho,\Mo^\sigma)$ does exceed the lower bound \eqref{cincbound} even if $\Gscr_{\vec\vb}$ forbids preparations $\rho$ with too peaked target distributions. Moreover, chosen $\vb_1$ and $\vb_2$, one can fix again a single `bad' state $\rho_{\vec\vb}$ in $\Gscr_{\vec\vb}$ that satisfies \eqref{cincbound} for all Gaussian approximate joint measurements $\Mo^\sigma$ of $\vec Q$ and $\vec P$.

Whenever $\vb_1\vb_2\geq \frac{\hbar^2}4$, the optimal approximating joint measurement $\Mo^{\sigma_{\vec\vb}}$ is unique in the class of Gaussian covariant bi-observables; it corresponds to a minimum uncertainty state $\sigma_{\vec \vb}$ which
depends only on the chosen class of preparations $\Gscr_{\vec\vb}$, that is, on the thresholds $\epsilon_1$ and $\epsilon_2$: $\Mo^{\sigma_{\vec\vb}}$  is the best measurement for the worst choice of the preparation in that class.
\end{remark}

\begin{remark}
For $n=1$, the vector lower bound in \eqref{cincbound} reduces to the scalar lower bound found in \eqref{cincbounduv} for two parallel directions $\vec u$ and $\vec v$; for $n\geq 1$, the bound linearly grows with $n$.
\end{remark}

\begin{remark}
The entropic incompatibility degree $c_{\rm inc}^G(\Qo_{\vec u},\Po_{\vec v};\vec\vb)$ is strictly positive for $\cos \alpha \neq 0$ (incompatible target observables) and it goes to zero in the limit $\alpha\to \pm\pi/2$ (compatible observables), $\hbar\to0$ (classical limit), and $\vb_1\vb_2\to\infty$ (large uncertainty states).
\end{remark}

\begin{remark}
[The macroscopic limit.] Similarly to Remark \ref{scmacrolim} for scalar target observables, also the MUR \eqref{cincbound} is practically ineffective for macroscopic systems. Indeed, suppose we are concerned with position and momentum of a macroscopic particle, say the center of mass of a multi-particle system (in this case $n=3$). The states $\rho$ which can be practically prepared have macroscopic widths, say $\rho \in \Gscr_{\vec\vb}$ with `large' thresholds $\vec \vb$ and $\vb_1\vb_2\gg \hbar^2/4$. Then, we consider a measuring instrument $\Mo^{\sigma_*}$ having a high precision with respect to this class of states, but not necessarily reaching a precision near the quantum limits. For instance, let us take $\Mo^{\sigma_*}\in\Cscr^G$ with $A^{\sigma_*}= \delta_1\openone$, \ $B^{\sigma_*}= \delta_2\openone$, and $0<\delta_1\ll \vb_1 $, \ $0<\delta_2\ll \vb_2 $; we take it also unbiased: $\vec a^{\sigma_*}= \vec 0$, \ $b^{\sigma_*}= \vec 0$. Obviously, $\delta_1\delta_2\geq \hbar^2/4 $ must hold.
Then, $\forall \rho\in\Gscr_{\vec \vb}$ by \eqref{SEF} and \eqref{def:EF} we have
\[
E_{\rho,\sigma_*}=\frac{\delta_1}{A^\rho} \leq \frac {\delta_1}{\vb_1}\, \openone, \qquad F_{\rho,\sigma_*}=\frac{\delta_2 }{B^\rho} \leq \frac {\delta_2}{\vb_2}\, \openone,
\]
\[
0< S(\rho,\Mo^{\sigma_*})=\frac{\log \rme}2 \, \Tr\left\{ s(E_{\rho,\sigma_*}) + s(F_{\rho,\sigma_*})\right\} \leq \frac{n\log \rme}2 \left[ s(\delta_1/\vb_1) + s(\delta_2/\vb_2)\right].
\]
By \eqref{def:s} the function $s$ is increasing and in a neighborhood of zero it behaves as $s(x) \simeq x^2/2$; in the present case $\delta_1/\vb _1\ll 1$ and $\delta_2/\vb_2\ll 1$ and, so,  we have that the error function is negligible. This is practically a `classical' case: the preparation has `large' position and momentum uncertainties and the measuring instrument is `relatively good'. In this situation we do not see the difference between the joint measurement of position and momentum and their separate sharp distributions. Of course the bound \eqref{cincbound} continues to hold, but it is also negligible since $\epsilon_1 \epsilon_2 \gg \hbar^2/4$.
\end{remark}

\begin{remark}\label{rem:scinv}
The scale invariance of the relative entropy extends also in the vector case to the error function $S(\rho,\Mo^\sigma)$, the divergence $D_{\vec\vb}^G(\Qo, \Po\|\Mo^\sigma)$ and the entropic incompatibility degree $c_{\rm inc}^G(\Qo,\Po;\vec\vb)$, as well as the entropic MUR \eqref{cincbound}.
Indeed, let us consider the dimensionless versions of position and momentum \eqref{vecadim} and their associated projection valued measures $\widetilde \Qo$, $\widetilde \Po$ introduced in Section \ref{sec:PUR}. Accordingly, we rescale the joint measurement $\Mo^\sigma$ of \eqref{Mo} in the same way, obtaining the POVM
$
\widetilde \Mo^\sigma(B)  = \int_B \widetilde M^\sigma(\widetilde{\vec{x}},\widetilde{\vec{p}})\rmd \widetilde{\vec{x}} \rmd \widetilde{\vec{p}}$, 
\[
\widetilde M^\sigma(\widetilde{\vec{x}},\widetilde{\vec{p}})  =\frac{1}{\left(2\pi \lambda\right)^n}\, \exp\left\{\frac \rmi \lambda \left(\widetilde{\vec{p}}\cdot \widetilde{\vec Q} - \widetilde{\vec{x}}\cdot \widetilde{\vec P}\right)\right\} \Pi \sigma \Pi  \exp\left\{ - \frac \rmi \lambda \left(\widetilde{\vec{p}}\cdot \widetilde{\vec Q} - \widetilde{\vec{x}}\cdot \widetilde{\vec P}\right)\right\} .
\]
Here, both the vector variables $\widetilde{\vec{x}}$ and $\widetilde{\vec{p}}$, as well as the components of the Borel set $B$, are dimensionless. By the scale invariance of the relative entropy, the error function takes the same value as in the dimensioned case:
\begin{equation}\label{tildeerrf}
S(\widetilde\Qo_\rho\|\widetilde\Mo_{1,\rho}^\sigma) + S(\widetilde\Po_\rho\|\widetilde\Mo_{2,\rho}^\sigma)=
S(\Qo_\rho\|\Mo_{1,\rho}^\sigma) + S(\Po_\rho\|\Mo_{2,\rho}^\sigma).
\end{equation}
Then, the scale invariance holds for the entropic divergence and incompatibility degree, too:
\[
D_{\widetilde{\vec\vb}}^G(\widetilde\Qo, \widetilde\Po\|\widetilde\Mo^\sigma)=D_{\vec\vb}^G(\Qo, \Po\|\Mo^\sigma), \qquad c_{\rm inc}^G(\widetilde\Qo,\widetilde\Po;\widetilde{\vec\vb})=c_{\rm inc}^G(\Qo,\Po;\vec\vb),
\]
where $\displaystyle\widetilde \vb_1:= \frac{\varkappa \vb_1}{\hbar}$ and $\displaystyle\widetilde \vb_2:= \frac{\lambda^2 \vb_2 }{\varkappa \hbar}$. In particular $\displaystyle\ \widetilde\vb_1\widetilde\vb_2\geq\frac{\lambda^2}4 \ \iff \ \vb_1\vb_2\geq\frac{\hbar^2}4$ and, in this case, we have
$$
n\left(\log \rme\right) s\Big(\frac \lambda {2\sqrt{\widetilde \vb_1\widetilde \vb_2}}\Big)=c_{\rm inc}^G(\widetilde\Qo,\widetilde\Po;\widetilde{\vec\vb})=c_{\rm inc}^G(\Qo,\Po;\vec\vb)=n\left(\log \rme\right) s\Big(\frac \hbar {2\sqrt{\vb_1\vb_2}}\Big).
$$

\end{remark}

\section{Conclusions}\label{sec:concl}

We have extended the relative entropy formulation of MURs given in \cite{BarGT16} from the case of discrete incompatible observables  to a particular instance of continuous target observables, namely the position and momentum vectors, or two components of them along two possibly non parallel directions. The entropic MURs we found share the nice property of being scale invariant and well-behaved in the classical and macroscopic limits. Moreover, in the scalar case, when the angle spanned by the position and momentum components goes to $\pm\pi/2$, the entropic bound correctly reflects their increasing compatibility by approaching zero with continuity.

Although our results are limited to the case of Gaussian preparation states and covariant Gaussian approximate joint measurements, we conjecture that the bounds we found still hold for arbitrary states and general (not necessarily covariant or Gaussian) bi-observables. Let us see with some more detail how this should work in the case when the target observables are the vectors $\vec Q$ and $\vec P$.

The most general procedure should be to consider the error function
$S(\Qo_\rho\|\Mo_{1,\rho}) + S(\Po_\rho\|\Mo_{2,\rho})$ for an arbitrary POVM $\Mo$ on $\Rbb^n\times \Rbb^n$ and any state $\rho\in\Sscr$. First of all, we need states for which neither the position nor the momentum dispersion are too small; the obvious generalization of the test states \eqref{Gscreps} is
\begin{equation*}
\Sscr_{\vec\vb}:=\left\{\rho \in \Sscr_2: A^\rho\geq \vb_1\openone, \
B^\rho\geq \vb_2\openone\right\} ,  \qquad \vb_i>0 .
\end{equation*}
Then, the most general definitions of the entropic divergence and incompatibility degree are:
\begin{align}
D_{\vec\vb}(\Qo, \Po\|\Mo) & := \sup_{\rho\in \Sscr_{\vec\vb}}
\left[S(\Qo_\rho\|\Mo_{1,\rho}) + S(\Po_\rho\|\Mo_{2,\rho})\right], \label{eq:++} \\
c_{\rm inc}(\Qo,\Po;\vec\vb)& :=\inf_{\Mo}D_{\vec\vb}(\Qo, \Po\|\Mo).
\label{eq:+}
\end{align}
It may happen that $\Qo_\rho$ is not absolutely continuous with respect to $\Mo_{1,\rho}$, or $\Po_\rho$ with respect to $\Mo_{2,\rho}$; in this case, the error function and the entropic divergence take the value $+\infty$ by definition. So, we can restrict to bi-observables that are (weakly) absolutely continuous with respect to the Lebesgue measure. However, the true difficulty is that, even with this assumption, here we are not able to estimate \eqref{eq:++}, hence \eqref{eq:+}. It could be that the symmetrization techniques used in \cite{Wer04,BLW14a}  can be extended to the present setting, and one can reduce the evaluation of the entropic incompatibility index to optimizing over all covariant bi-observables. Indeed, in the present paper we a priori selected only covariant approximating measurements; we would like to understand if, among all approximating measurements, the relative entropy approach selects covariant bi-observables  by itself.
However, even if $\Mo$ is covariant, there remains the problem that we do not know how to evaluate \eqref{eq:++} if $\rho$ and $\Mo$ are not Gaussian. It is reasonable to expect that some continuity and convexity arguments should apply, and the bounds in Theorem \ref{prop:cinc} could be extended to the general case by taking dense convex combinations. Also the techniques used for the PURs in \cite{B-BM75,Beck75} could be of help in order to extend what we did with Gaussian states to arbitrary states. This leads us to conjecture:
\begin{equation}\label{eq:ultima}
c_{\rm inc}(\Qo,\Po;\vec\vb)=c_{\rm inc}^G(\Qo,\Po;\vec\vb) .
\end{equation}
Conjecture \eqref{eq:ultima} is also supported since the uniqueness of the optimal approximating bi-observable in Theorem \ref{prop:cinc}.(i) is reminiscent of what happens in the discrete case of two Fourier conjugated mutually unbiased bases (MUBs); indeed, in the latter case, the optimal bi-observable is actually unique among all the bi-observables, not only the covariant ones \cite[Theor.~5]{BarGT16}.
Similar considerations obviously apply also to the case of scalar target observables. We leave a more deep investigation of equality \eqref{eq:ultima} to future work.

As a final consideration, one could be interested in finding error/disturbance bounds involving sequential measurements of position and momentum, rather than considering all their possible approximate joint measurements. As sequential measurements are a proper subset of the set of all the bi-observables, optimizing only over them should lead to bounds that are greater than $c_{\rm inc}$. This is the reason for which in \cite{BarGT16} an error/disturbance entropic bound, denoted by $c_{\rm ed}$ and dinstinct from $c_{\rm inc}$, was introduced.
However, it was also proved that the equality $c_{\rm inc}=c_{\rm ed}$ holds when one of the target observables is discrete and sharp. Now, in the present paper, only sharp target observables are involved; although the argument of \cite{BarGT16} can not be extended to the continuous setting, the optimal approximating joint observables we found in Theorems \ref{prop:uvincdeg}.(i) and \ref{prop:cinc}.(i) {\em actually are} sequential measurements. Indeed, the optimal bi-observable in Theorem \ref{prop:uvincdeg}.(i) is one of the POVMs described in Examples \ref{ex:Delta} and \ref{ex:pvm} (see \eqref{uvcinc4}); all these bi-observables have a (trivial) sequential implementation in terms of an unsharp measurement of $\Qo_{\vec{u}}$ followed by sharp $\Po_{\vec{v}}$. On the other hand, in the vector case, it was shown in \cite[Corollary 1]{CHT11} that all covariant phase-space observables can be obtained as a sequential measurement of an unsharp version of the position $\Qo$ followed by the sharp measurement of the momentum $\Po$. Therefore, $c_{\rm inc}=c_{\rm ed}$ also for target position and momentum observables, in both the scalar and vector case.

\end{document}